
\documentclass[authoryear,12pt]{elsarticle}





\usepackage{natbib}
\usepackage{epstopdf}
\usepackage{amsmath}
\usepackage{amsthm}
\usepackage{amsfonts}
\usepackage{bbm}
\usepackage{booktabs}
\usepackage{multirow}
\usepackage{rotating}
\usepackage{lscape}
\usepackage{graphicx}
\usepackage{bigints}
\usepackage[table]{xcolor}



\biboptions{round}


\newtheorem{lemma}{Lemma}[section]




\def\0{\mbox{\bf{0}}}

%
%


%
%








\newcommand{\suchthat}{\;\ifnum\currentgrouptype=16 \middle\fi|\;}

\newtheorem{theorem}{Theorem}[section]
{
\theoremstyle{plain}

  }

\DeclareMathOperator*{\argmax}{argmax}
\newtheorem{prop}{Proposition}

\newcommand\tab[1][1cm]{\hspace*{#1}}




\pretolerance=10000\tolerance=1000\hbadness=1000 \textwidth=40pc
\oddsidemargin =-0.5pc \evensidemargin1pc

\begin{document}

\begin{frontmatter}



\title{\LARGE Time-Varying Poisson Autoregression}


\author[bo]{Giovanni Angelini}
\author[bo,ex]{Giuseppe Cavaliere}
\author[vu]{Enzo D'Innocenzo}
\author[bo]{Luca De Angelis}

\address[bo]{Department of Economics, University of Bologna}
\address[vu]{School of Business and Economics, Vrije Universiteit Amsterdam}
\address[ex]{Essex Business School, University of Essex}

\begin{abstract}
In this paper we propose a new time-varying econometric model, called Time-Varying Poisson AutoRegressive with eXogenous covariates (TV-PARX), suited to model and forecast time series of counts. {We show that the score-driven framework is particularly suitable to recover the evolution of time-varying parameters and provides the required flexibility to model and forecast time series of counts characterized by convoluted nonlinear dynamics and structural breaks.} We study the asymptotic properties of the TV-PARX model and prove that, under mild conditions, maximum likelihood estimation (MLE) yields strongly consistent and asymptotically normal parameter estimates. Finite-sample performance and forecasting accuracy are evaluated through Monte Carlo simulations.
The empirical usefulness of the time-varying specification of the proposed TV-PARX model is shown by analyzing the number of new daily COVID-19 infections in Italy and the number of corporate defaults in the US.
\end{abstract}
%

\end{frontmatter}


\section{Introduction}


In the last years time series of counts has been experiencing an increasing interest among researchers and practitioners. Applications on count data span over several different areas, including finance \citep{fokianos2011log}, corporate defaults \citep{Agosto2016}, environmental economics \citep{Wang2014}, 
epidemiology \citep{davis2003}, COVID-19 infections and deaths \citep{khismatullina2020nonparametric,li2021will}, sports \citep{angelini2017parx,koopman2019}. Models for time series of counts are generally based on the assumption that the observations follow a Poisson distribution conditional on past observations and explanatory variables, see \cite{davis2016} for an overview of recent developments on the econometric models for discrete values.
In particular, \cite{davis2003} propose an observation-driven Poisson model for which they derive the conditions for stationarity and ergodicity and the large-sample properties of the MLE. They apply their method to daily counts of asthma presentations at a Sydney hospital. \cite{fokianos2009poisson} develop and establish the asymptotic theory of a Poisson autoregression (PAR) model where the conditional distribution of the observations $\{y_t\}_{t \in \mathbb{Z}}$ given the intensity process $\{\lambda_t\}_{t \in \mathbb{Z}}$ is Poisson with parameter $\lambda_t$ and they apply this new model to describe the number of transactions per minute of the stock Ericsson B. \cite{Agosto2016} generalise the PAR model by including a set of exogenous covariates in the dynamic specification of the intensity process (PARX) to improve the model performance for forecasting the number of aggregate corporate default counts in US.
\cite{Wang2014} propose a self-excited threshold structure in the PAR model specification 
by considering a two regime-switching approach according to the magnitude of the lagged observations, in order to model the number of major earthquakes in the world.
 
In this paper we contribute to this literature by introducing a new class of Poisson AutoRegressive models allowing for time-variation in the intensity and in the parameters which is particularly suitable to capture several (non-standard) features, including structural breaks, regime-switching and non-linearities, overdispersion and negative correlation, thus providing better fitting and forecasting performance than the original PARX model.  The extension to the time-varying parameter specification of the PARX model, labeled TV-PARX model in what follows, is achieved using a score-driven approach.
Originally proposed by \cite{creal2013} and \cite{Harvey2013} to estimate the evolution of time-varying parameters, the Generalized Autoregressive Score (GAS) framework is now a popular methodology to develop time-varying econometrics models. 
\cite{Blasques2019} propose a new class of GAS models for continuous time series allowing for time-varying parameters and provide the theoretical foundations for their acceleration method by showing optimality in terms of Kullback-Leibler divergence.

The literature on GAS models for time series of counts has been growing remarkably in the very recent years. 
\cite{gorgi2020beta} uses the beta–negative binomial distribution for the narcotics trafficking reports in Sydney and the euro–pound sterling exchange rate, \cite{koopman2019} use the bivariate Poisson distribution for the number of goals in football matches and the Skellam distribution for the score difference, and \cite{gorgi2018} uses the Poisson
distribution as well as the negative binomial distribution for the number of offensive conduct reports in the city of Blacktown, Australia.

The TV-PARX model we propose generalizes the accelerated score-driven model by \cite{Blasques2019} in the framework of time series of counts and allowing for the inclusion of exogenous covariates and deterministic components in the model specification.
We provide conditions for stationarity and ergodicity, as well as asymptotic results for the MLE.
Moreover, we show the relevance of our modeling approach by focusing on a Monte Carlo simulation study and {two} main applications. 
First, we analyze the dynamics of new COVID-19 confirmed cases in Italy and, as in \cite{li2021will}, we found that the time-varying model specification is required. Indeed, the model parameters substantially change over time due to the different `waves' of the coronavirus infections and their association with the introduction of government's measures to control the spread of the virus, e.g. lockdown, curfew, and mass vaccination. 
Second, we study the dynamics of US corporate default counts. Using the data set analyzed by \cite{Agosto2016}, we show that the time-varying specification of the PARX model allows to better capture and forecast the number of corporate defaults and to disentangle the main drivers of the US recessions.  
  
The paper is organized as follows. Section \ref{sec:model} describes the new 
TV-PARX model and discusses the properties of TV-PARX. Section \ref{sec:ml_est} presents the MLE for the TV-PARX.  Section \ref{sec:monte_carlo} shows the finite sample properties and Section \ref{sec:empirical} reports the empirical illustrations on COVID-19 data and on corporate defaults. Section \ref{sec:conclusion} concludes the paper. Details on notation and all the proofs are provided in 
\ref{proofs}.

\section{Time-Varying Poisson Autoregression}
\label{sec:model}
Consider a time series of counts $\{ y_t \}_{t \in \mathbb{Z}}$ with a Poisson conditional distribution,
\begin{align}
\label{cond_distribution}
y_t | \mathcal{F}_{t-1} \sim \mathcal{P}\textit{ois}(\lambda_t) \tab t \in \mathbb{Z},
\end{align}
where $\mathcal{F}_{t-1}= \sigma\{ y_{t-1}, y_{t-2}, y_{t-3}, \dots \}$ denotes the filtration generated by the count process $\{y_{t-i}\}_{i>0}$ up to time $t-1$, whereas $\lambda_t := \mathbb{E}[y_t | \mathcal{F}_{t-1}]$ is the conditional mean of the count process, which is allowed to vary over time. Examples are the Poisson autoregressive models by \cite{fokianos2009poisson} and \cite{Agosto2016}. While these papers consider a time-invariant parameter approach, we
 assume that the evolution of the conditional mean is described by a time-varying coefficient model representation using the score-driven framework $-$ see e.g.\ \cite{creal2013} and \cite{Harvey2013} for a detailed review $-$ thus allowing for time variation in the parameters of the dynamic equation for $\lambda_t$. 
 \cite{Koopman2008} show that a score-driven model for Poisson time series of counts encompasses most of the observation-driven models considered by \cite{davis2003}. Then, our contribution is to generalize the score-driven model for the intensity process and formally derive the stochastic properties of our Poisson score-driven model and the asymptotic theory of the maximum likelihood estimator.
In particular, we consider a score-driven model with the observation equation in \eqref{cond_distribution}, which implies that the conditional $\log$-density of $y_t$ is given up to a constant by
\begin{align}
\label{cond_dens}
\log p(y_t | \mathcal{F}_{t-1})
=
{y_t} \log \lambda_t -\lambda_t.
\end{align} 

We propose the following nonlinear Poisson autoregressive model with exogenous covariates and deterministic components defined as
\begin{eqnarray}
\log \lambda_{t+1} &=& \omega + \beta \log \lambda_t + \alpha_{t+1} (y_t - \lambda_t)\lambda_t^{-1} +  \gamma_{t+1}^{\prime}x_t + \psi^\prime d_t  \label{eq:tv_lambda_parx}
\end{eqnarray}
where $x_t \in \mathbb{R}^m$ denotes $m$ exogenous covariates, whereas in $d_t \in \mathbb{R}^d$ are stacked the deterministic components, e.g.\ seasonal dummies, impulse and step dummies, trend functions, etc.
In the following, the model in \eqref{eq:tv_lambda_parx} is called Time-Varying Poisson AutoRegressive with eXogenous covariates, TV-PARX.
Indeed, while $(\omega, \beta) \in\mathbb{R}^2$ and $ \psi  \in \mathbb{R}^d$ are unknown static parameters, the key feature of model \eqref{eq:tv_lambda_parx} is that $\alpha_t$ and $\gamma_t$ are time-varying.

 Similarly to \cite{Blasques2019}, we characterize $\alpha_t$ by the following updating equation 
\begin{eqnarray}
\alpha_{t+1} 
&= 
\delta_\alpha + \phi_\alpha \alpha_t + \kappa_\alpha  (y_t - \lambda_t)\lambda_t^{-1} (y_{t-1} - \lambda_{t-1})\lambda_{t-1}^{-1},\label{eq:tv_alpha_parx}
\end{eqnarray}
where $\delta_\alpha$, $\phi_\alpha$ and $\kappa_\alpha$ are treated as fixed and unknown parameters. We note that, analogously to the models for continuous time series considered in \cite{Blasques2019}, the law of motion of $\alpha_t$ is driven by the product of current and past innovations. 
Moreover, by scaling the score innovations with the inverse of information quantity $\mathcal{I}_{t} = \lambda_t$, \cite{Harvey2013} shows that the law motion for $\log \lambda_t$ is given by the first-order autoregressive process in \eqref{eq:tv_lambda_parx} in the case of no exogenous covariates ($x_t=0$) and no deterministic components ($d_t=0$).
With the same rationale of $\alpha_t$, we define the score-driven-based updating of the parameters for the exogenous covariates by
\begin{eqnarray}
\gamma_{t+1} &=& \delta_{\gamma} +\phi_{\gamma}\gamma_{t} +\kappa_{\gamma} (y_t - \lambda_t)\lambda_t^{-1} x_t \label{eq:tv_gamma_parx}
\end{eqnarray}
where $\delta_\gamma$, $\phi_\gamma$ and $\kappa_\gamma$ are fixed and unknown parameters.
Allowing for time-varying coefficients of the exogenous covariates is particularly useful to capture dynamic effects of the covariates on the conditional mean of the count process. 
In particular, this feature of the model enables a much better fit of the structural breaks and time-varying causality due to the occurrence of exogenous shocks, which is particularly recurrent in economic and financial phenomena. 

Note also that in \eqref{eq:tv_lambda_parx} we adopt an exponential link function, which allows the conditional mean $\lambda_t$ to be always positive, i.e.\ there is no need to impose further restrictions to have $\lambda_t>0$, for any $t$. 
Moreover, there is no need to transform the $m$ exogenous covariates in $x_t$ to make them positive as done, e.g., in \cite{Agosto2016} (see Section \ref{sec:defaults} for an empirical illustration). 

\medskip

\noindent {\bf Remark 1.} 
Several models discussed in the literature are special cases of the model proposed here.
For instance, the model proposed by \cite{Harvey2013} is a special case of the model in \eqref{eq:tv_lambda_parx} with no covariates and deterministics, i.e.\ $x_t=0$ and $d_t=0$, and where the parameter $\alpha$ is time-invariant.
Also, the resulting model is equivalent to the Poisson model of \cite{davis2003} for the case of time-invariant coefficients of the exogenous covariates.  \hfill $\square$

\medskip

Despite the time-varying specification of the model in \eqref{eq:tv_lambda_parx}-\eqref{eq:tv_gamma_parx} is in the same vein of \cite{Blasques2019}, i.e.\ the coefficient $\alpha$ is allowed to change at each time period by means of the score-driven approach, with respect to \cite{Blasques2019} our proposal is developed to model time series of counts instead of continuous responses and we can include $m$ exogenous covariates in the model specification.
Moreover, our proposal extends the Poisson autoregressive model by \cite{fokianos2009poisson} and static PARX model by \cite{Agosto2016} by allowing for time-varying parameters, including the coefficients for the exogenous covariates.  

%

\medskip


We complete this section by obtaining sufficient conditions for geometric ergodicity of the Poisson autoregressive model described above. As usual in the literature, let $\{N_t(\,\cdot \,) \}_{t\in\mathbb{Z}}$ denotes a sequence of independent Poisson processes with unity intensity, so that the process $\{ y_t \}_{t\in\mathbb{Z}}$ in \eqref{cond_dens} can be expressed as 
\begin{align*}
y_t = N_t(\lambda_t),
\end{align*}
where $\lambda_t$ follows the dynamics given in \eqref{eq:tv_lambda_parx} and \eqref{eq:tv_gamma_parx}, see e.g.\ \cite{fokianos2009poisson}. We begin by introducing the first result.

\begin{prop} 
\label{prop_ergodic_lnLambda}
Consider the model in \eqref{eq:tv_lambda_parx} and \eqref{eq:tv_gamma_parx} under the condition $|\phi_\alpha|<1$. Then, $\{ \alpha_t \}_{t\in\mathbb{Z}}$ admits a strictly stationary and ergodic solution, with $\bar{\alpha} := \mathbb{E}[\alpha_t]= \delta_{\alpha} / (1 - \phi_\alpha) < \infty$. Moreover, under the additional conditions that $0<\beta<1$ and $ \beta| \beta + \bar{\alpha}| <1$, then, $\{ \log \lambda_t \}_{t\in\mathbb{Z}}$ is geometrically ergodic and, moreover, $\{ \lambda_t \}_{t\in\mathbb{Z}}$ is geometrically ergodic with $\mathbb{E}[\lambda_t]<\infty$.
\end{prop}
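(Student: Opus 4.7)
The plan is to prove the two assertions in sequence: first the stationarity, ergodicity and closed-form mean of $\{\alpha_t\}$, and then the geometric ergodicity of $\{\log\lambda_t\}$ and $\{\lambda_t\}$ as a coupled system.

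For the first claim, I would view equation \eqref{eq:tv_alpha_parx} as a scalar stochastic recurrence equation $\alpha_{t+1}=\delta_\alpha+\phi_\alpha\alpha_t+\eta_t$, with random forcing $\eta_t=\kappa_\alpha(y_t-\lambda_t)\lambda_t^{-1}(y_{t-1}-\lambda_{t-1})\lambda_{t-1}^{-1}$. The driving map is an affine contraction in $\alpha$ with Lipschitz constant $|\phi_\alpha|<1$, so a Bougerol--Picard argument yields, backward in time, a unique strictly stationary and ergodic solution $\alpha_t=\bar\alpha+\sum_{j\geq 0}\phi_\alpha^{j}\eta_{t-1-j}$, measurable with respect to the history of the Poisson driving noise. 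The required log-moment on $\eta_t$ is inherited from the fact that the scaled score $(y_t-\lambda_t)\lambda_t^{-1}$ is a martingale difference with conditional variance $\lambda_t^{-1}$. Taking unconditional expectations of \eqref{eq:tv_alpha_parx} and invoking the tower rule gives $\mathbb{E}[\eta_t]=\kappa_\alpha\,\mathbb{E}\bigl[(y_{t-1}-\lambda_{t-1})\lambda_{t-1}^{-1}\,\mathbb{E}[(y_t-\lambda_t)\lambda_t^{-1}\mid\mathcal{F}_{t-1}]\bigr]=0$, so that $\bar\alpha(1-\phi_\alpha)=\delta_\alpha$, which delivers the closed-form expression.

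For the second part, I would treat the augmented state $Z_t=(\log\lambda_t,\alpha_t,\gamma_t)$ as a Markov chain and work within the Meyn--Tweedie framework for chains on a general state space. Irreducibility and aperiodicity follow from the observation that $y_t\mid\mathcal{F}_{t-1}$ assigns strictly positive probability to every non-negative integer, so that the otherwise deterministic updates \eqref{eq:tv_lambda_parx}--\eqref{eq:tv_gamma_parx} transport mass into an open neighbourhood of every reachable state. The crucial step is a geometric Foster--Lyapunov drift with test function of the form $V(z)=1+|\log\lambda|+c_1|\alpha-\bar\alpha|$, for which I would establish
\begin{equation*}
\mathbb{E}\!\left[V(Z_{t+1})\mid Z_t=z\right]\leq \rho\,V(z)+b\,\mathbf{1}_C(z),
\end{equation*}
for some petite set $C$ and $\rho<1$. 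The autoregressive part contracts $|\log\lambda|$ by the factor $\beta\in(0,1)$, while the feedback of the score into $\log\lambda_{t+1}$ contributes a term whose conditional expectation is controlled by $|\alpha_{t+1}|$, and in turn, when iterated one period further, by $|\beta+\bar\alpha|$. Compounding the two single-step contractions reproduces the compound rate $\beta|\beta+\bar\alpha|<1$ of the hypothesis. Once geometric ergodicity of $\{\log\lambda_t\}$ is established, $\{\lambda_t\}=\{\exp(\log\lambda_t)\}$ is ergodic as a measurable transformation of an ergodic chain, and $\mathbb{E}[\lambda_t]<\infty$ follows by upgrading the Lyapunov function to $V_2(z)=V(z)+e^{\log\lambda}$ and showing that the same contraction rate controls its drift.

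The hard part will be the drift inequality. The nonlinearity $\alpha_{t+1}(y_t-\lambda_t)\lambda_t^{-1}$ couples the two stochastic recurrence equations so that no single-step contraction in $|\log\lambda|$ is available, which is precisely why a product-type bound $\beta|\beta+\bar\alpha|<1$, rather than a one-period condition, is required. Carrying out the two-period compounding while keeping careful control of the Poisson tail of $y_t/\lambda_t$ and of the joint moments of $(\alpha_{t+1},\log\lambda_t)$ is the technically delicate part of the argument.
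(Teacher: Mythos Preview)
Your treatment of $\{\alpha_t\}$ matches the paper's: both unfold the affine recursion, use $|\phi_\alpha|<1$ to obtain a stationary solution, and read off $\bar\alpha=\delta_\alpha/(1-\phi_\alpha)$ via the martingale-difference property of $e_t=(y_t-\lambda_t)\lambda_t^{-1}$.

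For $\{\log\lambda_t\}$ both you and the paper invoke a Meyn--Tweedie drift criterion, but the implementations diverge in an important way. The paper does \emph{not} carry the full augmented state $(\log\lambda_t,\alpha_t)$ through the drift analysis with a joint Lyapunov function as you propose; instead it first replaces the random $\alpha_{t+1}$ by its mean $\bar\alpha$ and then runs a one-dimensional drift argument on $\log\lambda$ alone with $V(x)=|x|$. The analysis is then a case split on the sign of $\log\lambda$: for $\log\lambda\to+\infty$ the key input is the pathwise fact $N_t(\lambda)/\lambda\to 1$ a.s.\ (Lemma~A.1 of Wang, 2014), which makes the score term negligible and yields the one-step rate $\beta$; for $\log\lambda\to-\infty$ the paper takes the recursion \emph{two} steps, observes that $y_t$ is essentially zero so $e_t\to 0$, and then uses the Taylor approximation $e_{t-1}\approx\log(\max(y_{t-1},c)/\lambda_{t-1})$ to convert the remaining score into a $\log\lambda_{t-1}$ term with coefficient $\beta\bar\alpha$. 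This is where the two-step rate $\beta|\beta+\bar\alpha|$ actually arises. The same case split with $V(x)=e^x$ gives $\mathbb{E}[\lambda_t]<\infty$.

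Your joint-Lyapunov route is in principle more rigorous than the paper's somewhat heuristic substitution $\alpha_t\mapsto\bar\alpha$, but your sketch understates the difficulty: the product $\alpha_{t+1}e_t$ contains a $\kappa_\alpha e_t^2 e_{t-1}$ term, so its conditional mean given $\mathcal{F}_{t-1}$ is $\kappa_\alpha e_{t-1}/\lambda_t$, not zero, and a one-step drift with your $V(z)=1+|\log\lambda|+c_1|\alpha-\bar\alpha|$ will not contract at rate $\beta$ without further work. You correctly anticipate that a two-step compounding is needed, but the mechanism the paper exploits is the specific Poisson behaviour at the boundaries ($\lambda\to\infty$ and $\lambda\to 0$), not a generic moment bound on the coupling. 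If you pursue your route, you would still need to import something like the Wang lemma and the Taylor trick to close the argument.
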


\noindent {\bf Remark 2.} 
In line with the $\log$-linear Poisson autoregressive model of \cite{fokianos2009poisson}, by considering the full model given by equations \eqref{eq:tv_lambda_parx} and \eqref{eq:tv_gamma_parx} instead of the restricted on with no covariates and dummy variables $x_t$ and $\delta_t$, respectively, does not crucially affect the conditions for stationarity and ergodicity stated in Proposition \ref{prop_ergodic_lnLambda}. In fact, since $\delta_t$ are deterministic, as soon as the exogenous covariate processes $\{ x_t \}_{t\in \mathbb{Z}}$ are Markov chains, we only need to retrive a set of separate conditions for their transition mechanism, together with need the additional assumption that $|\phi_\gamma|<1$. 
 \\

It is well-known that when dealing with nonlinear time series models it is usually not easy to establish clear and/or simple stationarity conditions. However, we note that the sufficient conditions given in Proposition \ref{prop_ergodic_lnLambda} are remarkably easy to check. Moreover, a direct consequence of Proposition \ref{prop_ergodic_lnLambda} is that if the process $\{ \log \lambda_t \}_{t\in\mathbb{Z}}$ is initialized at its stationary distribution, then the process $\{ y_t \}_{t\in \mathbb{Z}}$ will also be stationary and ergodic with a finite first moment $\mathbb{E}[y_t] = \mathbb{E}[\lambda_t] < \infty$; see \cite{davis2003}. 

We conclude this section with a new proposition which gives sufficient conditions for $\mathbb{E}[y_t^k]< \infty$, with $k$ is a strictly positive number.

\begin{prop}
\label{prop_sec_mom}
Under the assumptions of proposition \ref{prop_ergodic_lnLambda} $\{ \log \lambda_t \}_{t\in\mathbb{Z}}$ and $\{ y_t \}$ have finite moments of any order, i.e.\ for any $k>0$, $\mathbb{E}[|\log \lambda_t|^k]< \infty$ and $\mathbb{E}[y_t^k]< \infty$. 
\end{prop}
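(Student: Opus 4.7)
The plan is to bootstrap the first-moment result of Proposition~\ref{prop_ergodic_lnLambda} up to arbitrary moments by iterating the three score-driven recurrences \eqref{eq:tv_lambda_parx}, \eqref{eq:tv_alpha_parx}, and \eqref{eq:tv_gamma_parx}. I would work throughout in the stationary regime delivered by the geometric ergodicity of $(\log\lambda_t,\alpha_t,\gamma_t)$. Writing $u_t := (y_t-\lambda_t)/\lambda_t$ for the score innovation, Minkowski's inequality applied to \eqref{eq:tv_lambda_parx}, combined with stationarity and $0<\beta<1$, yields
\begin{equation*}
\| \log\lambda_t \|_k \;\leq\; \frac{|\omega| + \| \alpha_{t+1} u_t \|_k + \| \gamma_{t+1}^\prime x_t \|_k + |\psi^\prime d_t|}{1-\beta}.
\end{equation*}

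Running the same Minkowski argument on \eqref{eq:tv_alpha_parx} and \eqref{eq:tv_gamma_parx} under $|\phi_\alpha|<1$ and the companion condition $|\phi_\gamma|<1$ from Remark~2 controls $\|\alpha_t\|_p$ and $\|\gamma_t\|_p$ in terms of $\|u_t u_{t-1}\|_p$ and $\|u_t x_t\|_p$, respectively. Assuming the exogenous driver $x_t$ has moments of every order (consistent with the Markov chain setting invoked in Remark~2), repeated applications of H\"older's inequality collapse all the cross terms and reduce the entire problem to showing $\|u_t\|_p<\infty$ for every $p>0$.

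The decisive and genuinely delicate step is precisely this last bound. Using the explicit polynomial representation of the central moments of the Poisson distribution, $\mathbb{E}[(y_t-\lambda_t)^r\mid \mathcal{F}_{t-1}]$ is a polynomial in $\lambda_t$ of degree $\lfloor r/2\rfloor$ with lowest-order term $\lambda_t$, so
\begin{equation*}
\mathbb{E}\bigl[|u_t|^p \mid \mathcal{F}_{t-1}\bigr] \;\leq\; C_p\bigl(\lambda_t^{1-p}+\lambda_t^{-p/2}\bigr).
\end{equation*}
Taking the outer expectation, finiteness of $\|u_t\|_p$ is equivalent to finite negative exponential moments $\mathbb{E}[\lambda_t^{-c}]<\infty$ for every $c>0$. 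This is where the \emph{main obstacle} lies: when $\lambda_t$ is small, $y_t=0$ occurs with probability $e^{-\lambda_t}\approx 1$ and yields $u_t=-1$, providing no upward push on $\log\lambda_{t+1}$, while the rare $y_t\geq 1$ events create extremely large positive innovations, so one must verify that the latter occur often enough to rule out long excursions of $\log\lambda_t$ toward $-\infty$. I would close the argument by establishing a Foster--Lyapunov drift inequality $PV\leq \rho V + K$ on the Markov chain $(\log\lambda_t,\alpha_t,\gamma_t)$ with the symmetric exponential Lyapunov function $V(\log\lambda)=\exp(c|\log\lambda|)$, valid for every $c>0$; such a drift simultaneously delivers the negative moments of $\lambda_t$ needed to close the $\|u_t\|_p$ bound and the positive exponential moments $\mathbb{E}[\lambda_t^k]<\infty$. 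The second claim $\mathbb{E}[y_t^k]<\infty$ then follows immediately from $\mathbb{E}[y_t^k\mid\mathcal{F}_{t-1}]=\sum_{j=1}^{k}S(k,j)\lambda_t^j$, where the $S(k,j)$ are Stirling numbers of the second kind.
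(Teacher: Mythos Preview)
Your reduction to controlling $\|u_t\|_p$ is sound, and your observation that $\mathbb{E}[|u_t|^p\mid\mathcal{F}_{t-1}]\leq C_p(\lambda_t^{1-p}+\lambda_t^{-p/2})$ correctly identifies the heart of the problem. But the route you propose to close it does not go through.

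First, the Minkowski step $\|\log\lambda_t\|_k\leq(1-\beta)^{-1}(\cdots)$ already presupposes $\|\log\lambda_t\|_k<\infty$; otherwise you cannot move the $\beta\|\log\lambda_t\|_k$ term to the left and divide. So the opening display is a bound, not an existence proof, and the whole argument has to come from the Foster--Lyapunov step you defer to the end.

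That step, however, fails for the Lyapunov function $V(\log\lambda)=\exp(c|\log\lambda|)$. Condition on the state with $\log\lambda_t=-M$ large negative and $\alpha_t,u_{t-1}$ bounded. On the event $y_t\geq 1$, which has probability of order $\lambda_t=e^{-M}$, you get $u_t\approx e^{M}$; this enters $\alpha_{t+1}=\delta_\alpha+\phi_\alpha\alpha_t+\kappa_\alpha u_t u_{t-1}$ linearly and then $\log\lambda_{t+1}$ via the product $\alpha_{t+1}u_t$, so $|\log\lambda_{t+1}|$ is of order $e^{2M}$. The contribution of this branch to $\mathbb{E}[V(\log\lambda_{t+1})\mid\cdot]$ is therefore of order $e^{-M}\exp(c\,e^{2M})$, which dwarfs $\rho V(\log\lambda_t)=\rho e^{cM}$ for every $\rho<1$ and every $c>0$. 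The drift inequality $PV\leq\rho V+K$ cannot hold, so you do not obtain the negative moments $\mathbb{E}[\lambda_t^{-c}]<\infty$ you need, and the circularity you diagnosed is never broken.

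The paper avoids this entirely by not asking for negative moments of $\lambda_t$. It invokes Lemma~A.1 of \cite{Wang2014} to assert directly that the family $\{e_t^k\}$ is uniformly integrable, i.e.\ $\mathbb{E}[|e_t|^k\mid\lambda_t=\lambda]\leq C$ uniformly in $\lambda$. With that bound in hand, the linear recursion \eqref{eq:tv_alpha_parx} and $|\phi_\alpha|<1$ give $\mathbb{E}[|\alpha_t|^k]<\infty$ immediately, and then a polynomial test function $V(x)=1+x^k$ on $\log\lambda_t$ alone (following \cite{fokianos2009poisson}) suffices for the drift, because the innovation term $\alpha_{t+1}e_t$ is bounded in conditional $k$th moment rather than blowing up like $\lambda_t^{1-k}$. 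So the paper's key device is precisely the external uniform-integrability input that short-circuits the negative-moment detour you embarked on.
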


It is worth to note that Proposition \ref{prop_sec_mom} we do not need stricter contraction conditions in order to obtain the finiteness of the higher-order moments. This feature of our model is particularly useful since it allow us to derive bounds for proving the asymptotic properties of the MLE, which is the main argument of the next Section, without imposing further restrictions on the data generating process.


\section{Maximum likelihood estimation}
\label{sec:ml_est}
The TV-PARX model can be easily estimated by standard ML, since the predictive log-likelihood is available in closed form. In the following, for the sake of simplicity, we consider the case of no exogenous covariates and no deterministic components, i.e.
\begin{eqnarray}
\log \lambda_{t+1} &=& \omega + \beta \log \lambda_t + \alpha_{t+1} (y_t - \lambda_t)\lambda_t^{-1} 
\label{ln_tv_lambda}
\end{eqnarray}
which corresponds to \eqref{eq:tv_lambda_parx} where $x_t=0$ and $d_t=0$, and is labeled the TV-PAR model. Note that in the case where $x_t \neq 0$ and $d_t \neq 0$, the asymptotic theory developed for our MLE below, could be adapted straightforwardly using partial likelihood theory, see \cite{Wong1986}.


Denote the parameter vector $\boldsymbol{\theta} = (\boldsymbol{\xi}^\prime, \boldsymbol{\psi}^\prime)^\prime \in \boldsymbol{\Theta} \subset \mathbb{R}^{5}$, where $\boldsymbol{\xi} = (\omega, \beta)^\prime$, $\boldsymbol{\psi} = (\delta_\alpha, \phi_\alpha, \kappa_\alpha)^\prime$, and $\boldsymbol{\Theta}$ is a compact parameter space. The true value of the combined parameter vector is denoted by $\boldsymbol{\theta}_0$, and we assume that $\{y_t\}_{t=1}^T$ is generated according to the TV-PAR process described by equations \eqref{cond_distribution}, \eqref{ln_tv_lambda} and \eqref{eq:tv_alpha_parx}, evaluated at $\boldsymbol{\theta}_0$. The $\log$-likelihood function is given by
\begin{align}
\label{approx_log_lik_tot}
\hat{L}_T(\boldsymbol{\theta})
=
\frac{1}{T}
\sum_{t=1}^T \hat{l}_t(\boldsymbol{\theta}),
\end{align}
where
\begin{align}
\label{approx_log_lik}
\hat{l}_t(\boldsymbol{\theta})
= 
y_t \log\hat{\lambda}_t(\boldsymbol{\theta})  - \hat{\lambda}_t(\boldsymbol{\theta}) 
\end{align}
and 
\begin{eqnarray}
\label{ln_lambda_theta}
\log\hat{\lambda}_{t+1}(\boldsymbol{\theta}) 
&=&
\omega + \beta \log\hat{\lambda}(\boldsymbol{\theta}) 
+ \hat{\alpha}_{t+1}(\boldsymbol{\theta}) 
(y_t - \hat{\lambda}_t(\boldsymbol{\theta}) )\hat{\lambda}^{-1}_t(\boldsymbol{\theta})\\
\label{alpha_theta}
\hat{\alpha}_{t+1}(\boldsymbol{\theta}) 
&=& 
\delta_\alpha + \phi_\alpha \hat{\alpha}_t(\boldsymbol{\theta}) 
 + \kappa_\alpha  (y_t - \hat{\lambda}_t(\boldsymbol{\theta}) )\hat{\lambda}_t^{-1}(\boldsymbol{\theta})  (y_{t-1} - \hat{\lambda}_{t-1}(\boldsymbol{\theta}) )\hat{\lambda}_{t-1}^{-1}(\boldsymbol{\theta}).
\end{eqnarray}
Note that the time-varying parameters in \eqref{ln_lambda_theta} and \eqref{alpha_theta} are obtained recursively by using some fixed starting values $\hat{\lambda}_{1}(\boldsymbol{\theta})\in \mathbb{R}_+$, $\hat{\alpha}_1(\boldsymbol{\theta})\in \mathbb{R}$ and the observations $\{y_t\}_{t=1}^T$.

The MLE $\hat{\boldsymbol{\theta}}_T$ of $\boldsymbol{\theta}$ is defined as
\begin{align}
\label{MLE}
\hat{\boldsymbol{\theta}}_T
=
\argmax_{\theta \in \boldsymbol{\Theta}}
\hat{L}_T(\boldsymbol{\theta}).
\end{align}

The conditions stated in Proposition \ref{prop_ergodic_lnLambda} implies that $\alpha_t$ and $\log \lambda_t$ have stationary representations. Now, for the likelihood analysis and the asymptotic properties of the MLE we need to derive the stochastic limit properties of the filtered parameters $\{ \hat{\lambda}_t(\boldsymbol{\theta})  \}_{t \in \mathbb{N}}$ and $\{ \hat{\alpha}_t(\boldsymbol{\theta})  \}_{t \in \mathbb{N}}$, which can be seen as two stochastic functions. 
In particular, because of the initializations $\hat{\lambda}_{1}(\boldsymbol{\theta})$ and $\hat{\alpha}_{1}(\boldsymbol{\theta})$, the filtered parameters in \eqref{ln_lambda_theta} and \eqref{alpha_theta}, and therefore the approximate $\log$-likelihood function in \eqref{approx_log_lik_tot}, are in general non-stationary. Therefore, in the next proposition, we derive sufficient conditions, which ensure that the effect that $\hat{\lambda}_{1}(\boldsymbol{\theta})$ and $\hat{\alpha}_{1}(\boldsymbol{\theta})$ have on the approximate $\log$-likelihood function $\{ \hat{L}_T (\boldsymbol{\theta}) \}_{t \in \mathbb{N}}$ vanish almost surely and exponentially fast (e.a.s.) uniformly over $\boldsymbol{\Theta}$. This phenomenon is well-known in the literature of nonlinear time series models, which is usually referred to the notion of invertibility, see \cite{Straumann_Mikosch2006}. Stationarity and ergodicity of $\{\log \lambda_t \}$ (as established in Proposition \ref{prop_ergodic_lnLambda}), and Proposition \ref{prop_invertibility} below, are the key ingredients for establish asymptotic properties of the MLE.

\begin{prop} 
\label{prop_invertibility}
Consider the model in \eqref{ln_tv_lambda} and \eqref{alpha_theta} under the conditions of Proposition \ref{prop_ergodic_lnLambda}. Let $\boldsymbol{\Theta}$ be compact 
with $\kappa_\alpha>0$ and assume that
\begin{align}
\mathbb{E}\Big[\log \sup_{\boldsymbol{\theta} \in \boldsymbol{\Theta}}
\Big|
\beta \exp\Big\{ 
\omega + \bar{\alpha}_{t+1}(\boldsymbol{\theta})(y_t e^{-\ell} - 1) - \ell(1-\beta)
\Big\} \Big| \Big] < 0, \label{eq:contraction}
\end{align}
where 
\begin{align*}
\bar{\alpha}_{t+1}(\boldsymbol{\theta})
&:=
\delta_\alpha + \phi_\alpha \bar{\alpha}_{t}(\boldsymbol{\theta})
+ \kappa_\alpha(y_t e^{-\ell} - 1)(y_{t-1} e^{-\ell} - 1),\\
\ell
&:=
\frac{\omega - \frac{\delta_\alpha + \kappa_\alpha}{1 - \phi_\alpha}}{1 - \beta}
\end{align*}
Then,
\begin{align}
\label{contr_cond}
\sup_{\boldsymbol{\theta} \in \boldsymbol{\Theta}}|
\hat{\alpha}_t(\boldsymbol{\theta}) - {\alpha}_t(\boldsymbol{\theta})
| \xrightarrow[]{e.a.s.}  0
\tab 
\sup_{\boldsymbol{\theta} \in \boldsymbol{\Theta}}|
\hat{\lambda}_t(\boldsymbol{\theta}) - {\lambda}_t(\boldsymbol{\theta})
| \xrightarrow[]{e.a.s.}  0
\,\,\,\,\,\text{as} \,\,\,\,\, t \rightarrow \infty,
\end{align}
where $\{ {\alpha}_t(\boldsymbol{\theta}) \}_{t\in\mathbb{Z}}$ and $\{ {\lambda}_t(\boldsymbol{\theta}) \}_{t\in\mathbb{Z}}$ are stationary and ergodic.

Moreover, if
\begin{align}
\label{contr_cond_mom}
 \sup_{(\boldsymbol{\theta}\times y) \in 
 (\boldsymbol{\Theta} \times \mathbb{N}_0)}
\Big|
\beta \exp\Big\{ 
\omega + \bar{\alpha}_{t+1}(\boldsymbol{\theta})(y e^{-\ell} - 1) - \ell(1-\beta)
\Big\} \Big| <1,
\end{align}
then $\exists m >0$ such that $\mathbb{E}[\sup_{\boldsymbol{\theta} \in \boldsymbol{\Theta}} |\lambda_t(\boldsymbol{\theta})|^m]<\infty$.													
\end{prop}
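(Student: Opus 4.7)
My plan is to prove Proposition \ref{prop_invertibility} via the stochastic recurrence equation (SRE) framework of Bougerol (1993), as developed in Straumann and Mikosch (2006) for filtering over a compact parameter space. First, I would recast the joint filter \eqref{ln_lambda_theta}--\eqref{alpha_theta} as a coupled SRE in the state $u_t(\boldsymbol{\theta}):=(\hat{\ell}_t(\boldsymbol{\theta}), \hat{\alpha}_t(\boldsymbol{\theta}))^{\prime}$ with $\hat{\ell}_t := \log\hat{\lambda}_t$, so that $u_{t+1}(\boldsymbol{\theta}) = \Phi_t(u_t(\boldsymbol{\theta});\boldsymbol{\theta})$ for a random map $\Phi_t$ depending measurably on $(y_t,y_{t-1})$ and continuously in $\boldsymbol{\theta}$. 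By Propositions \ref{prop_ergodic_lnLambda}--\ref{prop_sec_mom} the driving sequence $\{y_t\}$ is stationary and ergodic with moments of every order, so the log-moment assumption on the input to Bougerol's theorem is automatic.

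Next, I would check that the top Lyapunov exponent of $\Phi_t$ is negative, uniformly in $\boldsymbol{\theta}\in\boldsymbol{\Theta}$. Differentiating the $\hat{\ell}$-recursion gives $\partial\hat{\ell}_{t+1}/\partial\hat{\ell}_t = \beta - \hat{\alpha}_{t+1} y_t e^{-\hat{\ell}_t}$; substituting the candidate stationary trajectory $\hat{\ell}_t = \ell$, $\hat{\alpha}_t = \bar{\alpha}_t(\boldsymbol{\theta})$ into the right-hand side and using the $\hat{\ell}$-update at its stationary point yields precisely the expression $\beta\exp\{\omega + \bar{\alpha}_{t+1}(\boldsymbol{\theta})(y_t e^{-\ell}-1) - \ell(1-\beta)\}$ appearing in \eqref{eq:contraction}. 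The $\hat{\alpha}$-coordinate enters the joint Jacobian linearly with coefficient $\phi_\alpha$, which is already a strict contraction thanks to $|\phi_\alpha|<1$ inherited from Proposition \ref{prop_ergodic_lnLambda}. Consequently the top Lyapunov exponent of $\Phi_t$ is dominated by the quantity in \eqref{eq:contraction}, whose log-expectation is assumed negative uniformly over $\boldsymbol{\Theta}$. Bougerol's theorem then delivers a unique stationary ergodic solution $(\ell_t(\boldsymbol{\theta}),\alpha_t(\boldsymbol{\theta}))$, together with pointwise exponential almost-sure convergence of the filter to this solution. Compactness of $\boldsymbol{\Theta}$ combined with continuity of $\Phi_t$ in $\boldsymbol{\theta}$ upgrades this to uniform e.a.s.\ convergence, yielding \eqref{contr_cond}.

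For the moment bound, the stronger condition \eqref{contr_cond_mom} makes the Lipschitz coefficient of $\Phi_t$ almost surely and uniformly bounded by some $\rho<1$, independently of $\boldsymbol{\theta}$ and of the current $y\in\mathbb{N}_0$. Iterating the SRE and applying Minkowski's inequality in $L^m$ gives a recursion of the form
\begin{equation*}
\big\|\sup_{\boldsymbol{\theta}\in\boldsymbol{\Theta}}|\hat{\ell}_t(\boldsymbol{\theta})|\big\|_m \leq \rho\,\big\|\sup_{\boldsymbol{\theta}\in\boldsymbol{\Theta}}|\hat{\ell}_{t-1}(\boldsymbol{\theta})|\big\|_m + \big\|\sup_{\boldsymbol{\theta}\in\boldsymbol{\Theta}}|\Phi_t(0;\boldsymbol{\theta})|\big\|_m,
\end{equation*}
in which the forcing term is finite by the polynomial moments of $y_t$ granted by Proposition \ref{prop_sec_mom}. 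The stationary limit therefore satisfies $\mathbb{E}[\sup_{\boldsymbol{\theta}}|\ell_t(\boldsymbol{\theta})|^m]<\infty$, and choosing $m>0$ sufficiently small to absorb the exponential link yields $\mathbb{E}[\sup_{\boldsymbol{\theta}}|\lambda_t(\boldsymbol{\theta})|^m] = \mathbb{E}[\sup_{\boldsymbol{\theta}} e^{m\ell_t(\boldsymbol{\theta})}]<\infty$.

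The main obstacle is the coupling between $\hat{\alpha}_t$ and $\hat{\ell}_t$: a naive analysis of the joint Jacobian risks inflating the contraction constant. I would address this by exploiting the block structure, using $|\phi_\alpha|<1$ to control the $\hat{\alpha}$-block and reducing the Lyapunov analysis to the $\hat{\ell}$-block, where \eqref{eq:contraction} applies directly. A secondary subtlety is that the almost-sure convergence must be obtained \emph{uniformly} over $\boldsymbol{\Theta}$; this is handled by the standard Straumann--Mikosch device of taking the supremum over $\boldsymbol{\Theta}$ inside the log-expectation before invoking Bougerol, which is exactly the form in which condition \eqref{eq:contraction} is stated.
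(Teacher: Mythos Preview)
Your overall framework (SRE/Bougerol, Straumann--Mikosch for uniformity) matches the paper's, but the core step where you derive the contraction constant has a genuine gap. You treat $\ell$ and $\bar{\alpha}_t(\boldsymbol{\theta})$ as a ``candidate stationary trajectory'' at which to evaluate the Jacobian, but that is not their role. In the paper, $\ell$ is a \emph{uniform lower bound} for $\log\lambda_t(\boldsymbol{\theta})$, obtained by first bounding $\alpha_{t+1}(\boldsymbol{\theta})$ from below (this is where $\kappa_\alpha>0$ is used) and then iterating the $\log\lambda$-recursion; in turn, $\bar{\alpha}_{t+1}(\boldsymbol{\theta})$ is an \emph{upper bound} for $\alpha_{t+1}(\boldsymbol{\theta})$ obtained by plugging $\lambda_t\geq e^{\ell}$ back into the $\alpha$-recursion. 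These two bounds are what allow the Lipschitz constant of the map $\lambda_t\mapsto\lambda_{t+1}$ to be controlled \emph{uniformly} over the state, which is what Bougerol requires---evaluating the Jacobian at a single trajectory is not sufficient. Moreover, your log-scale derivative $\beta-\hat{\alpha}_{t+1}y_t e^{-\hat{\ell}_t}$ does not, under any substitution, produce the expression $\beta\exp\{\omega+\bar{\alpha}_{t+1}(y_t e^{-\ell}-1)-\ell(1-\beta)\}$ in \eqref{eq:contraction}; that expression arises from bounding $|\partial\lambda_{t+1}/\partial\lambda_t|$ in the \emph{original} scale, writing $\lambda_{t+1}\leq\exp\{\omega+\bar{\alpha}_{t+1}(y_t e^{-\ell}-1)\}\lambda_t^{\beta}$ and then using $\lambda_t^{\beta-1}\leq e^{-\ell(1-\beta)}$.

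Your moment-bound argument also has a flaw: a bound $\mathbb{E}[\sup_{\boldsymbol{\theta}}|\ell_t(\boldsymbol{\theta})|^m]<\infty$ on the log-filter does \emph{not} imply $\mathbb{E}[\sup_{\boldsymbol{\theta}}e^{m\ell_t(\boldsymbol{\theta})}]<\infty$, no matter how small $m>0$ is chosen---polynomial moments of $\log X$ never buy moments of $X$. The paper avoids this by working directly with $\lambda_t(\boldsymbol{\theta})$ throughout and invoking a result (Proposition~3.2 of Blasques et al., 2021) which, under the uniform contraction \eqref{contr_cond_mom} on the $\lambda$-scale map, delivers $\mathbb{E}[\sup_{\boldsymbol{\theta}}|\lambda_t(\boldsymbol{\theta})|^m]<\infty$ for some $m>0$ directly.
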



Condition \eqref{eq:contraction} is a sufficient condition which ensures the invertibility of the TV-PARX model by an application of Proposition 5.2.12 in \cite{Straumann2005}, or equivalently Theorem 2.8 in \cite{Straumann_Mikosch2006}. It is different from the conditions for stationarity and ergodicity given in Theorem \ref{prop_ergodic_lnLambda}, and moreover, it is only concerned with the filtering recursion relative to $\lambda_t(\boldsymbol{\theta})$ because the stationarity condition imposed in Proposition \ref{prop_ergodic_lnLambda} for $\alpha_t(\boldsymbol{\theta})$ is also sufficient for filter invertibility. The complex form of the contraction condition in \eqref{eq:contraction} is due to the exponential function. A similar problem with related discussion can be found in \cite{Wintenberger2013} for the EGARCH(1,1) model. Nevertheless, despite its sufficiency, this condition it is still important as it underline that the invertibility region is not degenerate. 

Finally, by slightly reinforcing the contraction condition in \eqref{eq:contraction} as in \eqref{contr_cond_mom}, Proposition \ref{prop_invertibility} also ensures the existence of an arbitrary large number of unconditional moments for the stationary and ergodic intensity process $\{\lambda_t(\boldsymbol{\theta})\}_{t\in\mathbb{Z}}$. This is a crucial property which will be useful for proving the consistency and asymptotic normality of the MLE.


To derive the asymptotic properties of the MLE we follow the classic theory in \cite{white_1994}.
First, for consistency we consider the almost sure uniform convergence of the approximate $\log$-likelihood function in \eqref{approx_log_lik_tot}. In particular, Lemma \ref{lemma_as_approxLogLik} below ensures that the average $\log$-likelihood $\hat{L}_T(\boldsymbol{\theta})$ converges uniformly to a function $L(\boldsymbol{\theta})$.

\begin{lemma}
\label{lemma_as_approxLogLik}
Let the conditions of Propositions \ref{prop_ergodic_lnLambda}-\ref{prop_invertibility} hold true. Then,
\begin{align*}
\sup_{\boldsymbol{\theta}\in\boldsymbol{\Theta}}
| \hat{L}_T(\boldsymbol{\theta}) - {L}(\boldsymbol{\theta}) |
\xrightarrow[]{a.s.}  0
\,\,\,\,\,\text{as} \,\,\,\,\, T \rightarrow \infty,
\end{align*}
where ${L}(\boldsymbol{\theta}) := \mathbb{E}[l_1(\boldsymbol{\theta})] = \mathbb{E}[y_1\log\lambda_1(\boldsymbol{\theta}) - \lambda_1(\boldsymbol{\theta})]$.
\end{lemma}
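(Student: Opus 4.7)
The plan is to split the target quantity into a \emph{filter-approximation} error and a \emph{uniform law of large numbers} (ULLN) error. Define the stationary-filter counterparts $l_t(\boldsymbol{\theta}) := y_t \log \lambda_t(\boldsymbol{\theta}) - \lambda_t(\boldsymbol{\theta})$ and $L_T(\boldsymbol{\theta}) := T^{-1}\sum_{t=1}^T l_t(\boldsymbol{\theta})$, where $\lambda_t(\boldsymbol{\theta})$ is the stationary ergodic solution delivered by Proposition \ref{prop_invertibility}. Then
\[
\sup_{\boldsymbol{\theta}\in\boldsymbol{\Theta}}|\hat{L}_T(\boldsymbol{\theta})-L(\boldsymbol{\theta})|
\;\le\;
\sup_{\boldsymbol{\theta}\in\boldsymbol{\Theta}}|\hat{L}_T(\boldsymbol{\theta})-L_T(\boldsymbol{\theta})|
+
\sup_{\boldsymbol{\theta}\in\boldsymbol{\Theta}}|L_T(\boldsymbol{\theta})-L(\boldsymbol{\theta})|,
\]
and I will show both terms vanish almost surely.

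For the second (ULLN) term, $\{l_t(\boldsymbol{\theta})\}_{t\in\mathbb{Z}}$ is stationary and ergodic for each $\boldsymbol{\theta}$ by Propositions \ref{prop_ergodic_lnLambda} and \ref{prop_invertibility}, and $\boldsymbol{\theta}\mapsto l_t(\boldsymbol{\theta})$ is continuous on the compact set $\boldsymbol{\Theta}$. I would then invoke the ergodic ULLN of \cite{Straumann_Mikosch2006} (their Theorem 2.7), which reduces everything to verifying the moment bound $\mathbb{E}[\sup_{\boldsymbol{\theta}}|l_t(\boldsymbol{\theta})|]<\infty$. Bounding $|l_t(\boldsymbol{\theta})|\le y_t|\log\lambda_t(\boldsymbol{\theta})|+\lambda_t(\boldsymbol{\theta})$, the bound on the second summand is immediate from the moment condition \eqref{contr_cond_mom} of Proposition \ref{prop_invertibility} (choosing $m\ge 1$). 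For the first, by Cauchy--Schwarz, $\mathbb{E}[y_t\sup_{\boldsymbol{\theta}}|\log\lambda_t(\boldsymbol{\theta})|]\le\mathbb{E}[y_t^2]^{1/2}\mathbb{E}[\sup_{\boldsymbol{\theta}}|\log\lambda_t(\boldsymbol{\theta})|^2]^{1/2}$, which is finite because $\mathbb{E}[y_t^2]<\infty$ by Proposition \ref{prop_sec_mom} and because $\sup_{\boldsymbol{\theta}}|\log\lambda_t(\boldsymbol{\theta})|$ inherits finite moments of arbitrary order from the contraction on the log-scale used in Proposition \ref{prop_invertibility} (applied uniformly over the compact set $\boldsymbol{\Theta}$).

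For the first (filter-approximation) term, I would use a mean-value argument on the $\log$ and identity maps to obtain
\[
\sup_{\boldsymbol{\theta}\in\boldsymbol{\Theta}}|\hat{l}_t(\boldsymbol{\theta})-l_t(\boldsymbol{\theta})|
\;\le\;
y_t\,\sup_{\boldsymbol{\theta}\in\boldsymbol{\Theta}}|\log\hat{\lambda}_t(\boldsymbol{\theta})-\log\lambda_t(\boldsymbol{\theta})|
+
\sup_{\boldsymbol{\theta}\in\boldsymbol{\Theta}}|\hat{\lambda}_t(\boldsymbol{\theta})-\lambda_t(\boldsymbol{\theta})|.
\]
Proposition \ref{prop_invertibility} delivers both suprema e.a.s.\ to zero (the log-scale convergence following immediately since the recursion \eqref{ln_lambda_theta} is naturally on $\log\lambda_t$, and the level-scale convergence is stated explicitly). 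Multiplying an e.a.s.\ null sequence against $y_t$, whose moments are all finite by Proposition \ref{prop_sec_mom}, a standard Borel--Cantelli argument (as in Lemma 2.1 of \cite{Straumann_Mikosch2006}) yields $y_t\,\sup_{\boldsymbol{\theta}}|\log\hat{\lambda}_t-\log\lambda_t|\xrightarrow{a.s.}0$, and a Cesàro step gives the desired a.s.\ convergence of $\sup_{\boldsymbol{\theta}}|\hat{L}_T-L_T|$ to zero.

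The main obstacle I expect is the uniform integrability of $y_t\sup_{\boldsymbol{\theta}}|\log\lambda_t(\boldsymbol{\theta})|$: coupling the unbounded Poisson response with the parameter-dependent logarithm requires simultaneously using the higher-order moment bound on $y_t$ from Proposition \ref{prop_sec_mom} and a uniform-in-$\boldsymbol{\theta}$ moment bound for $\log\lambda_t(\boldsymbol{\theta})$, which in turn rests on the reinforced contraction in \eqref{contr_cond_mom}. Transferring the e.a.s.\ filter convergence into a.s.\ convergence of the approximate log-likelihood in the presence of the unbounded multiplier $y_t$ is the technically most delicate piece, but it is handled by the Borel--Cantelli/Cesàro combination outlined above.
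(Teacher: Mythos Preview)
Your proposal is correct and follows essentially the same two-term decomposition as the paper: a filter-approximation error handled via e.a.s.\ convergence from Proposition \ref{prop_invertibility} combined with a Ces\`aro/Borel--Cantelli step (the paper cites Lemma 2.5.1 of \cite{Straumann2005}), and a ULLN term handled by an ergodic uniform law of large numbers (the paper cites \cite{rao1962}) after verifying $\mathbb{E}[\sup_{\boldsymbol{\theta}}|l_t(\boldsymbol{\theta})|]<\infty$. The only substantive difference is in how that moment bound is obtained: the paper uses the elementary inequality $\log x\le x-1$ to replace $|\log\lambda_t(\boldsymbol{\theta})|$ by $\lambda_t(\boldsymbol{\theta})$ directly, whereas you bound $y_t|\log\lambda_t(\boldsymbol{\theta})|$ via Cauchy--Schwarz and an appeal to moments of $\sup_{\boldsymbol{\theta}}|\log\lambda_t(\boldsymbol{\theta})|$; your route works once you make explicit that $\log\lambda_t(\boldsymbol{\theta})\ge\ell$ uniformly (from the proof of Proposition \ref{prop_invertibility}), so that $|\log\lambda_t(\boldsymbol{\theta})|\le|\ell|+\lambda_t(\boldsymbol{\theta})$ and the required second moment reduces to $\mathbb{E}[\sup_{\boldsymbol{\theta}}\lambda_t(\boldsymbol{\theta})^2]<\infty$, exactly as in the paper.
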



Second, we verify the identifiability of the model parameterization, which is the argument of the next Lemma.

\begin{lemma}
\label{lemma_identifiability}
The true parameter vector $\boldsymbol{\theta}_0$ is the unique maximizer of ${L}(\boldsymbol{\theta})$ in $\boldsymbol{\Theta}$; that is, for any $\boldsymbol{\theta} \neq \boldsymbol{\theta}_0$, then ${L}(\boldsymbol{\theta}) < {L}(\boldsymbol{\theta}_0)$.
\end{lemma}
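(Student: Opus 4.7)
The plan is to follow the standard Kullback--Leibler identification strategy in two stages. First, I would establish that $L(\boldsymbol{\theta}_0) - L(\boldsymbol{\theta}) \geq 0$ with equality if and only if $\lambda_t(\boldsymbol{\theta}) = \lambda_t(\boldsymbol{\theta}_0)$ almost surely. Using $\mathbb{E}[y_t \mid \mathcal{F}_{t-1}] = \lambda_t(\boldsymbol{\theta}_0)$ together with the stationarity of $\{\lambda_t(\boldsymbol{\theta})\}_{t\in\mathbb{Z}}$ from Proposition \ref{prop_invertibility}, one obtains
\begin{equation*}
L(\boldsymbol{\theta}_0) - L(\boldsymbol{\theta}) = \mathbb{E}\bigl[\lambda_t(\boldsymbol{\theta}_0)\log\{\lambda_t(\boldsymbol{\theta}_0)/\lambda_t(\boldsymbol{\theta})\} - \lambda_t(\boldsymbol{\theta}_0) + \lambda_t(\boldsymbol{\theta})\bigr],
\end{equation*}
which is the expected Kullback--Leibler divergence between two Poisson laws, hence nonnegative, and vanishes only when $\lambda_t(\boldsymbol{\theta}) = \lambda_t(\boldsymbol{\theta}_0)$ a.s. Finiteness of the relevant expectations follows from the uniform moment bound in Proposition \ref{prop_invertibility} together with Proposition \ref{prop_sec_mom}.

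The core step is then to show that equality of the filtered intensities forces equality of the underlying parameters. Setting $u_t := (y_t - \lambda_t(\boldsymbol{\theta}_0))/\lambda_t(\boldsymbol{\theta}_0)$ and subtracting recursion \eqref{ln_tv_lambda} at $\boldsymbol{\theta}$ from the one at $\boldsymbol{\theta}_0$, then expanding the $\alpha$-terms through \eqref{eq:tv_alpha_parx}, yields
\begin{equation*}
(\omega - \omega_0) + (\beta - \beta_0)\log\lambda_t + \bigl[(\delta_\alpha - \delta_{\alpha,0}) + \phi_\alpha \alpha_t(\boldsymbol{\theta}) - \phi_{\alpha,0}\alpha_t(\boldsymbol{\theta}_0)\bigr] u_t + (\kappa_\alpha - \kappa_{\alpha,0})\, u_{t-1}\, u_t^2 = 0 \quad \text{a.s.}
\end{equation*}
All coefficients here are $\mathcal{F}_{t-1}$-measurable, whereas, conditional on $\mathcal{F}_{t-1}$, $y_t$ is Poisson so $u_t$ attains infinitely many distinct values with positive probability. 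The identity is therefore a quadratic polynomial in $y_t$ with random coefficients, and each coefficient must vanish a.s. The quadratic term yields $(\kappa_\alpha - \kappa_{\alpha,0})u_{t-1}=0$ a.s., and since $\mathbb{P}(u_{t-1}\neq 0)>0$ one obtains $\kappa_\alpha = \kappa_{\alpha,0}$. Differencing the two $\alpha$-recursions (which now share a common $\kappa_\alpha$) and using the linear coefficient then shows $\alpha_t(\boldsymbol{\theta}) = \alpha_t(\boldsymbol{\theta}_0)$ a.s., so that the linear coefficient collapses to $(\delta_\alpha - \delta_{\alpha,0}) + (\phi_\alpha - \phi_{\alpha,0})\alpha_t(\boldsymbol{\theta}_0) = 0$ a.s., and the constant term becomes $(\omega - \omega_0) + (\beta - \beta_0)\log\lambda_t = 0$ a.s. Non-degeneracy of the stationary laws of $\alpha_t(\boldsymbol{\theta}_0)$ and $\log\lambda_t(\boldsymbol{\theta}_0)$ then forces the remaining four parameters to coincide with their true values.

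The main obstacle I anticipate is the justification of the non-degeneracy of the stationary distributions of $\alpha_t$ and $\log\lambda_t$ under $\boldsymbol{\theta}_0$. A clean route is by contradiction: if $\log\lambda_t$ were a.s.\ constant, then \eqref{cond_distribution} would render $\{y_t\}$ i.i.d.\ Poisson, which is incompatible with a nonzero $\kappa_{\alpha,0}$ feeding the random product $u_tu_{t-1}$ back into $\alpha_t$ and, via $\beta_0 > 0$, into $\log\lambda_t$; an analogous argument rules out $\alpha_t$ being a.s.\ constant. These non-degeneracy properties should be encoded as mild standing identifiability conditions on the interior of $\boldsymbol{\Theta}$, after which the chain of implications above delivers $\boldsymbol{\theta} = \boldsymbol{\theta}_0$ and hence the strict inequality ${L}(\boldsymbol{\theta}) < {L}(\boldsymbol{\theta}_0)$ for every $\boldsymbol{\theta}\neq\boldsymbol{\theta}_0$.
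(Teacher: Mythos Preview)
Your proof is correct and follows essentially the same two-step strategy as the paper: first the Kullback--Leibler inequality (the paper uses the equivalent $\log x \leq x-1$) to reduce to $\lambda_t(\boldsymbol{\theta})=\lambda_t(\boldsymbol{\theta}_0)$ a.s., then subtraction of the $\log\lambda$ and $\alpha$ recursions to force $\boldsymbol{\theta}=\boldsymbol{\theta}_0$. Your expansion into a quadratic polynomial in $y_t$ with $\mathcal{F}_{t-1}$-measurable coefficients is somewhat more explicit than the paper's looser appeal to the martingale-difference structure of $e_t$, and you are more careful about the non-degeneracy of $\alpha_t(\boldsymbol{\theta}_0)$ and $\log\lambda_t(\boldsymbol{\theta}_0)$, which the paper leaves implicit.
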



As a consequence, given the results obtained in Lemma \ref{lemma_as_approxLogLik} and Lemma \ref{lemma_identifiability}, an application of Theorem 2.11 of \cite{white_1994} implies the strong consistency of the MLE.

\begin{theorem}
\label{thm_consistency}
Let the conditions of Propositions \ref{prop_ergodic_lnLambda}-\ref{prop_invertibility} hold true. Then,
$\hat{\boldsymbol{\theta}}_T
\xrightarrow[]{a.s.}
\boldsymbol{\theta}_0$ as
$T \rightarrow \infty.$
\end{theorem}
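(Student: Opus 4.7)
The plan is to invoke the standard extremum-estimator consistency result, Theorem 2.11 of \cite{white_1994}, whose hypotheses amount to: (i) compactness of the parameter space $\boldsymbol{\Theta}$; (ii) almost sure uniform convergence of $\hat{L}_T(\boldsymbol{\theta})$ on $\boldsymbol{\Theta}$ to a nonrandom function $L(\boldsymbol{\theta})$; (iii) continuity of $L(\boldsymbol{\theta})$ on $\boldsymbol{\Theta}$; and (iv) a unique maximizer of $L$ at the true parameter $\boldsymbol{\theta}_0$. Three of these four ingredients are already in place from the preceding results; the remaining one requires only a brief argument.

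Compactness of $\boldsymbol{\Theta}$ is assumed directly in the MLE setup of Section \ref{sec:ml_est}. Uniform almost sure convergence is exactly the content of Lemma \ref{lemma_as_approxLogLik}, and the unique maximization of $L$ at $\boldsymbol{\theta}_0$ is Lemma \ref{lemma_identifiability}. The only property left to check is continuity of $L(\boldsymbol{\theta}) = \mathbb{E}[y_1 \log \lambda_1(\boldsymbol{\theta}) - \lambda_1(\boldsymbol{\theta})]$. Almost sure continuity of the per-observation contribution $l_t(\boldsymbol{\theta})$ is inherited from smoothness of the recursions \eqref{ln_lambda_theta}--\eqref{alpha_theta} in $(\omega, \beta, \delta_\alpha, \phi_\alpha, \kappa_\alpha)$, together with strict positivity of $\lambda_t(\boldsymbol{\theta})$ enforced by the exponential link, so $\log \lambda_t(\boldsymbol{\theta})$ is well defined for every $\boldsymbol{\theta} \in \boldsymbol{\Theta}$. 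Continuity of $L$ then follows from dominated convergence once an integrable envelope $\sup_{\boldsymbol{\theta} \in \boldsymbol{\Theta}} |l_t(\boldsymbol{\theta})|$ is exhibited; this in turn is obtained by combining the uniform moment bound $\mathbb{E}[\sup_{\boldsymbol{\theta} \in \boldsymbol{\Theta}} |\lambda_t(\boldsymbol{\theta})|^m] < \infty$ from Proposition \ref{prop_invertibility} with the finiteness of $\mathbb{E}[y_t^k]$ of all orders from Proposition \ref{prop_sec_mom}, via a H\"older estimate applied to the product $y_t \log \lambda_t(\boldsymbol{\theta})$.

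With all four conditions verified, the classical subsequence argument closes the proof: by compactness, any subsequence of $\{\hat{\boldsymbol{\theta}}_T\}$ admits a further almost surely convergent sub-subsequence with some limit $\boldsymbol{\theta}^\star \in \boldsymbol{\Theta}$; uniform convergence and continuity of $L$ yield
\begin{equation*}
L(\boldsymbol{\theta}^\star) = \lim_{T \to \infty} \hat{L}_T(\hat{\boldsymbol{\theta}}_T) \geq \lim_{T \to \infty} \hat{L}_T(\boldsymbol{\theta}_0) = L(\boldsymbol{\theta}_0),
\end{equation*}
so by Lemma \ref{lemma_identifiability} one must have $\boldsymbol{\theta}^\star = \boldsymbol{\theta}_0$. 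Since every subsequential limit coincides with $\boldsymbol{\theta}_0$, the full sequence converges and $\hat{\boldsymbol{\theta}}_T \xrightarrow[]{a.s.} \boldsymbol{\theta}_0$.

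The main obstacle, such as it is, lies in the envelope step underpinning continuity of $L$: one must control $\mathbb{E}[\sup_{\boldsymbol{\theta} \in \boldsymbol{\Theta}}|y_t \log \lambda_t(\boldsymbol{\theta})|]$ without further restrictions on the data generating process. This is precisely why Proposition \ref{prop_sec_mom} was strengthened to \emph{all} orders and Proposition \ref{prop_invertibility} was formulated to deliver uniform-in-$\boldsymbol{\theta}$ moments of $\lambda_t(\boldsymbol{\theta})$; together they leave only a routine H\"older application. All the genuinely hard work, namely ergodicity of the data generating process, invertibility of the filter, and identification of $\boldsymbol{\theta}_0$, has been carried out upstream, and Theorem \ref{thm_consistency} becomes essentially a corollary assembled from Lemmas \ref{lemma_as_approxLogLik} and \ref{lemma_identifiability}.
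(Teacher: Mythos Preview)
Your proposal is correct and follows essentially the same route as the paper: both invoke White's extremum-estimator consistency theorem, with Lemma \ref{lemma_as_approxLogLik} supplying uniform almost-sure convergence and Lemma \ref{lemma_identifiability} supplying identification. You add an explicit verification of the continuity of $L$ and spell out the subsequence argument, whereas the paper simply cites White's theorem directly; note, however, that continuity of $L$ already follows for free as the uniform a.s.\ limit of the continuous sample criteria $\hat{L}_T(\boldsymbol{\theta})$, so your dominated-convergence step, while correct, is not strictly needed.
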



Next, we discuss the asymptotic distribution of the MLE. 
Consider the normalized score evaluated at $\boldsymbol{\theta}_0$
\begin{align}
\label{score_fun}
\boldsymbol{\bar{\eta}}_{T}
=
\sqrt{T}
\frac{\partial L_T(\boldsymbol{\theta}_0)}{\partial \boldsymbol{\theta}}
=
\frac{1}{\sqrt{T}} \sum_{t=1}^{T} \boldsymbol{\eta}_t
\ \ , \ \ 
\boldsymbol{\eta}_t = e_t \lambda_t \log\lambda^{\boldsymbol{\theta}_0}_t
,
\end{align}
where $e_t := (y_t - \lambda_t(\boldsymbol{\theta}_0))\lambda^{-1}_t(\boldsymbol{\theta}_0)$, $\lambda_t := \lambda_t(\boldsymbol{\theta}_0)$ and $\log\lambda^{\boldsymbol{\theta}_0}_t := \frac{\partial\log\lambda_t(\boldsymbol{\theta})}{\partial \boldsymbol{\theta}}|_{\boldsymbol{\theta}=\boldsymbol{\theta}_0}$. Since $\boldsymbol{\theta} = (\boldsymbol{\xi}^\prime, \boldsymbol{\psi}^\prime)^\prime$, we have that
$ \log\lambda^{\boldsymbol{\theta}_0}_t=( (\log\lambda^{\boldsymbol{\xi}_0}_t)^\prime, (\log\lambda^{\boldsymbol{\psi}_0}_t)^\prime )^\prime$, where $\log\lambda^{\boldsymbol{\xi}_0}_t = \left( \frac{\partial\log\lambda_t(\boldsymbol{\theta}_0)}{\partial \omega}^\prime,\frac{\partial\log\lambda_t(\boldsymbol{\theta}_0)}{\partial \beta}^\prime \right) ^\prime$ and $\log\lambda^{\boldsymbol{\psi}}_t = \alpha^{\boldsymbol{\psi}}_t e_t$, with $\alpha^{\boldsymbol{\psi}_0}_t = \left(\frac{\partial\alpha_t(\boldsymbol{\theta}_0)}{\partial \delta_{\alpha}}^\prime,
\frac{\partial\alpha_t(\boldsymbol{\theta}_0)}{\partial \phi_{\alpha}}^\prime,
\frac{\partial\alpha_t(\boldsymbol{\theta}_0)}{\partial \kappa_{\alpha}}^\prime \right)$, as defined in Lemma \ref{lemma_deriv_proc} in the Appendix.

Asymptotic normality then follows by a standard central limit theorem for martingale difference sequences, see 
e.g.\ Corollary 3.1 of \cite{Hall1980}, by showing that
\begin{align}
\label{condition_hall1}
 \sum_{t=1}^T
\mathbb{E}[\boldsymbol{\eta}_t \boldsymbol{\eta}_t^\prime | \mathcal{F}_{t-1}]
 \xrightarrow[]{P} \boldsymbol{V},
\end{align}
where
\begin{align}
\label{V}
\boldsymbol{V}
=
\lim_{T\rightarrow\infty}
\frac{1}{T} \sum_{t=1}^T \lambda_t (\log\lambda^{\boldsymbol{\theta}_0}_t 
\log\lambda^{\boldsymbol{\theta}_0^{\prime}}_t),
\end{align}
and, the Lindberg condition holds, i.e., $\forall \epsilon > 0$,
\begin{align}
\label{condition_hall2}
\mathbb{E}[\boldsymbol{\eta}_t \boldsymbol{\eta}_t^\prime
\mathbbm{1}(\| \boldsymbol{\eta}_t \| > \epsilon)
 | \mathcal{F}_{t-1}]
 \xrightarrow[]{P} \boldsymbol{0}	.
\end{align}
These conditions are verified in the following proposition.

\begin{prop}
\label{prop_norm_score}
Let the conditions of Propositions \ref{prop_ergodic_lnLambda}-\ref{prop_invertibility} hold true. Moreover, assume that $\mathbb{E}[|A_t|^2]<1$ and $0<|\phi_\alpha|<1$, then \eqref{score_fun}, \eqref{condition_hall1} and \eqref{V} hold and the score function in \eqref{score_fun} 
satisfies
\begin{align*}
\boldsymbol{\bar{\eta}}_{T}
\overset{d}\to 
\mathcal{N}(\boldsymbol{0}, \boldsymbol{V}).
\end{align*}
\end{prop}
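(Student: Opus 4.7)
The plan is to apply the martingale central limit theorem (Corollary 3.1 of \cite{Hall1980}) to $\{\boldsymbol{\eta}_t,\mathcal{F}_t\}$. Since $\log\lambda_t^{\boldsymbol{\theta}_0}$ is $\mathcal{F}_{t-1}$-measurable and $\mathbb{E}[e_t \mid \mathcal{F}_{t-1}] = 0$ by construction, $\{\boldsymbol{\eta}_t\}$ is immediately a martingale difference sequence. Therefore, what must be verified is the conditional-variance convergence \eqref{condition_hall1} and the Lindeberg condition \eqref{condition_hall2}.

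For the conditional-variance part, I would exploit the Poisson identity $\mathrm{Var}(y_t \mid \mathcal{F}_{t-1}) = \lambda_t$, which gives $\mathbb{E}[e_t^2 \mid \mathcal{F}_{t-1}] = \lambda_t^{-1}$ and hence
\begin{align*}
\mathbb{E}[\boldsymbol{\eta}_t \boldsymbol{\eta}_t^\prime \mid \mathcal{F}_{t-1}]
= \lambda_t \, \log\lambda_t^{\boldsymbol{\theta}_0}(\log\lambda_t^{\boldsymbol{\theta}_0})^\prime.
\end{align*}
To lift this pointwise identity to \eqref{condition_hall1}, I need joint stationarity and ergodicity of $(y_t,\lambda_t,\alpha_t,\log\lambda_t^{\boldsymbol{\theta}_0},\alpha_t^{\boldsymbol{\psi}_0})$, plus $\mathbb{E}[\lambda_t \|\log\lambda_t^{\boldsymbol{\theta}_0}\|^2]<\infty$. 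Stationarity and ergodicity of the derivative processes follow from Proposition \ref{prop_ergodic_lnLambda} combined with the observation that $\log\lambda_t^{\boldsymbol{\theta}_0}$ and $\alpha_t^{\boldsymbol{\psi}_0}$ satisfy affine stochastic recurrence equations of the form $X_{t+1} = A_t X_t + B_t$ driven by the ergodic process $(y_t,\lambda_t,\alpha_t)$. Under the hypotheses $\mathbb{E}[|A_t|^2]<1$ and $0<|\phi_\alpha|<1$, a standard contraction argument (e.g.\ Theorem 2.8 of \cite{Straumann_Mikosch2006}) yields a unique stationary $L^2$ solution, and Birkhoff's ergodic theorem together with the moment bounds on $\lambda_t$ from Proposition \ref{prop_invertibility} then delivers \eqref{condition_hall1} with limit $\boldsymbol{V}$ as in \eqref{V}.

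For \eqref{condition_hall2}, I would reduce the Lindeberg condition to a Lyapunov-type moment bound: it suffices to exhibit $\delta>0$ with $\sup_t \mathbb{E}[\|\boldsymbol{\eta}_t\|^{2+\delta}]<\infty$. Using Hölder's inequality on $\|\boldsymbol{\eta}_t\|^{2+\delta}$, together with the conditional Poisson bound $\mathbb{E}[|y_t - \lambda_t|^{2+\delta} \mid \mathcal{F}_{t-1}] \le c(\lambda_t + \lambda_t^{(2+\delta)/2})$, the arbitrarily high moments of $y_t$ and $\log\lambda_t$ granted by Proposition \ref{prop_sec_mom}, and the moment bound on $\lambda_t(\boldsymbol{\theta})$ from Proposition \ref{prop_invertibility}, the required bound follows once the derivative process is shown to admit a moment slightly above two.

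The main obstacle is precisely the moment control of the coupled derivative system. The coefficient matrix $A_t$ in the recursion for $(\log\lambda_t^{\boldsymbol{\theta}_0},\alpha_t^{\boldsymbol{\psi}_0})$ depends nonlinearly on $y_t$ and $\lambda_t$, and the exponential link turns small perturbations in $\log\lambda_t$ into multiplicative effects. The assumption $\mathbb{E}[|A_t|^2]<1$ gives a contraction in $L^2$ on average, which yields existence of the second moment cleanly, but upgrading to the $(2+\delta)$-th moment needed for Lyapunov requires combining this contraction with the arbitrary-order moments of $y_t$ from Proposition \ref{prop_sec_mom} via a Minkowski-type inequality applied to the recursion. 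Once both the conditional-variance convergence and the Lyapunov bound are in place, \cite{Hall1980}'s corollary delivers $\boldsymbol{\bar{\eta}}_{T} \xrightarrow{d} \mathcal{N}(\boldsymbol{0},\boldsymbol{V})$.
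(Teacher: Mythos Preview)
Your overall strategy---martingale CLT via \cite{Hall1980}, conditional variance through the Poisson identity $\mathbb{E}[e_t^2\mid\mathcal{F}_{t-1}]=\lambda_t^{-1}$, ergodicity of the derivative process from the affine SRE it satisfies, and the moment bound $\mathbb{E}[\lambda_t\|\log\lambda_t^{\boldsymbol{\theta}_0}\|^2]<\infty$ via Minkowski's inequality under $\mathbb{E}[|A_t|^2]<1$---matches the paper's proof almost step for step.

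The one substantive divergence is your handling of the Lindeberg condition \eqref{condition_hall2}. You propose to check it via a Lyapunov bound $\sup_t\mathbb{E}[\|\boldsymbol{\eta}_t\|^{2+\delta}]<\infty$, and you rightly flag that lifting the derivative process from $L^2$ to $L^{2+\delta}$ is delicate under only $\mathbb{E}[|A_t|^2]<1$; without the stronger hypothesis $\mathbb{E}[|A_t|^{2+\delta}]<1$ the Minkowski argument does not immediately yield that extra moment. The paper sidesteps this obstacle entirely: once $\{\boldsymbol{\eta}_t\}$ is stationary, ergodic, and square-integrable, the Lindeberg condition for the normalized array $\{\boldsymbol{\eta}_t/\sqrt{T}\}$ reduces by stationarity to
\[
\mathbb{E}\big[\|\boldsymbol{\eta}_1\|^2\,\mathbbm{1}(\|\boldsymbol{\eta}_1\|>\epsilon\sqrt{T})\big]\longrightarrow 0,
\]
which follows from dominated convergence. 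So only the second-moment bound you already obtained is needed, and the ``main obstacle'' you identify evaporates. Your route via Lyapunov is not wrong in principle, but it demands more than the stated assumptions guarantee; the paper's argument is both simpler and sharper.
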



Denote the information matrix at the true value by $\boldsymbol{J}$, i.e.
\begin{align}
\label{J_matrix}
\boldsymbol{J}
=
\mathbb{E}\bigg[
\frac{1}{\lambda_t}
\frac{\partial \lambda_t(\boldsymbol{\theta}_0)}{\partial \boldsymbol{\theta}}
\frac{\partial \lambda_t(\boldsymbol{\theta}_0)}{\partial \boldsymbol{\theta}^\prime}
\bigg].
\end{align}
Then, we finally state the following theorem.

\begin{theorem}
\label{thm_asy_norm}
Let the conditions of Propositions \ref{prop_ergodic_lnLambda}-\ref{prop_norm_score} hold true. Moreover, assume that the true parameter vector $\boldsymbol{\theta}_0$ is in the interior of $\boldsymbol{\Theta}$. Then, the MLE $\hat{\boldsymbol{\theta}}_T$ is asymptotically normal,
\begin{align*}
\sqrt{T}(\hat{\boldsymbol{\theta}}_T - {\boldsymbol{\theta}}_0)
\Rightarrow
\mathcal{N}(\boldsymbol{0}, \boldsymbol{J}^{-1}).
\end{align*} 
\end{theorem}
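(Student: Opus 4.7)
The plan is to follow the classical mean-value expansion strategy for M-estimators combined with the invertibility and ergodicity already established in Propositions \ref{prop_ergodic_lnLambda}--\ref{prop_norm_score}. Since by Theorem \ref{thm_consistency} we have $\hat{\boldsymbol{\theta}}_T \xrightarrow[]{a.s.} \boldsymbol{\theta}_0$, and $\boldsymbol{\theta}_0$ lies in the interior of the compact set $\boldsymbol{\Theta}$, for $T$ sufficiently large $\hat{\boldsymbol{\theta}}_T$ is an interior point and a first-order condition $\partial \hat{L}_T(\hat{\boldsymbol{\theta}}_T)/\partial \boldsymbol{\theta} = \boldsymbol{0}$ holds. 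A mean-value expansion along each coordinate then yields
\begin{equation*}
\sqrt{T}(\hat{\boldsymbol{\theta}}_T - \boldsymbol{\theta}_0)
= -\left[\frac{\partial^2 \hat{L}_T(\boldsymbol{\theta}_T^\ast)}{\partial \boldsymbol{\theta} \partial \boldsymbol{\theta}^\prime}\right]^{-1}
\sqrt{T}\,\frac{\partial \hat{L}_T(\boldsymbol{\theta}_0)}{\partial \boldsymbol{\theta}},
\end{equation*}
for some $\boldsymbol{\theta}_T^\ast$ on the segment joining $\boldsymbol{\theta}_0$ and $\hat{\boldsymbol{\theta}}_T$. The whole argument then reduces to (i) a central limit theorem for the score and (ii) a uniform law of large numbers for the Hessian in a neighbourhood of $\boldsymbol{\theta}_0$.

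For the score, I would first argue that the initialization error in $\partial \hat{L}_T/\partial \boldsymbol{\theta}$ is asymptotically negligible: by differentiating the recursions \eqref{ln_lambda_theta}--\eqref{alpha_theta} and by an application of the invertibility result in Proposition \ref{prop_invertibility} extended to the derivative processes (as encoded in the Lemma on $\alpha_t^{\boldsymbol{\psi}_0}$ referenced in the text), $\sup_{\boldsymbol{\theta}}\lvert \partial \log\hat\lambda_t/\partial \boldsymbol{\theta} - \partial \log\lambda_t/\partial \boldsymbol{\theta}\rvert \xrightarrow{e.a.s.} 0$. Hence $\sqrt{T}\,\partial \hat{L}_T(\boldsymbol{\theta}_0)/\partial \boldsymbol{\theta}$ and $\sqrt{T}\,\partial L_T(\boldsymbol{\theta}_0)/\partial \boldsymbol{\theta}$ differ by $o_P(1)$. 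Proposition \ref{prop_norm_score} then supplies the martingale central limit theorem, giving $\sqrt{T}\,\partial L_T(\boldsymbol{\theta}_0)/\partial \boldsymbol{\theta} \Rightarrow \mathcal{N}(\boldsymbol{0},\boldsymbol{V})$. The crucial identification step is the information matrix equality $\boldsymbol{V} = \boldsymbol{J}$: using the Poisson identity $\mathbb{E}[e_t^2 \mid \mathcal{F}_{t-1}] = \lambda_t^{-1}$ and $\partial \lambda_t/\partial \boldsymbol{\theta} = \lambda_t\,\partial \log\lambda_t/\partial \boldsymbol{\theta}$, one obtains $\mathbb{E}[\boldsymbol{\eta}_t \boldsymbol{\eta}_t^\prime \mid \mathcal{F}_{t-1}] = \lambda_t\,(\partial\log\lambda_t)(\partial\log\lambda_t)^\prime = \lambda_t^{-1}(\partial \lambda_t)(\partial \lambda_t)^\prime$, so that by the ergodic theorem $\boldsymbol{V} = \boldsymbol{J}$ as defined in \eqref{J_matrix}.

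For the Hessian, the plan is to show the uniform convergence
\begin{equation*}
\sup_{\boldsymbol{\theta} \in \mathcal{N}(\boldsymbol{\theta}_0)}
\left\lVert \frac{\partial^2 \hat{L}_T(\boldsymbol{\theta})}{\partial \boldsymbol{\theta} \partial \boldsymbol{\theta}^\prime} + \boldsymbol{J}(\boldsymbol{\theta}) \right\rVert
\xrightarrow[]{a.s.} 0,
\end{equation*}
on a small neighbourhood $\mathcal{N}(\boldsymbol{\theta}_0) \subset \boldsymbol{\Theta}$ with $\boldsymbol{J}(\boldsymbol{\theta}_0)=\boldsymbol{J}$, so that by consistency $\boldsymbol{\theta}_T^\ast \xrightarrow[]{a.s.} \boldsymbol{\theta}_0$ and continuity of $\boldsymbol{J}(\boldsymbol{\theta})$ imply $\partial^2 \hat{L}_T(\boldsymbol{\theta}_T^\ast)/\partial\boldsymbol{\theta}\partial\boldsymbol{\theta}^\prime \to -\boldsymbol{J}$ a.s. Two ingredients are needed: first, replacing $\hat L_T$ by its stationary counterpart $L_T$ (together with its first and second derivatives) via e.a.s.\ convergence of the derivative filters; second, applying a uniform SLLN \emph{à la} Straumann--Mikosch. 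The latter requires a stationary dominating function with finite expectation, namely $\mathbb{E}\sup_{\boldsymbol{\theta}\in\mathcal{N}(\boldsymbol{\theta}_0)} \lVert \partial^2 l_t(\boldsymbol{\theta})/\partial \boldsymbol{\theta}\partial \boldsymbol{\theta}^\prime \rVert < \infty$, which I would obtain by bounding the second derivative by polynomial functions of $y_t$, $\lambda_t(\boldsymbol{\theta})$ and $\partial \log\lambda_t/\partial \boldsymbol{\theta}$, and then invoking the arbitrary-order moments for $\{y_t\}$ (Proposition \ref{prop_sec_mom}) together with the $m$-th-moment bound on $\sup_{\boldsymbol{\theta}} \lambda_t(\boldsymbol{\theta})$ from the reinforced contraction \eqref{contr_cond_mom} in Proposition \ref{prop_invertibility}. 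Finally, Slutsky's theorem combined with the information matrix equality yields $\sqrt{T}(\hat{\boldsymbol{\theta}}_T - \boldsymbol{\theta}_0) \Rightarrow \boldsymbol{J}^{-1}\mathcal{N}(\boldsymbol{0},\boldsymbol{V}) = \mathcal{N}(\boldsymbol{0},\boldsymbol{J}^{-1})$.

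The main obstacle will be producing the integrable dominating function for the second derivative uniformly over a neighbourhood of $\boldsymbol{\theta}_0$. Because the derivatives of $\log\lambda_t(\boldsymbol{\theta})$ and $\alpha_t(\boldsymbol{\theta})$ are themselves stochastic recursions with state-dependent (exponential) coefficients inherited from the score-driven updating, establishing uniform $L^p$ bounds is non-trivial and is the step where the strengthened contraction \eqref{contr_cond_mom} does the heavy lifting. Once this moment control is in place, the rest of the argument is essentially a verification of the standard assumptions of Theorem 6.2 in \cite{white_1994}.
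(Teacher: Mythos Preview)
Your proposal is correct and follows essentially the same route as the paper: mean-value expansion of the first-order condition, negligibility of the initialization in the score via e.a.s.\ convergence of the derivative filters, Proposition \ref{prop_norm_score} for the CLT, the information-matrix equality $\boldsymbol{V}=\boldsymbol{J}$, and a uniform ergodic theorem for the Hessian driven by a moment bound on $\sup_{\boldsymbol{\theta}}\lVert \partial^2 l_t(\boldsymbol{\theta})/\partial\boldsymbol{\theta}\partial\boldsymbol{\theta}^\prime\rVert$. The paper organizes exactly the technical pieces you flag as the ``main obstacle'' into three auxiliary lemmas (e.a.s.\ convergence and moment bounds for the first and second derivative processes of $\lambda_t(\boldsymbol{\theta})$, and the uniform $L^1$ bound on the Hessian of $l_t$), all obtained from the reinforced contraction \eqref{contr_cond_mom} together with Propositions \ref{prop_sec_mom} and \ref{prop_invertibility}.
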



\noindent {\bf Remark 3.}
It is worth noting that the results obtained for the MLE $\hat{\boldsymbol{\theta}}_T$ in this section hold true even if the conditional distribution is misspecified. This is a consequence of the fact that the Poisson distribution belongs to the linear exponential family. In particular, as proved by \cite{white1982regularity} and \cite{Gourieroux1984}, under high-level assumptions the MLE is a QMLE and maintain the same properties, the only change will occur on the asymptotic variance-covariance matrix of $\sqrt{T}(\hat{\boldsymbol{\theta}}_T - {\boldsymbol{\theta}}_0)$, which will assume the classic sandwitch form, that is
\begin{align*}
\sqrt{T}(\hat{\boldsymbol{\theta}}_T - {\boldsymbol{\theta}}_0)
\Rightarrow
\mathcal{N}(\boldsymbol{0}, \boldsymbol{\Sigma}),
\end{align*}
where $\boldsymbol{\Sigma} = \boldsymbol{J}^{-1}\boldsymbol{I}\boldsymbol{J}^{-1}$, with $\boldsymbol{J}$ as in equation \eqref{J_matrix} and $\boldsymbol{I}= \mathbb{E}\Big[ \frac{\mathbb{V}[y_t|\mathcal{F}_{t-1}]}{\lambda^2_t(\boldsymbol{\theta}_0)}\frac{\partial \lambda_t(\boldsymbol{\theta}_0)}{\partial \boldsymbol{\theta}}
\frac{\partial \lambda_t(\boldsymbol{\theta}_0)}{\partial \boldsymbol{\theta}^\prime}\Big]$. See also \cite{Ahmad2016} or \cite{Aknouche2021}.


\section{Finite sample performance}
\label{sec:monte_carlo}

In this section we investigate the empirical performance in finite samples of the TV-PAR model compared to the standard PAR model. In particular, the aim of this Monte Carlo exercise is to investigate how quickly the $\lambda_t$ can adapt to (structural) changes. The time series are generated from the following DGP

\begin{equation}
y_t = N_t(\lambda_t^0), \label{eq:MC1}
\end{equation}

\noindent where $N_t(\cdot)$ denotes a sequence of independent Poisson processes with unity intensity. Similarly to \cite{Blasques2019}, for the time-varying $\lambda_t^0$, we consider a deterministic step-function
\begin{align}
\lambda_t^0
=
\begin{cases}
2 & \tab \text{if} \tab \sin(\gamma^{-1} 10^{-2} (\pi t - 1) \geq 0 \\
2+\delta & \tab \text{if} \tab \sin(\gamma^{-1} 10^{-2} (\pi t - 1) < 0
\end{cases}
\label{step}
\end{align}

\noindent where $\delta \in \{0,2,4 \}$ and $\gamma \in \{1, 1.5, 2\}$. In this Monte Carlo exercise we generate $m=1000$ time series of sample sizes $T \in \{250, 500, 1000\}$. 
Figure \ref{fig:lambda_0} reports the time varying intensity $\lambda_t^0$ for all the combinations of $\delta$ and $\gamma$ considered. For each series we estimate the time-varying PAR in \eqref{ln_tv_lambda} and the static PAR, i.e.\ where $\alpha_t=\alpha$ in \eqref{ln_tv_lambda}.
Table \ref{tab:monte_carlo_RMSE} reports the Root Mean Squared Errors (RMSE) for all the cases considered. The results in Table \ref{tab:monte_carlo_RMSE} show that the RMSE of the TV-PAR is smaller for all DGPs considered, except for, unsurprinsigly, the case of no steps. However, in the case of no steps, i.e.\ when $\delta=0$, the difference in RMSE between the time-varying and static models is relatively small in all the cases considered.

Conversely, when the $\delta>0$, the RMSE of the TV-PAR model is smaller than the one of the standard PAR model for the all cases considered. In particular, the higher the $\delta$, the better the performance of TV-PAR model compared to the standard PAR model, regardless of the sample size considered. Therefore, with more convoluted DGP dynamics, the time-varying model specification works better than to its static counterpart.

To illustrate, we report in Figure \ref{fig:mc_est} the cases of the DGPs with $\delta =4$ and $\gamma = 1$ for $T = 250, 500,$ and 1000. This figure shows the mean $\hat{\lambda}_t$ and the corresponding 95\% confidence bands for the TV-PAR (in green) and the static PAR (in red) models. 
From Figure \ref{fig:mc_est}, it clearly emerges that the time-varying model specification provides a much faster reaction and better fit to the jumps created by the step-function in \eqref{step}. 


\section{Empirical illustrations}
\label{sec:empirical}
In this section, we show the usefulness of the time-varying specification of the TV-PARX model by considering two empirical illustrations.
Section \ref{sec:covid} deals with the daily counts of infections of SARS-COV-2 virus in Italy until the end of May 2021.
Section \ref{sec:defaults} analyzes the monthly corporate default counts in the US as previously analyzed by \cite{Agosto2016}. 

\subsection{COVID-19}
\label{sec:covid}

The COVID-19 pandemic, started in 2019 in the city of Wuhan in China's region of Hubei and spread all around the world in 2020, has caused a tremendous public health emergency and a huge negative shock on the business cycle. Since then, researchers of various fields have started to work on that topic in order to help policy makers in their decisions.
Are the number of cases (deaths) predictable? Are the evolution of the number of cases (deaths) similar across countries? 
These questions have increased the interest among researchers and practitioners for new models suited for time series of counts. Epidemiologists, for instance, use reduced-form models to produce forecasts of the number of infections (or deaths), see for instance \cite{batista2020estimation} for a logistic growth model and \cite{giordano2020modelling} and \cite{hethcote2000mathematics} for reviews on several variants of the Susceptible, Infectious and Removed (SIR) model.

Due to the convoluted time evolution of the COVID-19 pandemic (see the number of new cases depicted in blue in Figure \ref{fig:emp_fitted_actual}), the time-varying specification of the model is expected to capture the different waves of the pandemic and its non-standard dynamics are likely to outperforms much better time-invariant alternatives. 
Among others, 
\cite{li2021will} found that the model parameters change both over time and across countries. Country-specific behaviour and  local, transient outbreaks of COVID-19 is also found by \cite{khismatullina2020nonparametric} and \cite{dong2020time} using non-parametric approaches and by \cite{palmer2021count} using Bayesian hierarchical models.

Therefore, we fit the TV-PARX model in \eqref{eq:tv_lambda_parx}-\eqref{eq:tv_gamma_parx} with $\gamma_t=0$ to the daily counts of confirmed new cases in Italy from the beginning of the COVID-19 pandemic outbreak (March 2020) to the end of May 2021, which involves three or four main waves of cases (see Figure \ref{fig:emp_fitted_actual}).
The counts dynamics in Figure \ref{fig:emp_fitted_actual} shows a seasonal component mainly due to reporting. 
Hence, we consider the following TV-PARX model 
\begin{eqnarray}
\log \lambda_{t+1} &=& \omega + \beta \log \lambda_t + \alpha_{t+1} (y_t - \lambda_t)\lambda_t^{-1} +   \psi^\prime d_t  
\nonumber
\\
\alpha_{t+1} &=& \delta_\alpha + \phi_\alpha \alpha_t + \kappa_\alpha  (y_t - \lambda_t)\lambda_t^{-1} (y_{t-1} - \lambda_{t-1})\lambda_{t-1}^{-1} 
\label{covid}
\end{eqnarray}
where $d_t$ denotes seasonal (daily) dummies.

For comparison purposes, we also fit a time-invariant PARX model where $\phi_\alpha = \kappa_\alpha=0$ in \eqref{covid}.
Table \ref{tab:estimates} shows the estimates and Table \ref{tab:empirical} reports the differences in the log-likelihoods, information criteria and Root Mean Squared Errors (RMSE) of the estimated models.
In terms of in-sample performance, these results show that the time-varying specification of the TV-PARX model is preferred over its time-invariant counterpart.
Moreover, albeit both the time-varying model (TV-PARX) and its time-invariant counterpart (PARX) provide a good fit to the observed data, from Figure \ref{fig:emp_fitted_actual} it emerges that the TV-PARX model approximates better the different waves of the COVID-19 pandemic in Italy.
In particular, the estimated time-varying parameter $\hat{\alpha}_t$ ranges from around -0.2 to 0.8, and the minimum value virtually corresponds to the end of the first wave (end of June 2020), while the maximum values are observed at specific time points which are closely related to the beginning of each wave, e.g.\ March 2020, September-November 2020 and January 2021, as well as abrupt changes in $\hat{\alpha}_t$ for the last wave begun in March 2021.

\subsection{Corporate defaults}
\label{sec:defaults}
In this section, we stress the usefulness of the time-varying specification of the TV-PARX model compared to the standard approach based on the constant parameter version of the PARX model by analyzing the count time series of corporate defaults in US.
\cite{Agosto2016} contributed to the literature on modeling and forecasting time series of corporate defaults, by exploring the stylized fact that defaults tend to cluster over time and investigating the causes of this phenomenon. In particular, using a (constant parameter) PARX model they discriminate between comovements in corporate solvency caused by common underlying macroeconomic and financial risk factors, i.e.\ ``systematic risk'', and feedback effects, i.e.\ ``contagion'', where, conditionally on the common risk factors, the current number of defaults affects the probability of other firms' future insolvency. As pointed out in \cite{Agosto2016}, one limitation of their approach is that the measure
of contagion ignores feedback effects from defaults to covariates; that is, the case when $x_t$ is affected by past defaults, which in turn will affect future defaults. This issue is softened by the time-varying model specification of the TV-PARX as the time-dependent parameters may allow to capture the feedback effects from past defaults to the covariates.
The data set on default counts analyzed by \cite{Agosto2016} consists of the number of
bankruptices among Moody's rated industrial firms in the US collected at monthly frequency between the 1982 and 2011, for a total of $T = 360$ observations, from the Moody's Credit Risk Calculator (CRC). The blue line in Figure \ref{fig:emp_fitted_actual_defaults} shows the default count time series. As discussed in \cite{Agosto2016}, Figure \ref{fig:emp_fitted_actual_defaults} discloses that default counts is a process characterised by (i) high persistence, (ii) clusters over time, and (iii) overdispersion.
These three stylized facts are satisfactorily captured by the PARX model. However, the peak of defaults observed in 2008 (Figure \ref{fig:emp_fitted_actual_defaults}) as well as the high increase in bankruptices around the year 2000, possibly interpretable as structural breaks, may be better captured by the time-varying specification of the TV-PARX model, which can locally allow for explosive roots rather than standard PARX model where the stationary condition must hold. 
Therefore, we consider the TV-PARX model, described in \eqref{eq:tv_lambda_parx}-\eqref{eq:tv_gamma_parx}, with no deterministic componenents, i.e.\ $d_t=0$.
For the sake of comparison, in the TV-PARX model specification we include the same covariates considered in the PARX model in \cite{Agosto2016}.\footnote{Note that, with the specification in logs of the TV-PARX model, we are able to consider the whole time series of LI, and not only the negative values (in modulus to ensure non-negativity) as in \cite{Agosto2016}.
}  In particular, we consider $x_t=(RV_t, LI_t)^\prime$, where 
$RV_t$ is the realized volatility computed
as a proxy of the S\&P 500 monthly realized volatility using daily squared returns, and 
$LI_t$ is the Leading Index released
by the Federal Reserve. More formally, we consider the following TV-PARX model
\begin{eqnarray}
\log \lambda_{t+1} &=& \omega + \beta \log \lambda_t + \alpha_{t+1} (y_t - \lambda_t)\lambda_t^{-1} +  \gamma_{1,t+1}RV_t + \gamma_{2,t+1}LI_t 
\nonumber
\\
\alpha_{t+1} &=& \delta_\alpha + \phi_\alpha \alpha_t + \kappa_\alpha  (y_t - \lambda_t)\lambda_t^{-1} (y_{t-1} - \lambda_{t-1})\lambda_{t-1}^{-1} \nonumber \\
\gamma_{1,t+1} &=& \delta_{\gamma_1} +\phi_{\gamma_1}\gamma_{1,t} +\kappa_{\gamma_1} (y_t - \lambda_t)\lambda_t^{-1} RV_t \nonumber\\
\gamma_{2,t+1} &=& \delta_{\gamma_2} +\phi_{\gamma_2}\gamma_{2,t} +\kappa_{\gamma_2} (y_t - \lambda_t)\lambda_t^{-1} LI_t.  \label{eq:emp_default}
\end{eqnarray}

We provide an analysis for the full sample 1982-2011 and compare the results for the time-invariant
PARX model specification. 
For comparison purposes, we also fit a time-invariant PARX model where $\phi_{\alpha}=\kappa_{\alpha}=\phi_{\gamma_1}=\kappa_{\gamma_1}=\phi_{\gamma_2}=\kappa_{\gamma_2}=0$ in \eqref{eq:emp_default}. The maximum likelihood estimates are reported in Table \ref{tab:estimates}.
According to the information criteria and the RMSE reported in Table \ref{tab:empirical}, the TV-PARX model provides a better in-sample fit than the PARX model, except for the BIC.\footnote{It is interesting to note that our TV-PARX model outperforms the best model specification in \cite{Agosto2016}, namely (non-logs) PARX(2,1).} 
As can be noted from Figure \ref{fig:emp_fitted_actual_defaults}, both models capture the actual default counts dynamics well, but the goodness of fit of the TV-PARX model is superior during the periods where the number of defaults is larger, especially around the year 2009.
Figure \ref{fig:emp_alphagamma_defaults} shows the filtered dynamic parameters, $\alpha_t$ and $\gamma_t$. From this figure, it can be noted that the estimated time-varying parameters are able to adjust according to the observed default count time series.
Interestingly, the time-varying specification allows us to determine the main driver of a specific crisis. 
In particular, the dynamics of the estimated parameters $\gamma_{1,t}$ and $\gamma_{2,t}$ in Figure \ref{fig:emp_alphagamma_defaults} suggests that financial volatility is the main driver of the spike of the defaults in the US during the 2008-09 crisis, while 
during the early 90s recession the main driver is mainly macroeconomic.    
Therefore, the time-varying version of the PARX model proposed in \eqref{eq:tv_lambda_parx}-\eqref{eq:tv_gamma_parx} provides a more  flexible alternative to the time-invariant counterpart.

\section{Conclusion}
\label{sec:conclusion}
In this paper, we have developed a novel class of score-driven models for time series of counts, allowing for time-varying parameters and the inclusion of exogenous variables. 
We have provided conditions for stationarity and ergodicity, as well as asymptotic results for the MLE. Moreover, we have shown the relevance of our modeling approach by focusing on a Monte Carlo simulation study. 
We have considered two empirical illustrations where we apply our nonlinear time-varying Poisson autoregressive model: the first applies the TV-PARX model with deterministic components to the number of COVID-19 infections in Italy; the second focuses on count time series of corporate defaults in the US by including (time-varying) exogenous covariates in the TV-PARX model specification. 
For these relevant illustrations, we find that the proposed time-varying framework is capable of improving the in-sample fit.
Further issues are left to future research. In particular, in this paper we have specified an univariate TV-PARX model. A multivariate extension could be extremely useful to capture the relationships between different time series of counts (e.g. COVID-19 cases or number of defaults in different countries) and is currently under investigation by the authors. Moreover, the time-varying specification of $\alpha_t$ could be used to construct a new test for the presence of structural breaks for time series of counts.

\newpage
\bibliographystyle{apalike}
\bibliography{references}

\newpage

\begin{table}[h]
\centering
 \begin{tabular}{c c c c c c c}
 \hline \hline
  & \multicolumn{2}{c}{$\gamma=1$}  & \multicolumn{2}{c}{$\gamma=1.5$}   & \multicolumn{2}{c}{$\gamma=2$}  \\
 & PARX & TV-PARX & PARX & TV-PARX  & PARX & TV-PARX  \\
 \hline  
 \multicolumn{7}{c}{$T=250$} \\
$\delta=0$	&	\textbf{0.0077}	&	0.0086	&	\textbf{0.0077}	&	0.0086	&	\textbf{0.0077}	&	0.0086	\\
$\delta=2$	&	0.4052	&	\textbf{0.3686}	&	0.3317	&	\textbf{0.2630}	&	0.3358	&	\textbf{0.2776}	\\
$\delta=4$	&  0.6331	&	\textbf{0.5928}	&	0.5034	&	\textbf{0.3749}	&	0.5002 &	\textbf{0.3842}	\\
 \hline 
 \multicolumn{7}{c}{$T=500$} \\
$\delta=0$	&	\textbf{0.0033}	&	0.0049 	&	\textbf{0.0033}	&	0.0049 	&	\textbf{0.0033}	&	0.0049 	\\
$\delta=2$	&	 0.4097	&	\textbf{0.3914}	&	 0.3729	&	\textbf{0.3477}	&	0.3371	&	\textbf{0.3115}	\\
$\delta=4$	&  0.6324 	&	\textbf{0.6092}	&	0.5726	&	\textbf{0.5359}	&	0.5198 &	\textbf{0.4810}	\\
 \hline 
 \multicolumn{7}{c}{$T=1000$} \\
$\delta=0$	&	\textbf{0.0026}	&	0.0031 	&	\textbf{0.0026}	&	0.0031 &	\textbf{0.0026}	&	0.0031 	\\
$\delta=2$	&	0.4245	&	\textbf{0.4121}	&	0.3784	&	\textbf{0.3644}	&	0.3373	&	\textbf{0.3212}	\\
$\delta=4$	&  0.6529 	&	\textbf{0.6363}	&	0.5863	&	\textbf{0.5626}	&	0.5224 &	\textbf{0.4931}	\\
  \hline \hline
\end{tabular} 
\caption{Root Mean Squared Error (RMSE) where the error is between the true $\lambda_t^0$ and the filtered parameter $\lambda_t$ from TV-PARX and PARX models for different true values of $\delta$ and $\gamma$.  
\label{tab:monte_carlo_RMSE}}
\end{table}

\newpage

\begin{table}[h]
\centering
 \begin{tabular}{c c c c c}
 \hline \hline
 & \multicolumn{2}{c}{\textbf{SARS-COV-2}}  & \multicolumn{2}{c}{\textbf{Defaults}} \\ 
 &  TV-PARX & PARX &  TV-PARX & PARX \\ 
 \hline  
$\omega          $& $\underset{(0.0002)}{0.0045}$ & $\underset{(0.0002)}{0.0039}$  &  $\underset{(0.0323)}{-0.0227}$ & $\underset{(0.0056)}{0.0219}$\\
$\beta           $& $\underset{(0.0001}{0.9990}$ & $\underset{(0.0001)}{0.9990}$ &  $\underset{(0.0386)}{0.9423}$ & $\underset{(0.0038)}{0.9441}$  \\
$\psi_1          $& $\underset{(0.0197)}{-0.0514}$ & $\underset{(0.0002)}{-0.0416}$ & - & - \\  
$\psi_2          $& $\underset{(0.0071)}{-0.1310}$ & $\underset{(0.0001)}{-0.1246}$ & - & - \\  
$\psi_3          $& $\underset{(0.0004)}{-0.2767}$ & $\underset{(0.0001)}{-0.2724}$ & - & - \\  
$\psi_4          $& $\underset{(0.0106)}{0.1739}$ & $\underset{(0.0005)}{0.1794}$ & - & - \\  
$\psi_5          $& $\underset{(0.0065)}{0.1578}$ & $\underset{(0.0002)}{0.1602}$ & - & - \\  
$\psi_6          $& $\underset{(0.0100)}{0.1043}$ & $\underset{(0.0006)}{0.1083}$ & - & - \\  
$\delta_{\alpha} $& $\underset{(0.0017)}{-0.0013}$ &  $\underset{(0.0021)}{0.7033}$ &  $\underset{(0.0061)}{-0.0012}$ & $\underset{(0.0021)}{0.2369}$ \\
$\phi_{\alpha}   $& $\underset{(0.0004)}{0.9984}$ & - &  $\underset{(0.0227)}{0.9946}$ & - \\
$\kappa_{\alpha} $& $\underset{(0.0250)}{0.5000}$ & - &  $\underset{(0.0798)}{0.0157}$ & -\\
$\delta_{\gamma_1} $& - & - &  $\underset{(2.6856)}{1.1954}$ & $\underset{(0.1994)}{11.8267}$ \\
$\phi_{\gamma_1}   $& - & - &  $\underset{(0.1418)}{0.9498}$ & - \\
$\kappa_{\gamma_1} $& - & - &  $\underset{(273.6449)}{917.6204}$ & - \\
$\delta_{\gamma_2} $& - & - &  $\underset{(0.0062)}{-0.0004}$ & $\underset{(0.0018)}{0.0010}$ \\
$\phi_{\gamma_2}   $& - & - &  $\underset{(0.3673)}{0.9766}$ & - \\
$\kappa_{\gamma_2} $& - & - &  $\underset{(0.0021)}{0.0070}$ & - \\
 \hline \hline
\end{tabular} 
\caption{Estimated parameters, standard errors in brackets. 
\label{tab:estimates}}
\end{table}

\newpage

\begin{table}[h]
\centering
 \begin{tabular}{c c c}
 \hline \hline
 & \textbf{COVID-19} & \textbf{Defaults} \\ 
 &  TV-PARX $-$ PARX &  TV-PARX $-$ PARX \\ 
 \hline  
log-likelihood &  422.53 & 15.11 \\ 
AIC  &  -841.05	& -18.21 \\ 
HQC  &  -837.79 & -8.94  \\ 
BIC  &  -832.78 &  5.10 \\ 
RMSE &   -14.29 & -0.01 \\ 
  \hline \hline
\end{tabular} 
\caption{Difference between the information criteria and the Root Mean Squared Error (RMSE) of the estimated TV-PAR and PAR models for the empirical analysis of the daily number of new COVID-19 cases in Italy and the number of monthly corporate defaults in the US.  
\label{tab:empirical}}
\end{table}

\newpage

\begin{figure}[h]
	\centering
	 \begin{tabular}{c c c}
	  \multicolumn{3}{c}{$T=250$} \\
	\includegraphics[width=3cm]{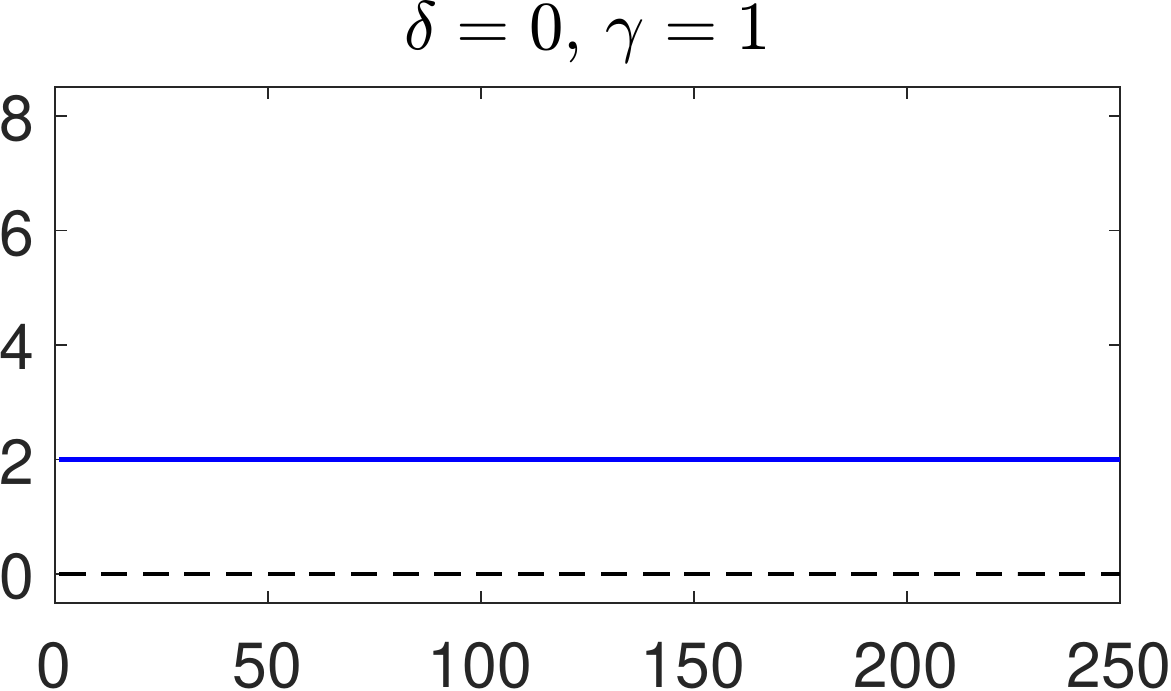} & \includegraphics[width=3cm]{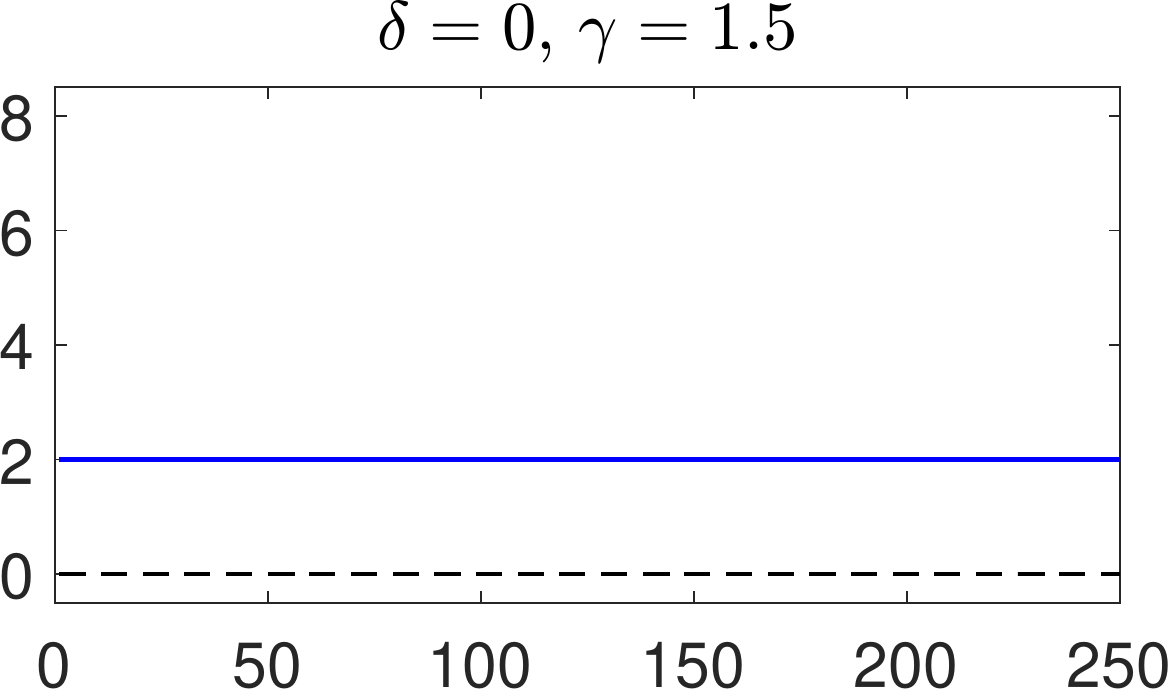} & \includegraphics[width=3cm]{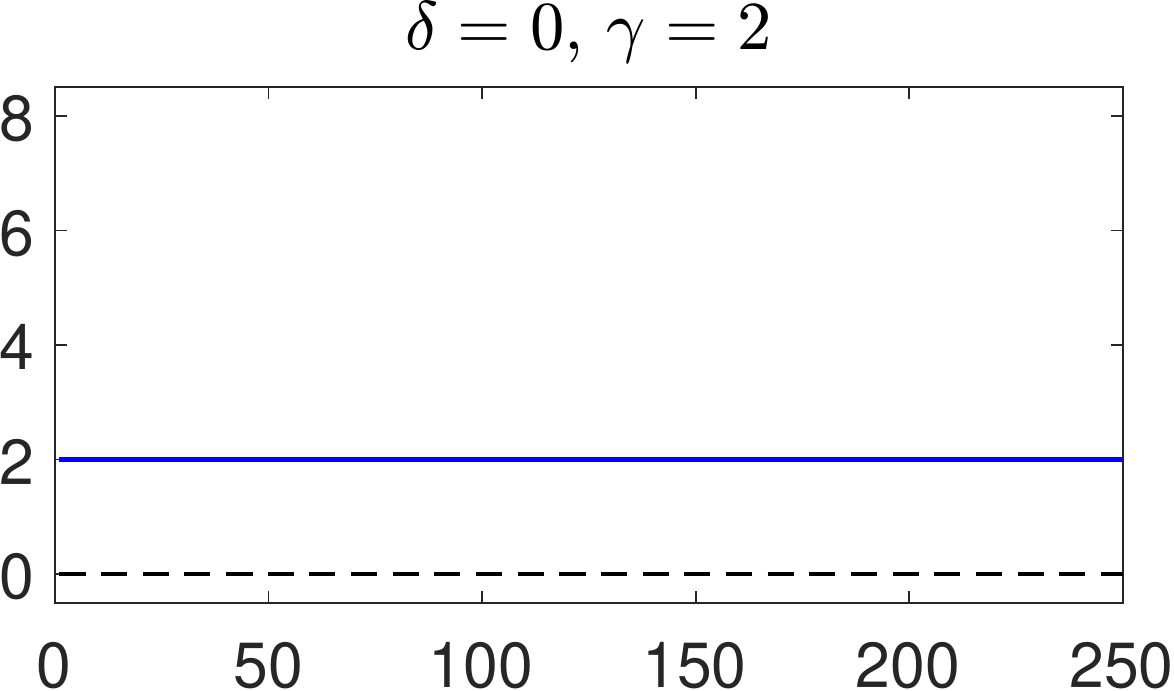} \\
		\includegraphics[width=3cm]{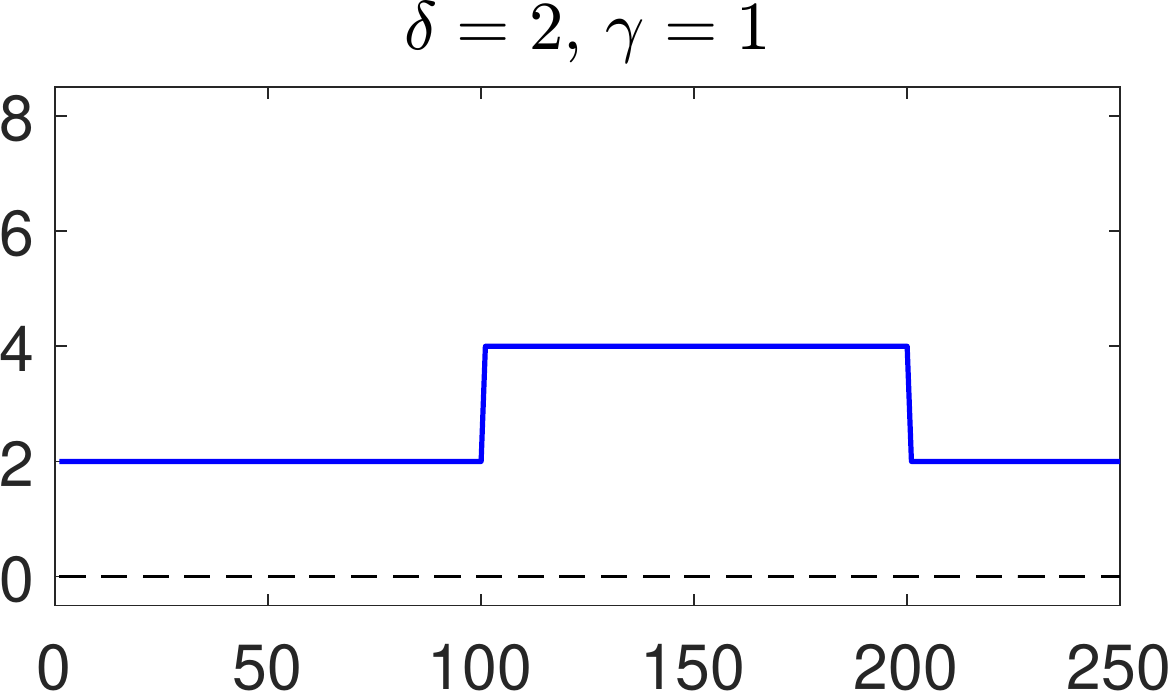} & \includegraphics[width=3cm]{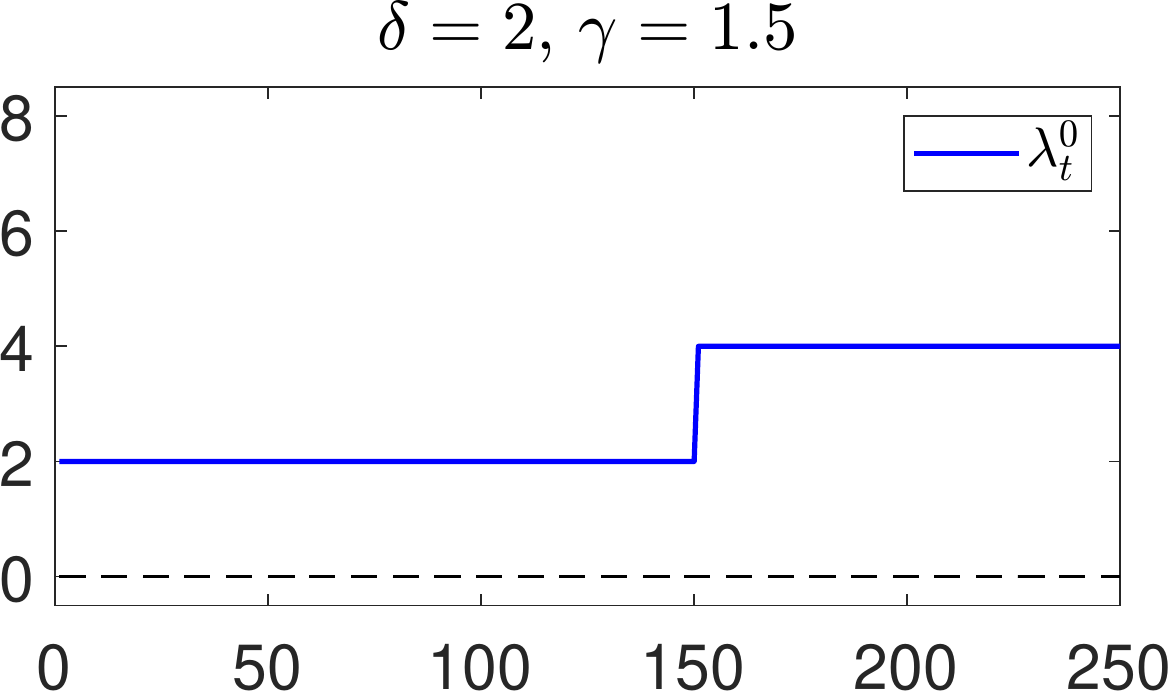} & \includegraphics[width=3cm]{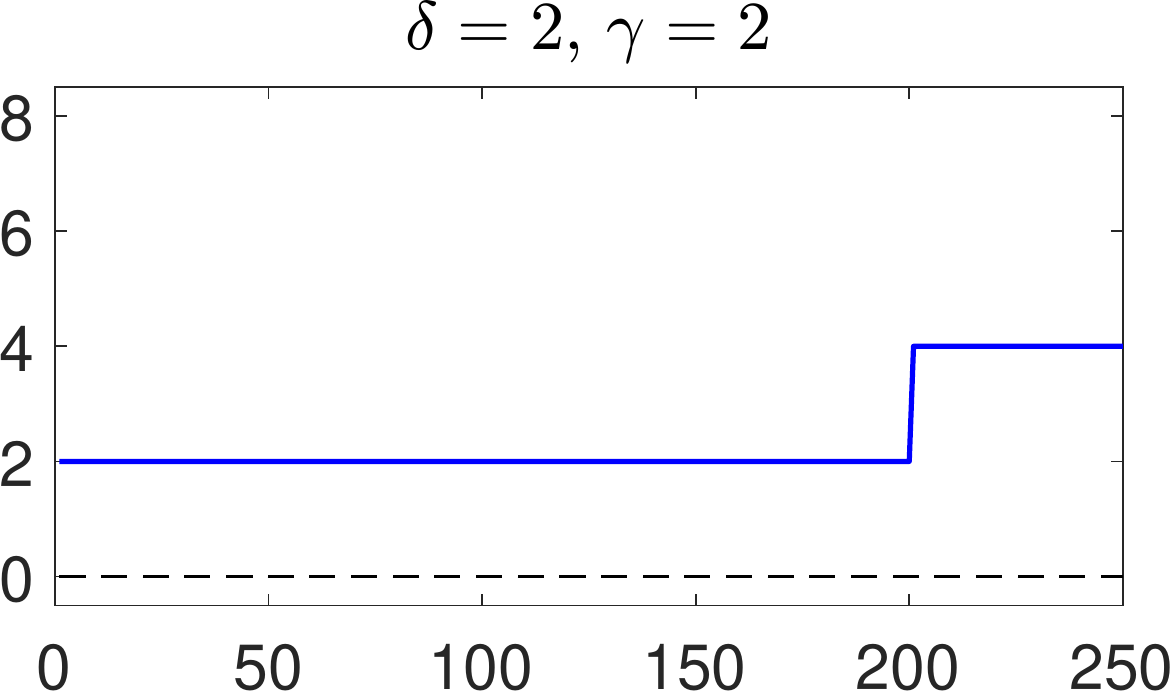} \\
			\includegraphics[width=3cm]{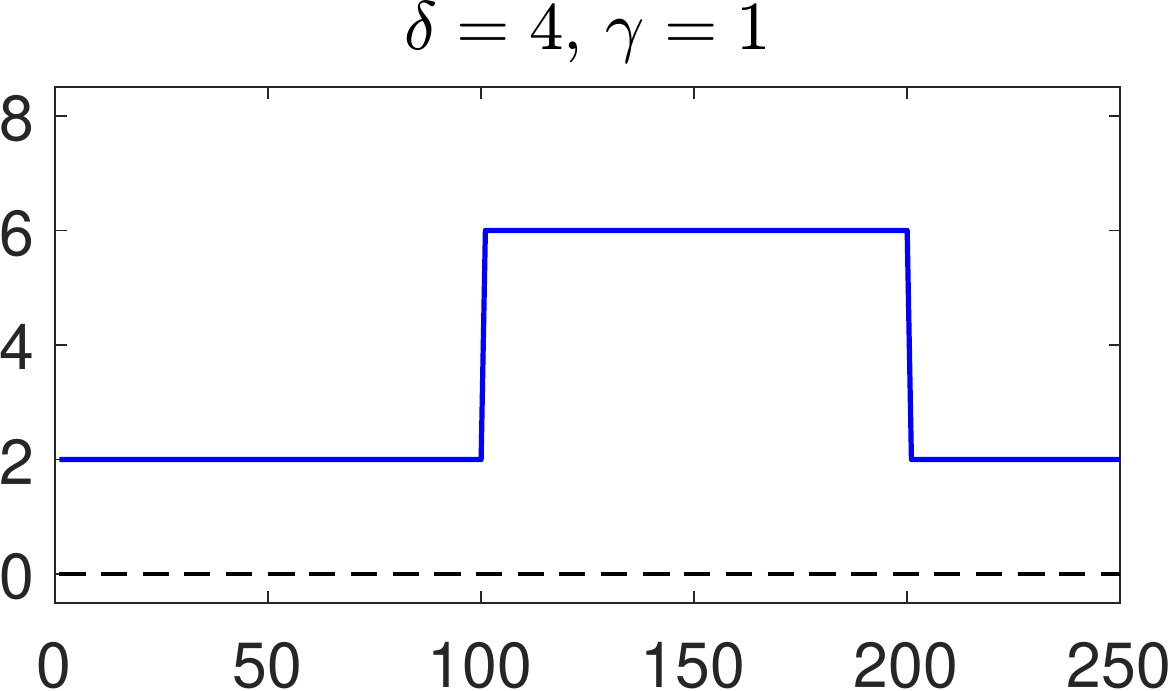} & \includegraphics[width=3cm]{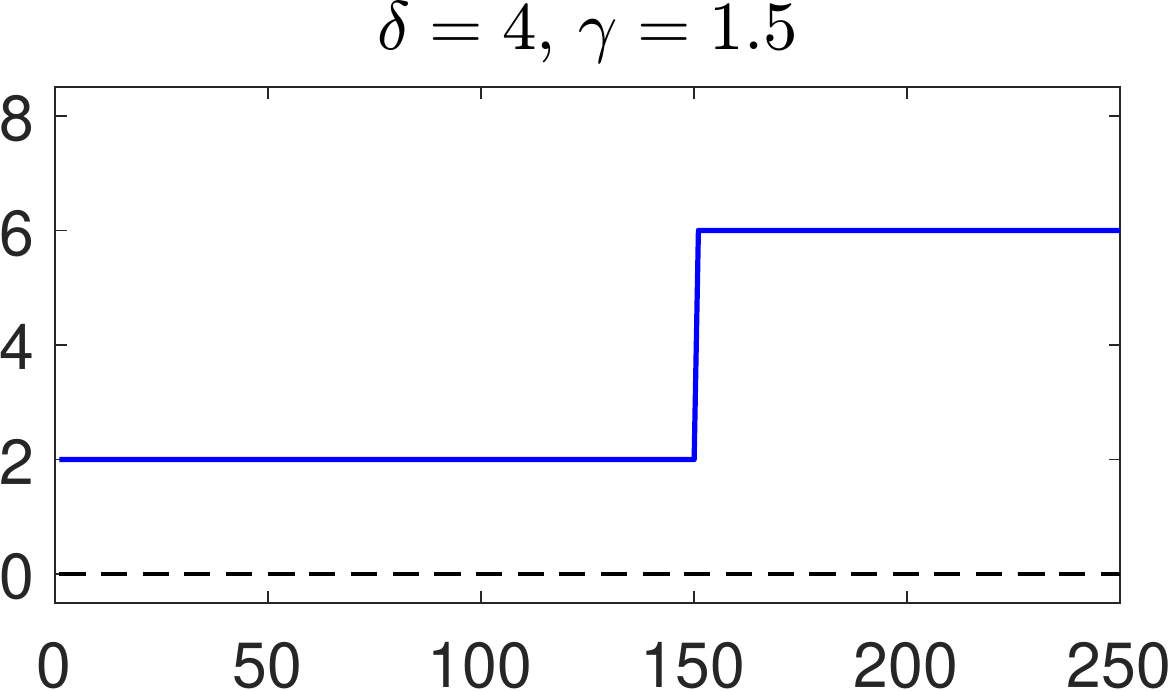} & \includegraphics[width=3cm]{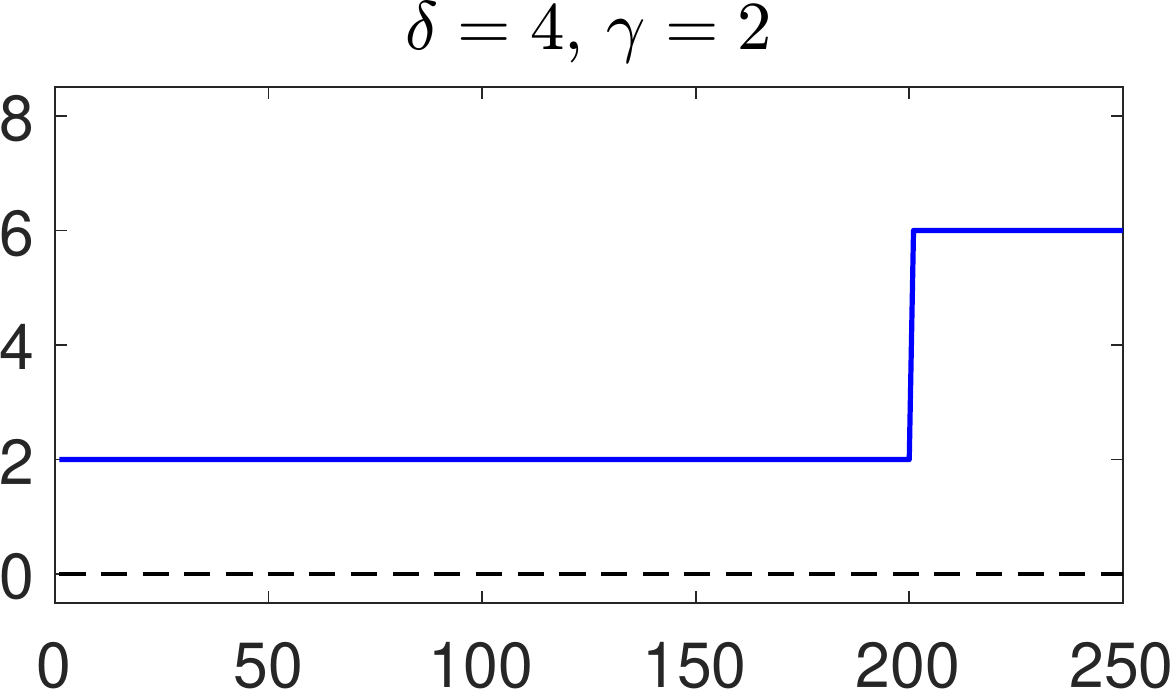} \\
				  \multicolumn{3}{c}{$T=500$} \\
	\includegraphics[width=3cm]{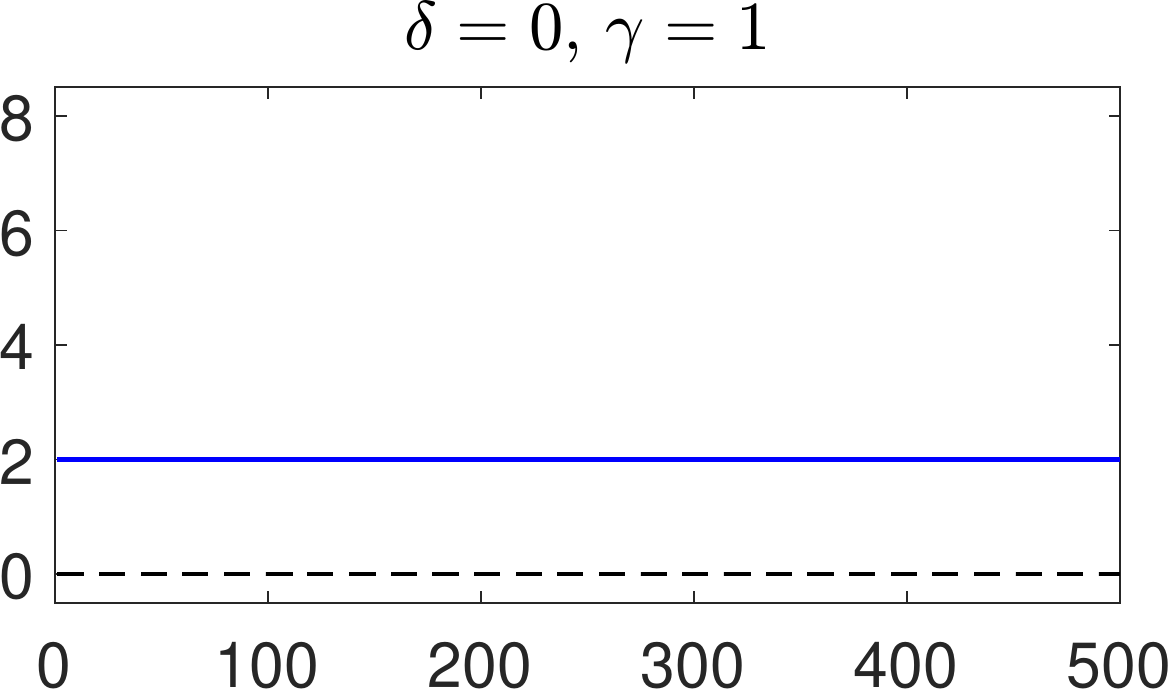} & \includegraphics[width=3cm]{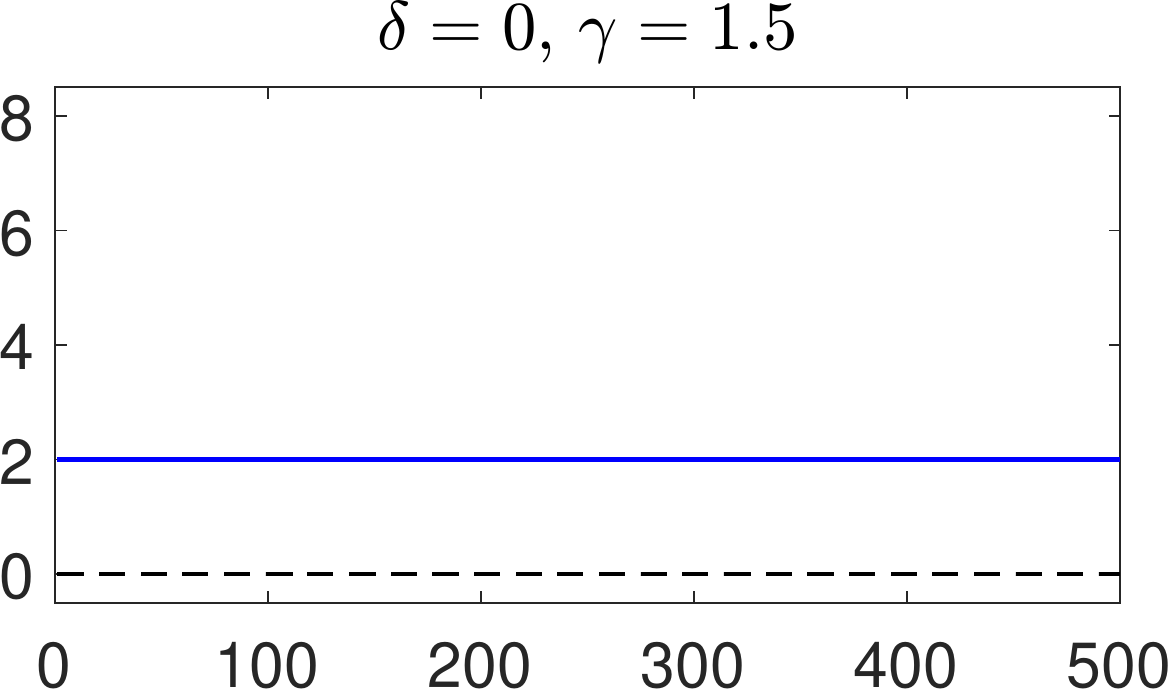} & \includegraphics[width=3cm]{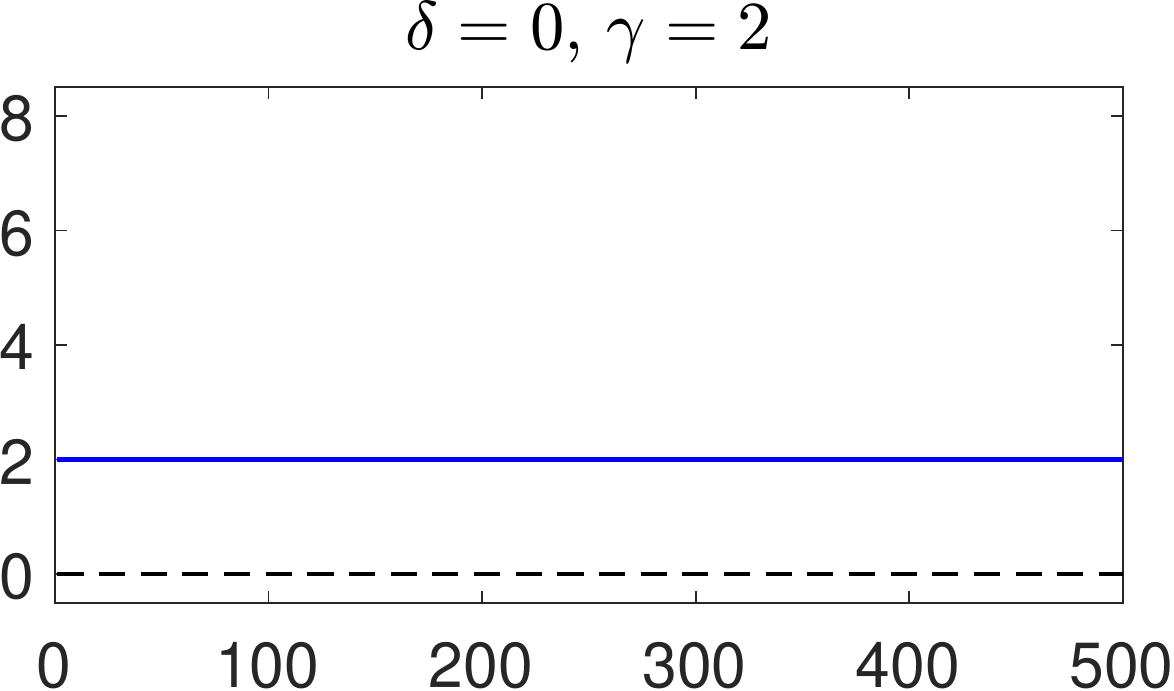} \\
		\includegraphics[width=3cm]{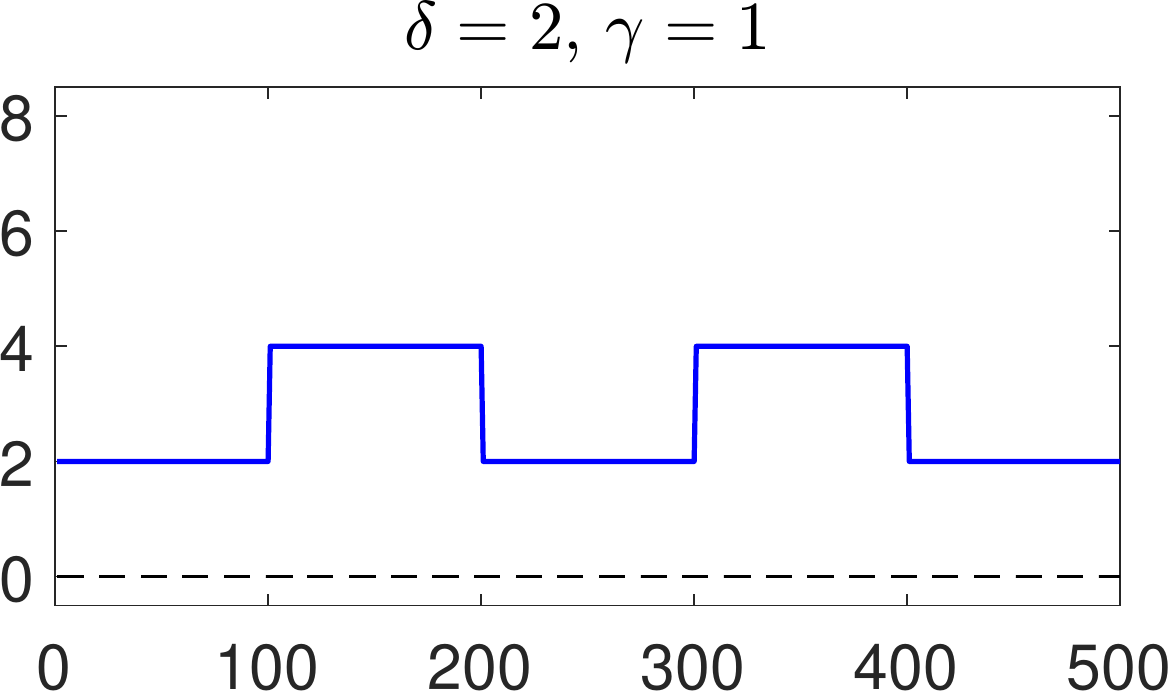} & \includegraphics[width=3cm]{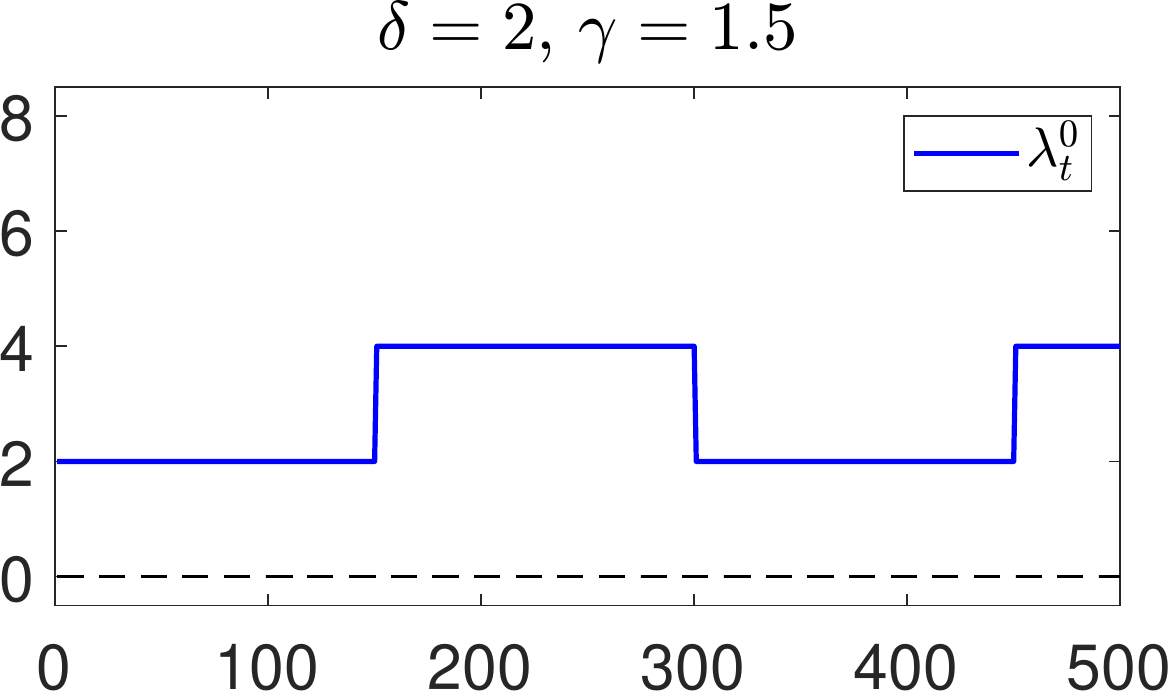} & \includegraphics[width=3cm]{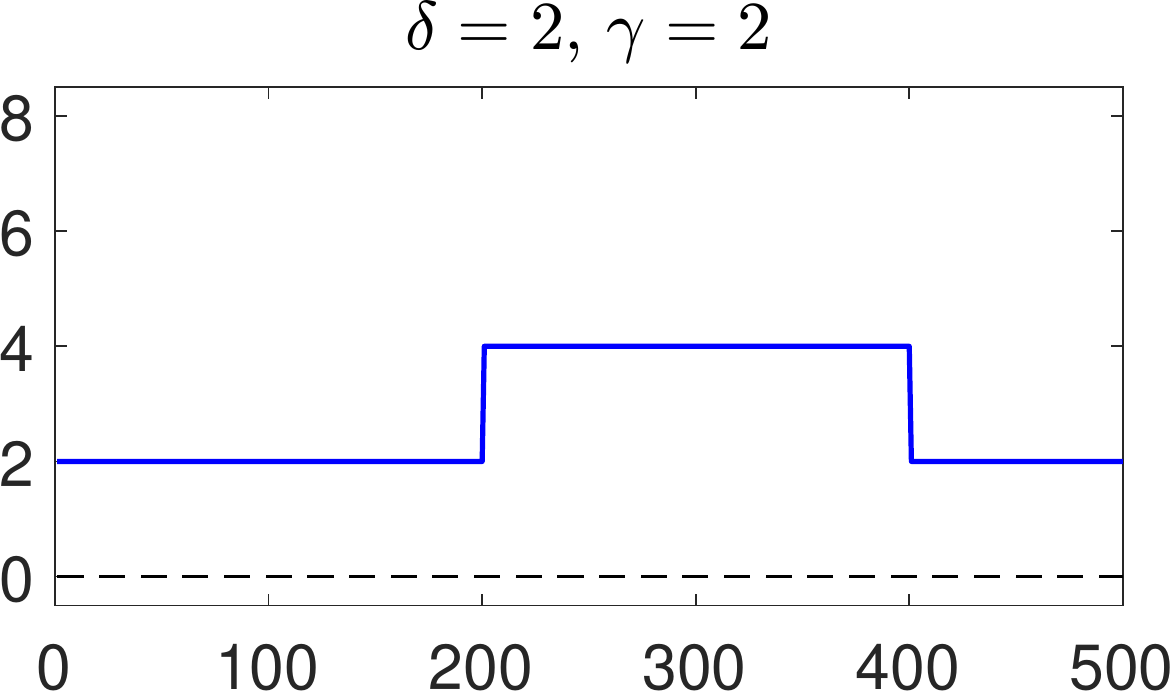} \\
			\includegraphics[width=3cm]{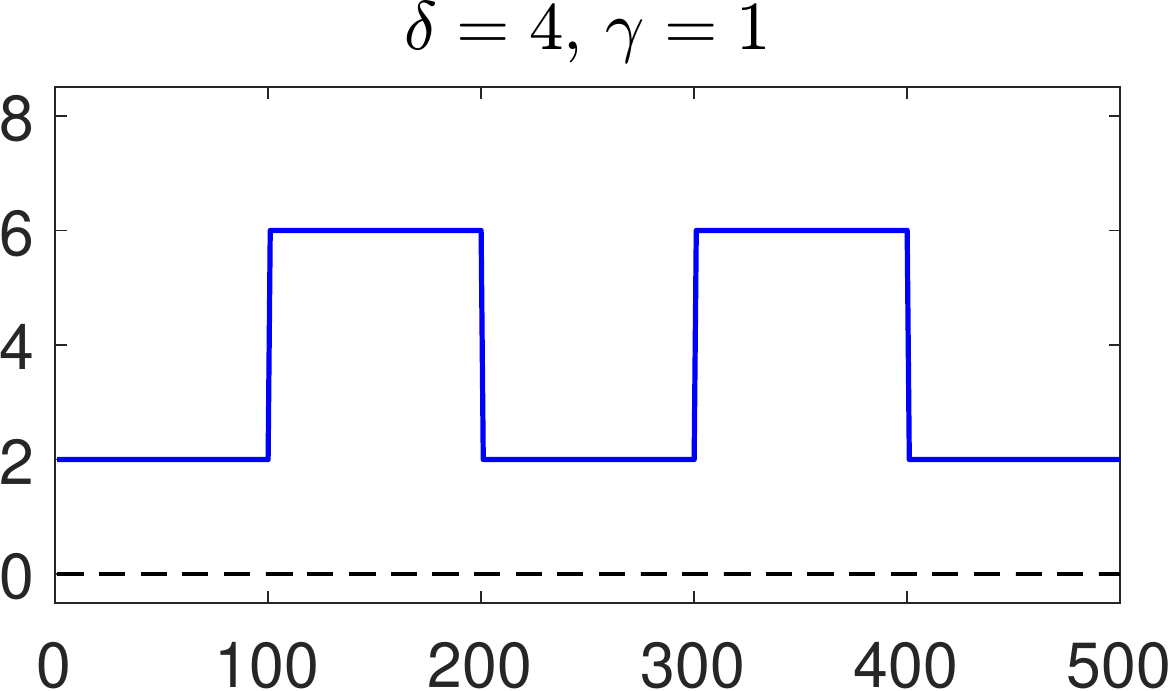} & \includegraphics[width=3cm]{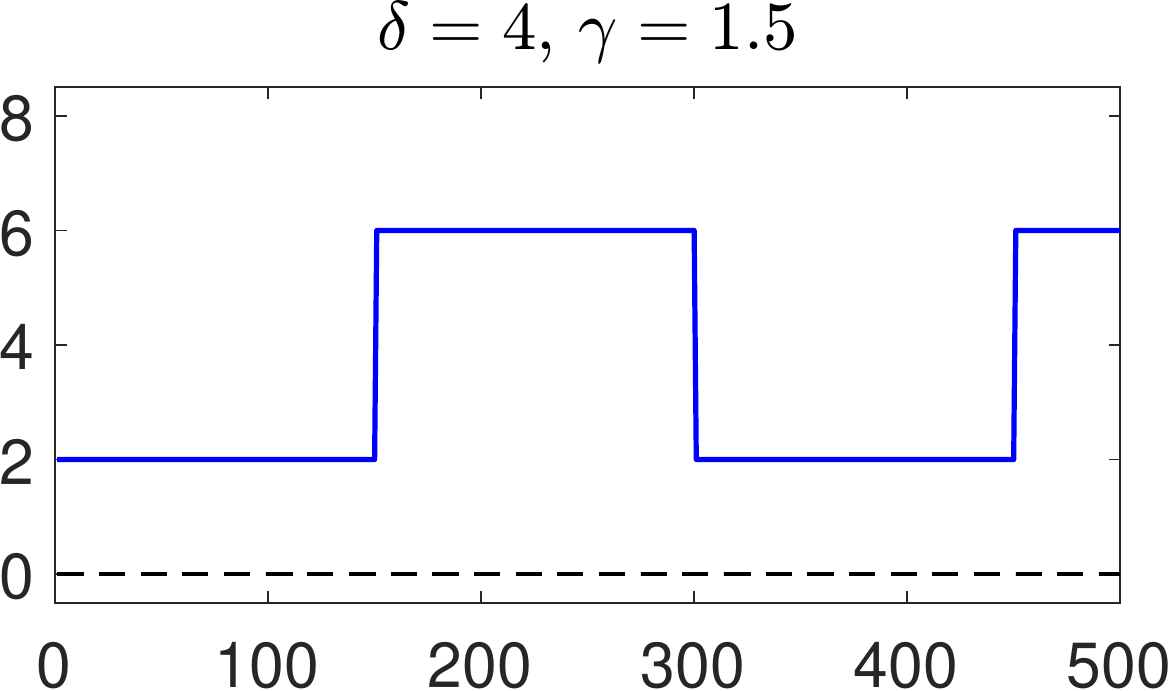} & \includegraphics[width=3cm]{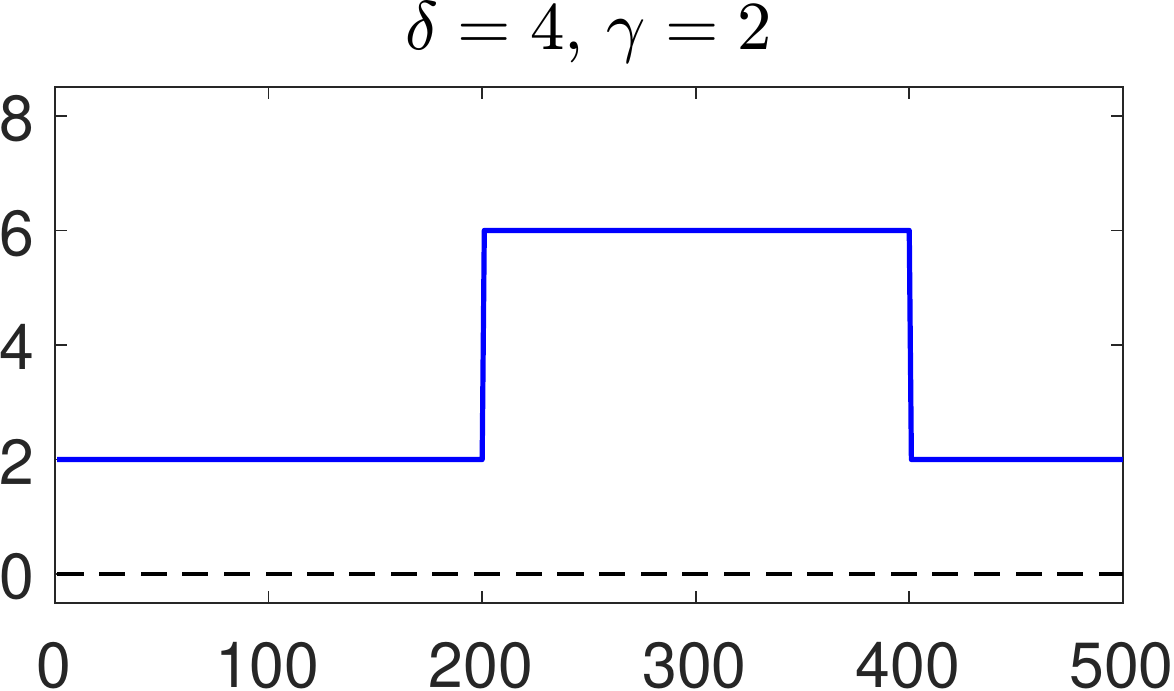} \\
				  \multicolumn{3}{c}{$T=1000$} \\
	\includegraphics[width=3cm]{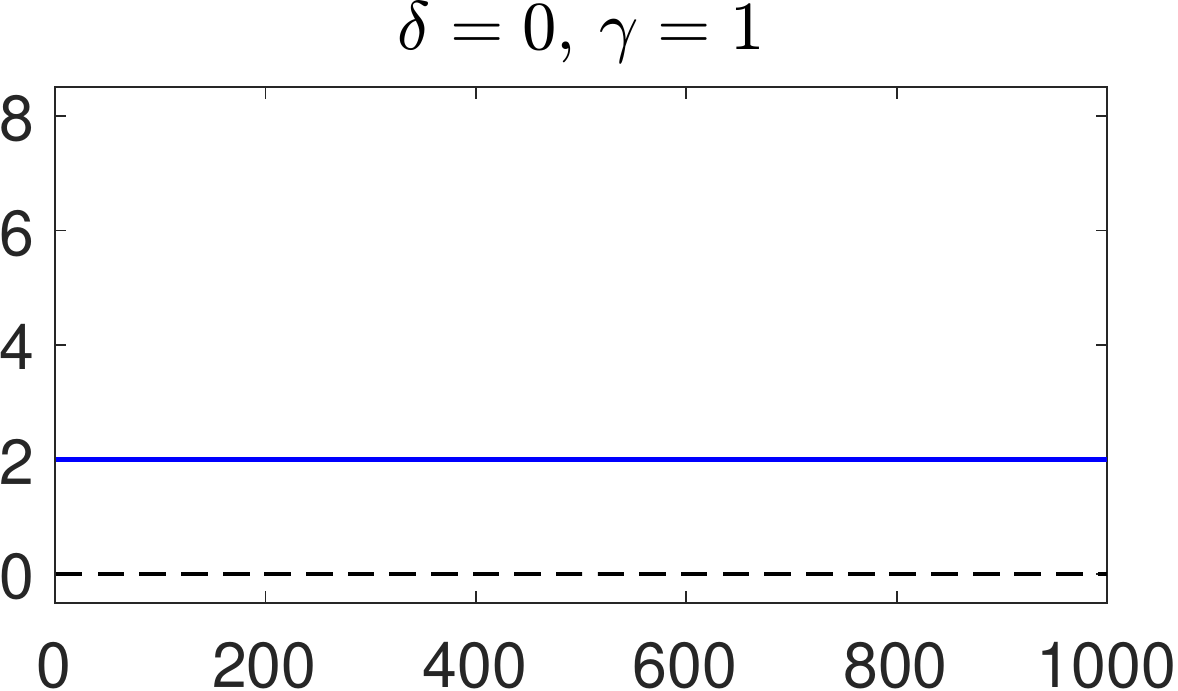} & \includegraphics[width=3cm]{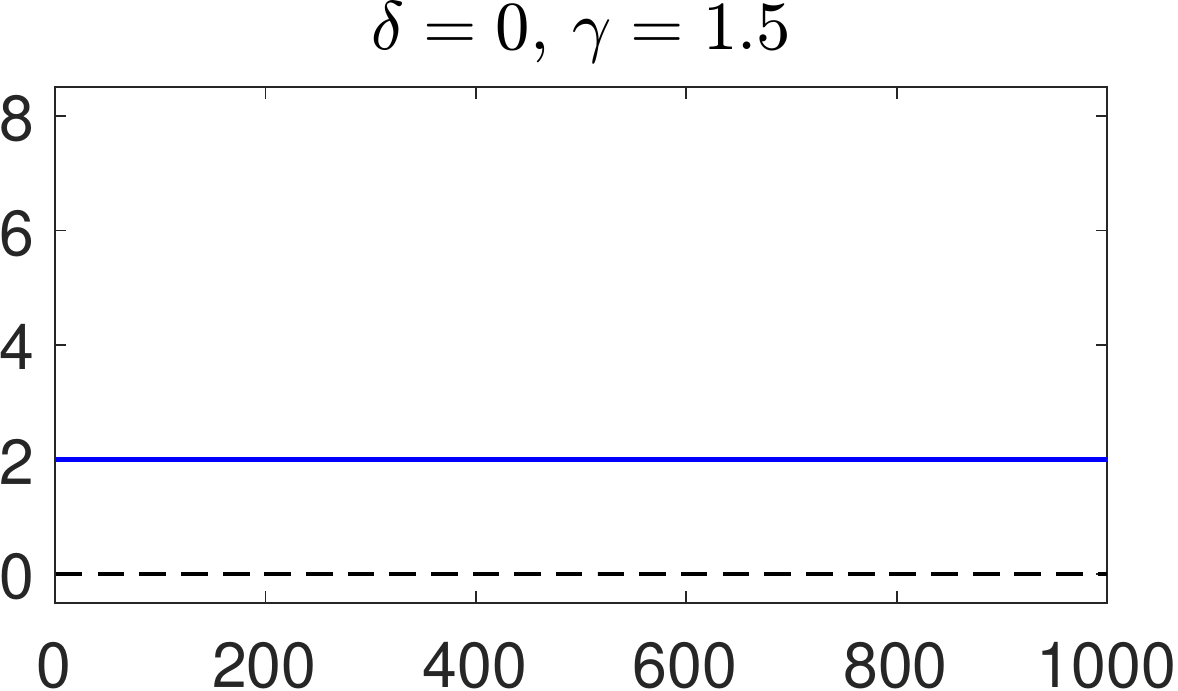} & \includegraphics[width=3cm]{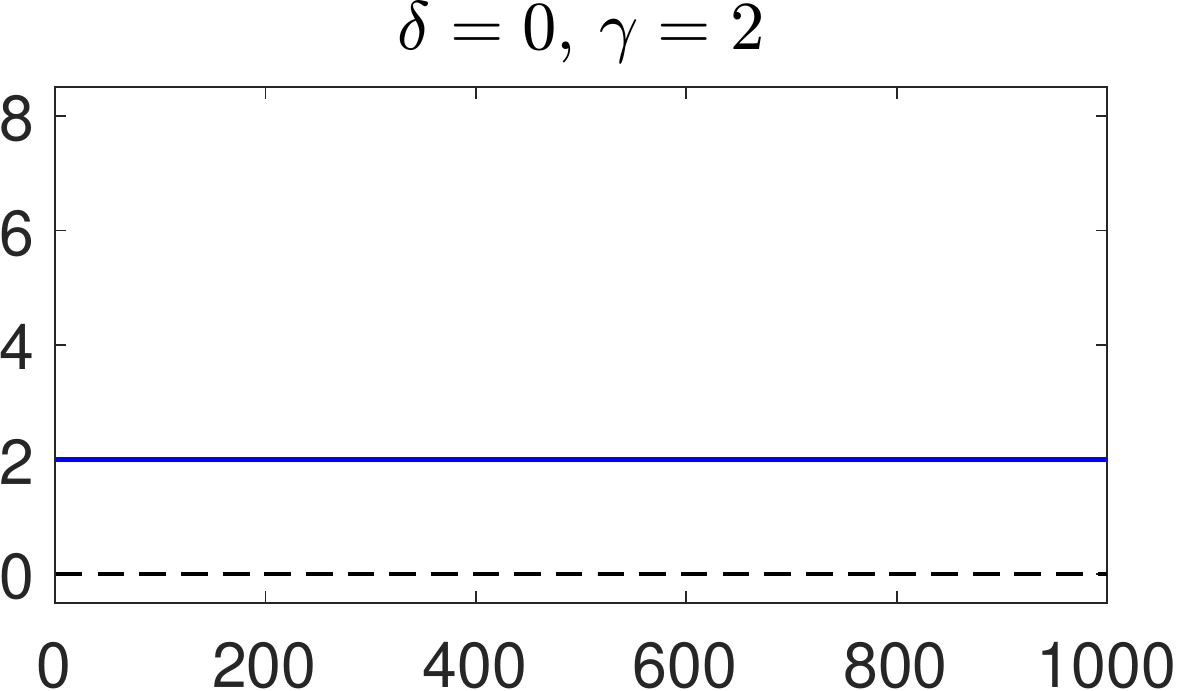} \\
		\includegraphics[width=3cm]{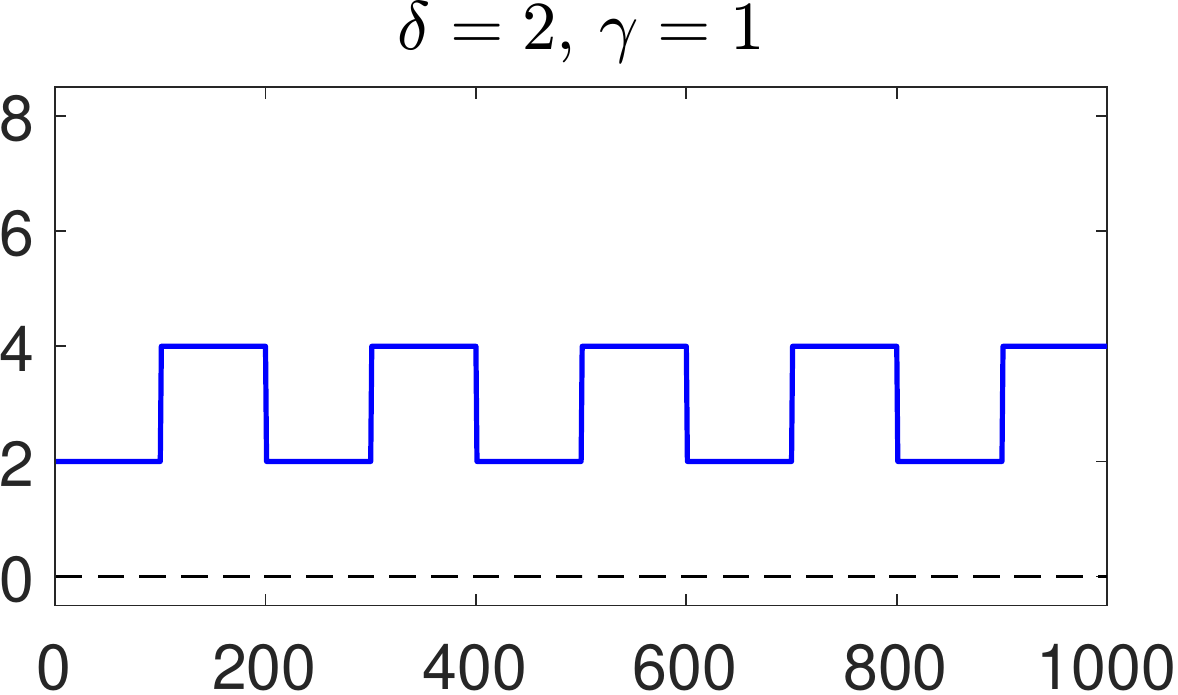} & \includegraphics[width=3cm]{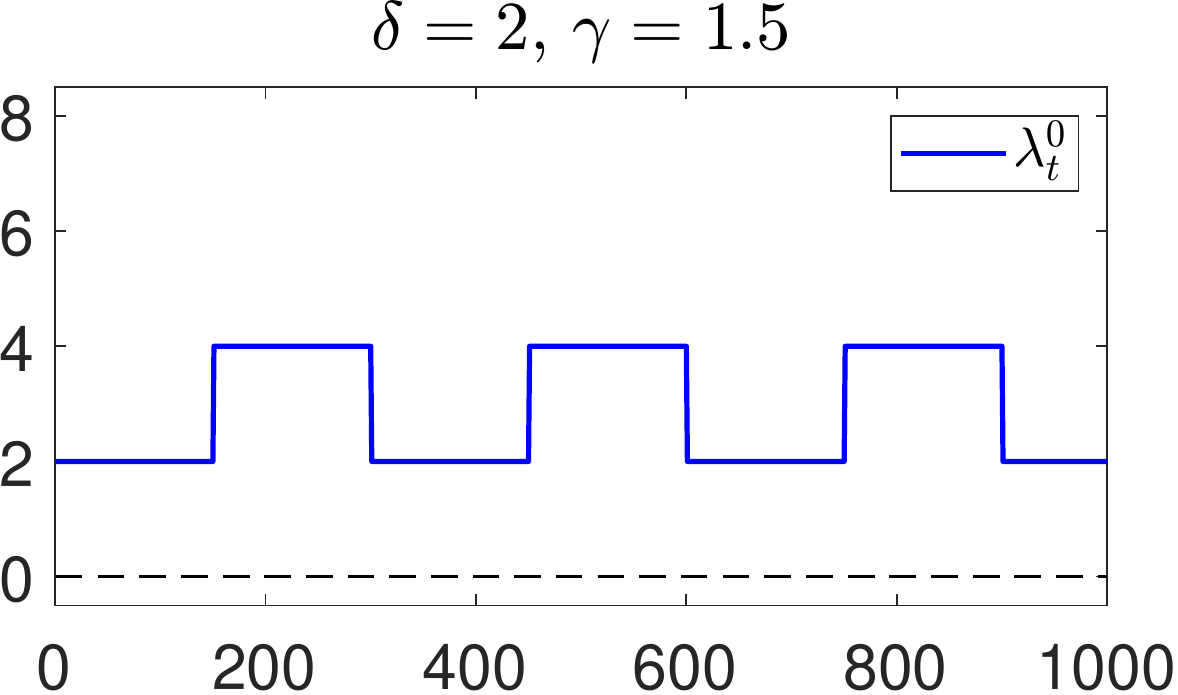} & \includegraphics[width=3cm]{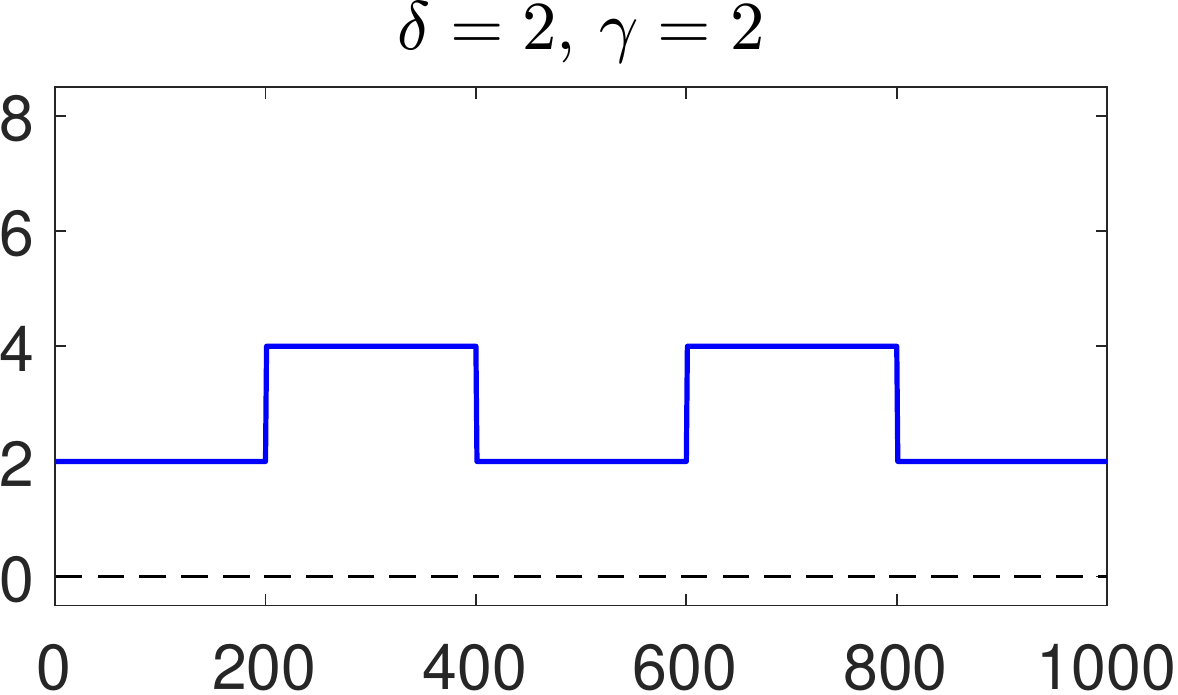} \\
			\includegraphics[width=3cm]{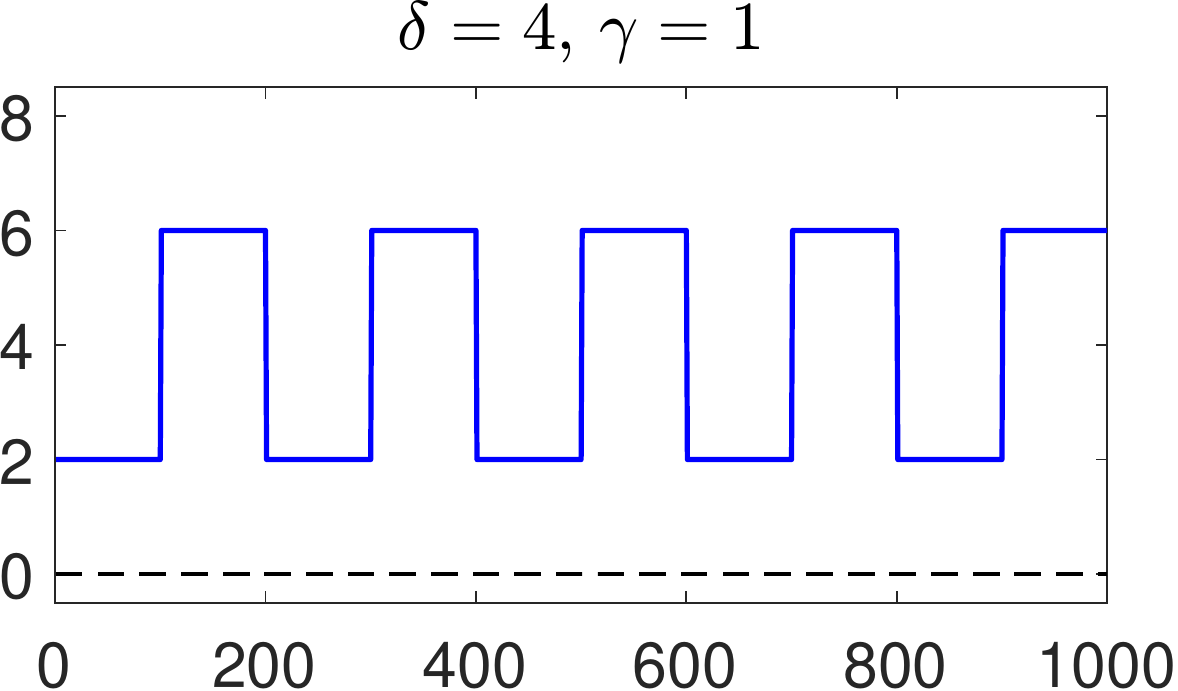} & \includegraphics[width=3cm]{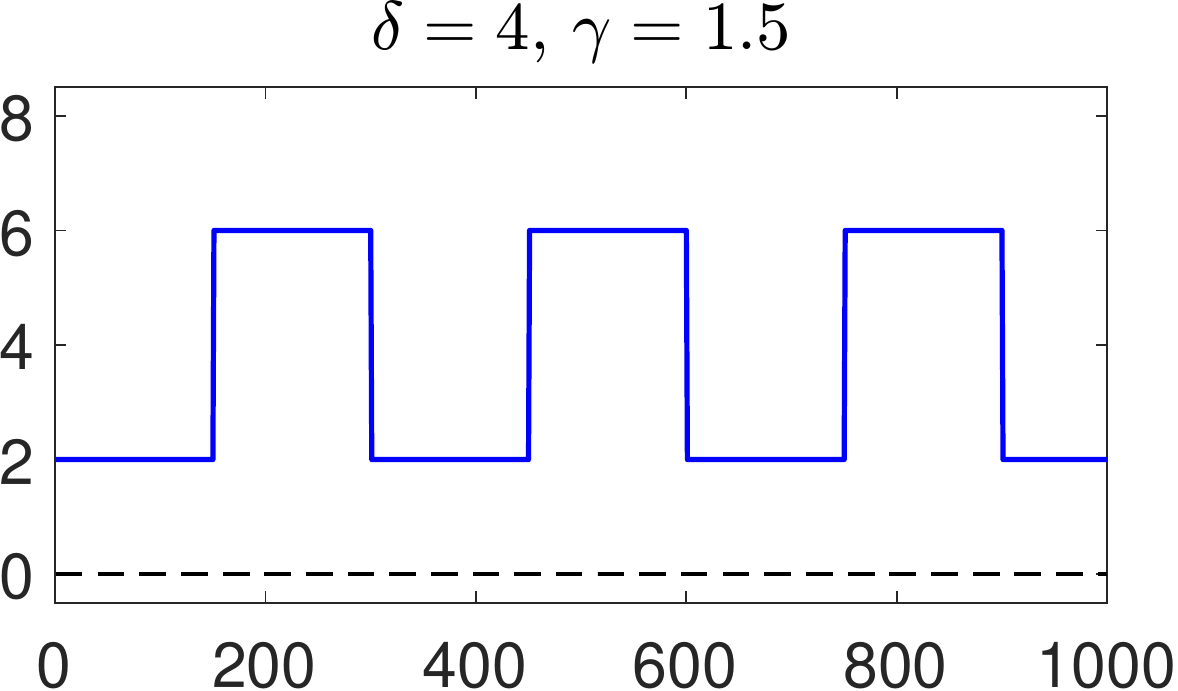} & \includegraphics[width=3cm]{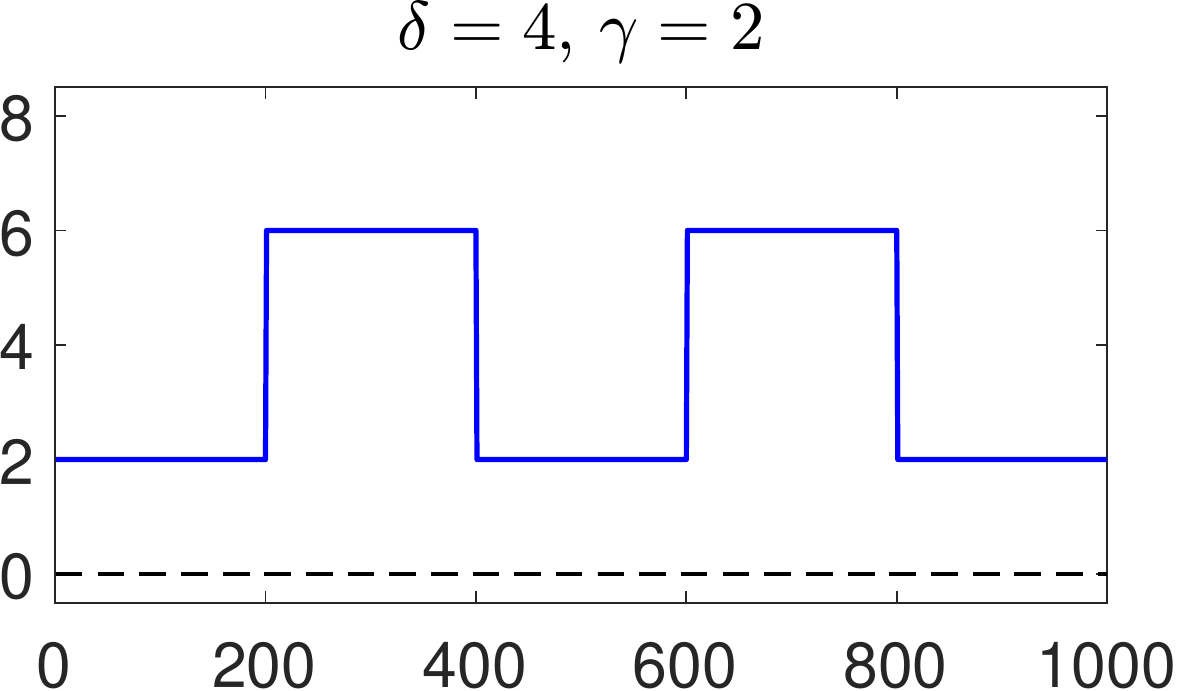} \\
\end{tabular} 
		\caption{Dynamic of $\lambda_t^0$ for $\delta \in \{0,2,4 \}$ and $\gamma \in \{1, 1.5, 2\}$. \label{fig:lambda_0}}
\end{figure}

\newpage

\begin{figure}[h]
	\centering
	 \begin{tabular}{c}
	  \multicolumn{1}{c}{$T=250$} \\
\includegraphics[width=10cm]{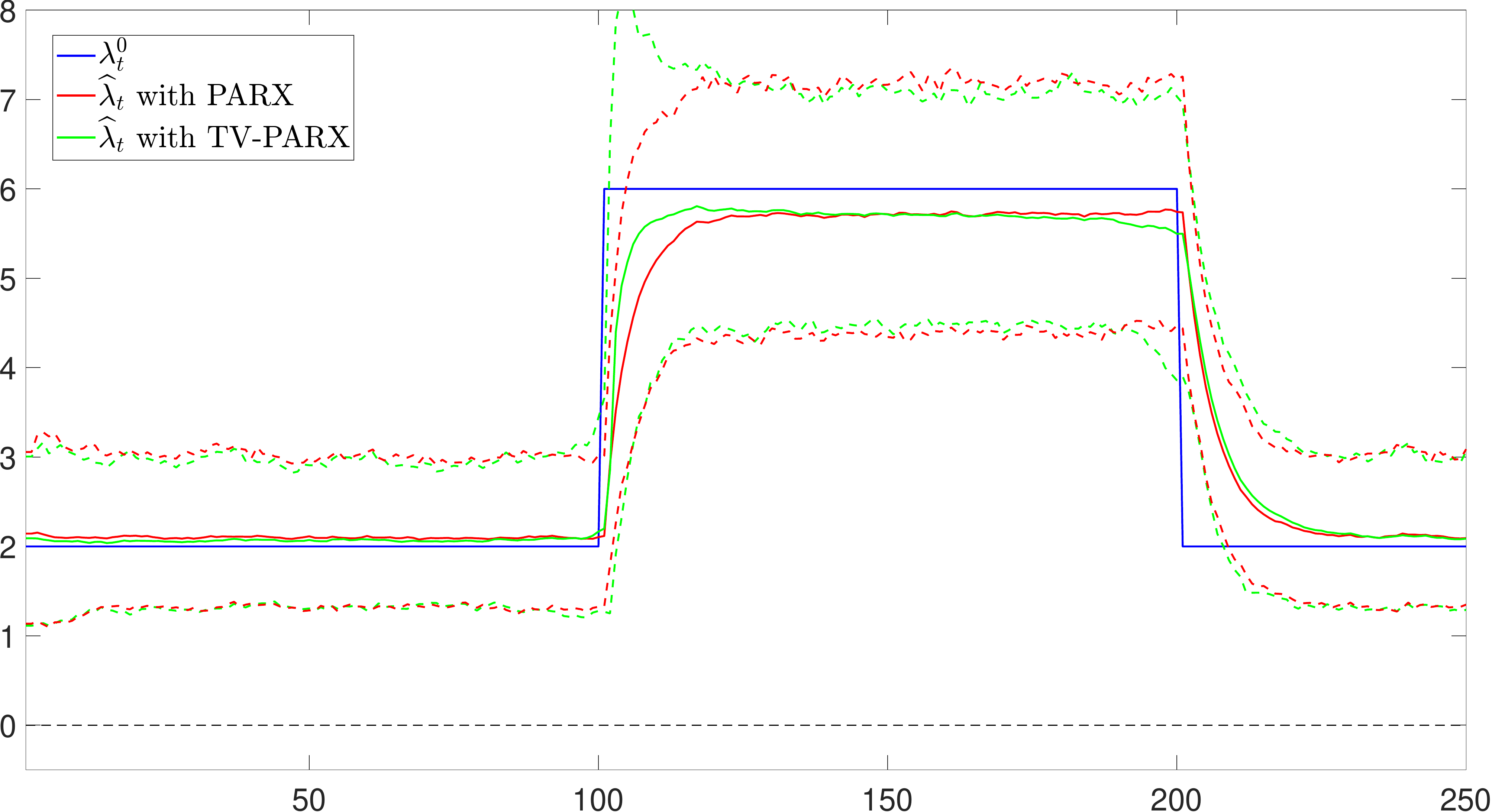} \\
	  \multicolumn{1}{c}{$T=500$} \\
\includegraphics[width=10cm]{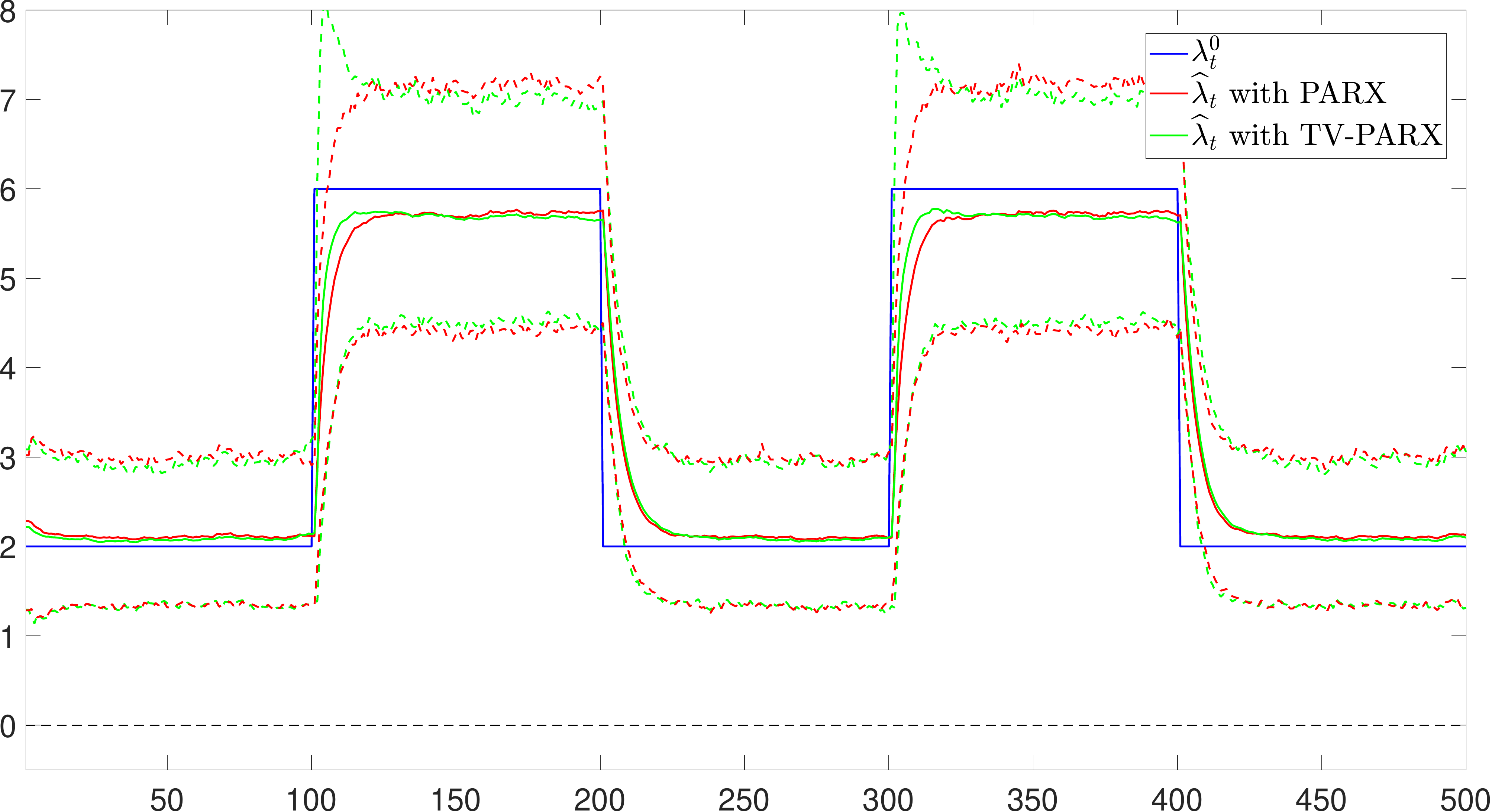} \\
	  \multicolumn{1}{c}{$T=1000$} \\
\includegraphics[width=10cm]{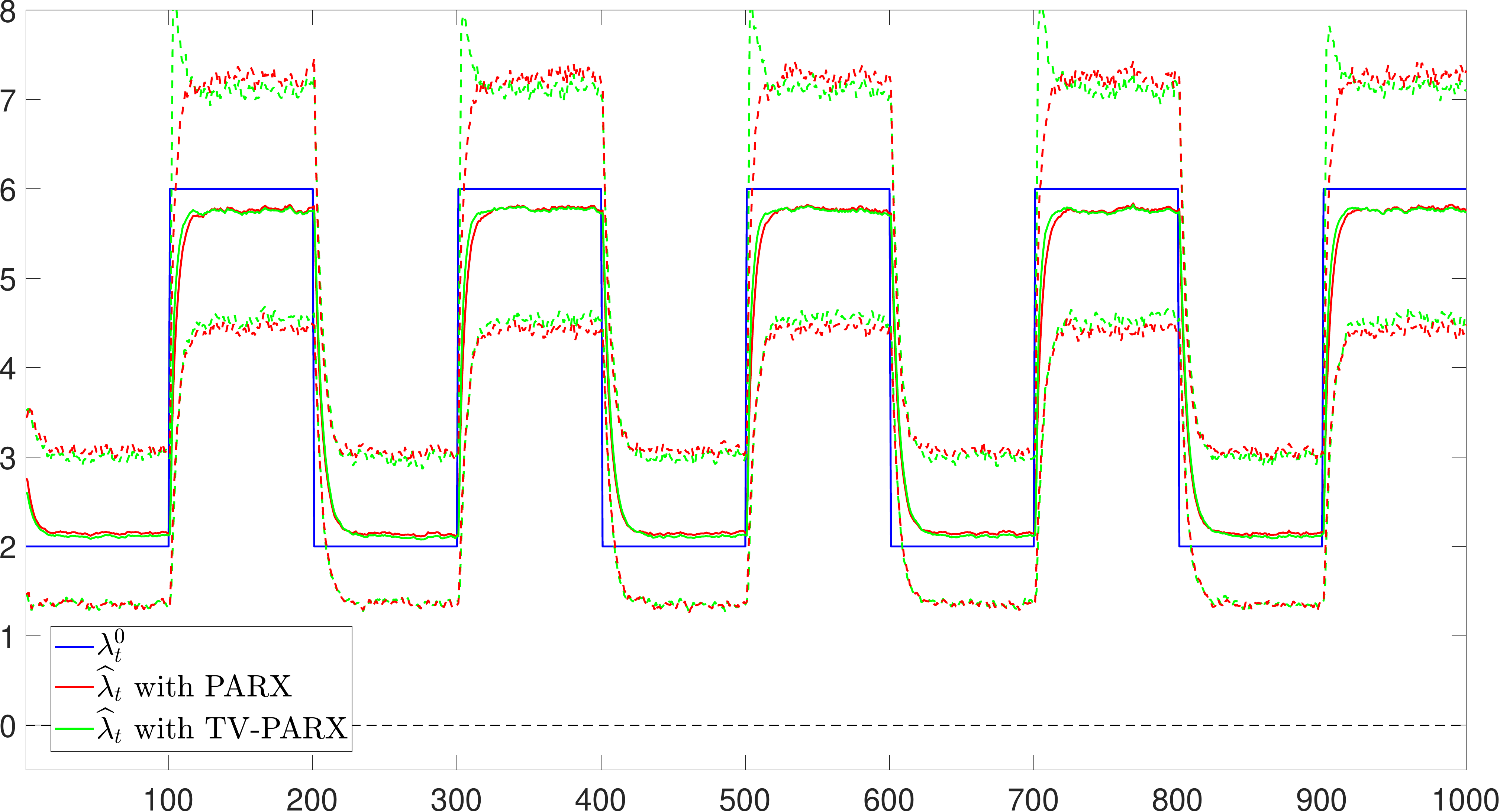} \\
\end{tabular} 
\caption{Estimation of $\lambda_t$ for $\delta = 4 $ and $\gamma = 1$ with the 95 \% confidence bands obtained with the TV-PARX (in green) and the static PARX (in red). \label{fig:mc_est}}
\end{figure}


\newpage

\begin{figure}[h]
\includegraphics[width=16cm]{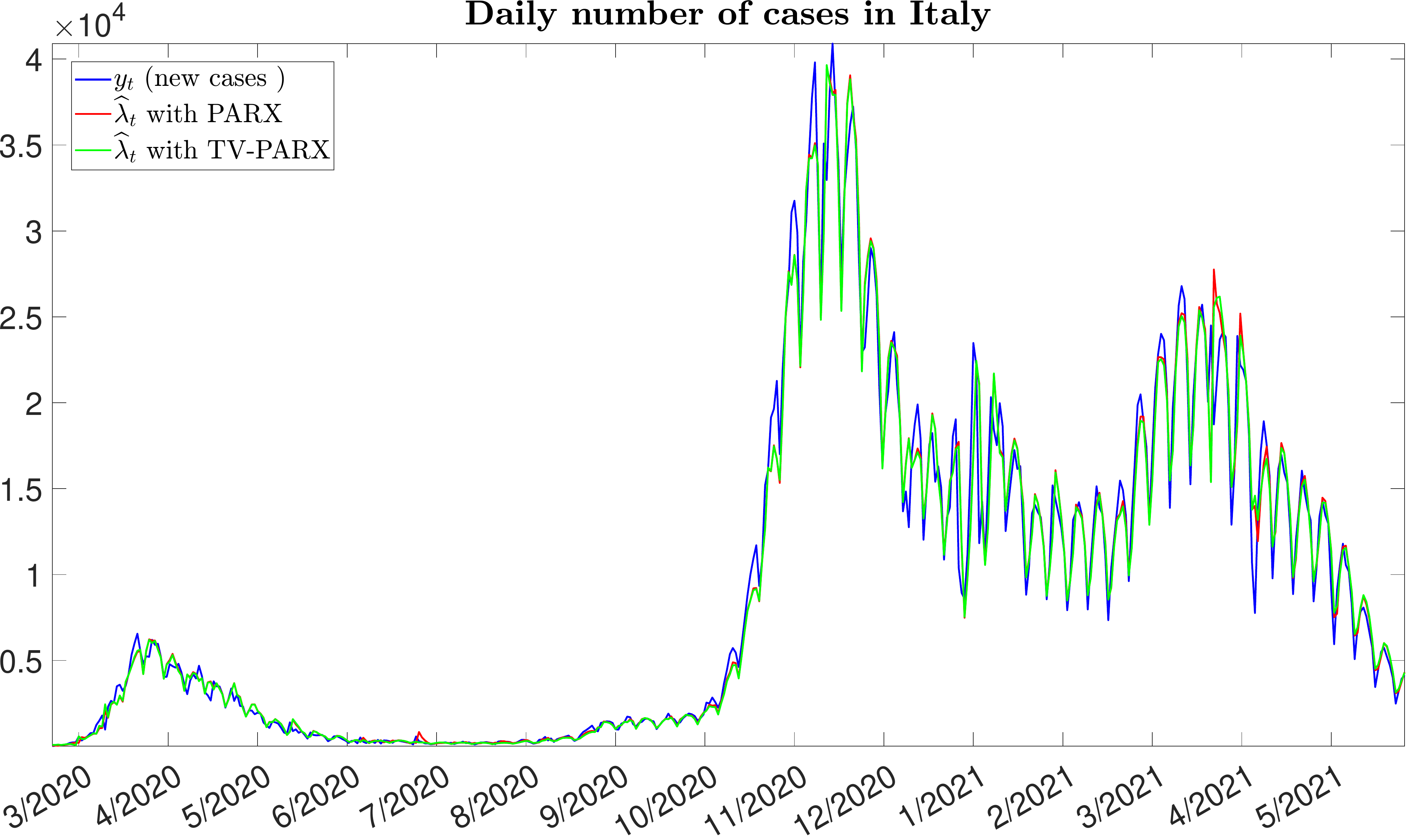}
\caption{Fitted vs actual daily number of cases of Covid-19 in Italy. \label{fig:emp_fitted_actual}}
\end{figure}

\newpage

\begin{figure}[h]
\includegraphics[width=16cm]{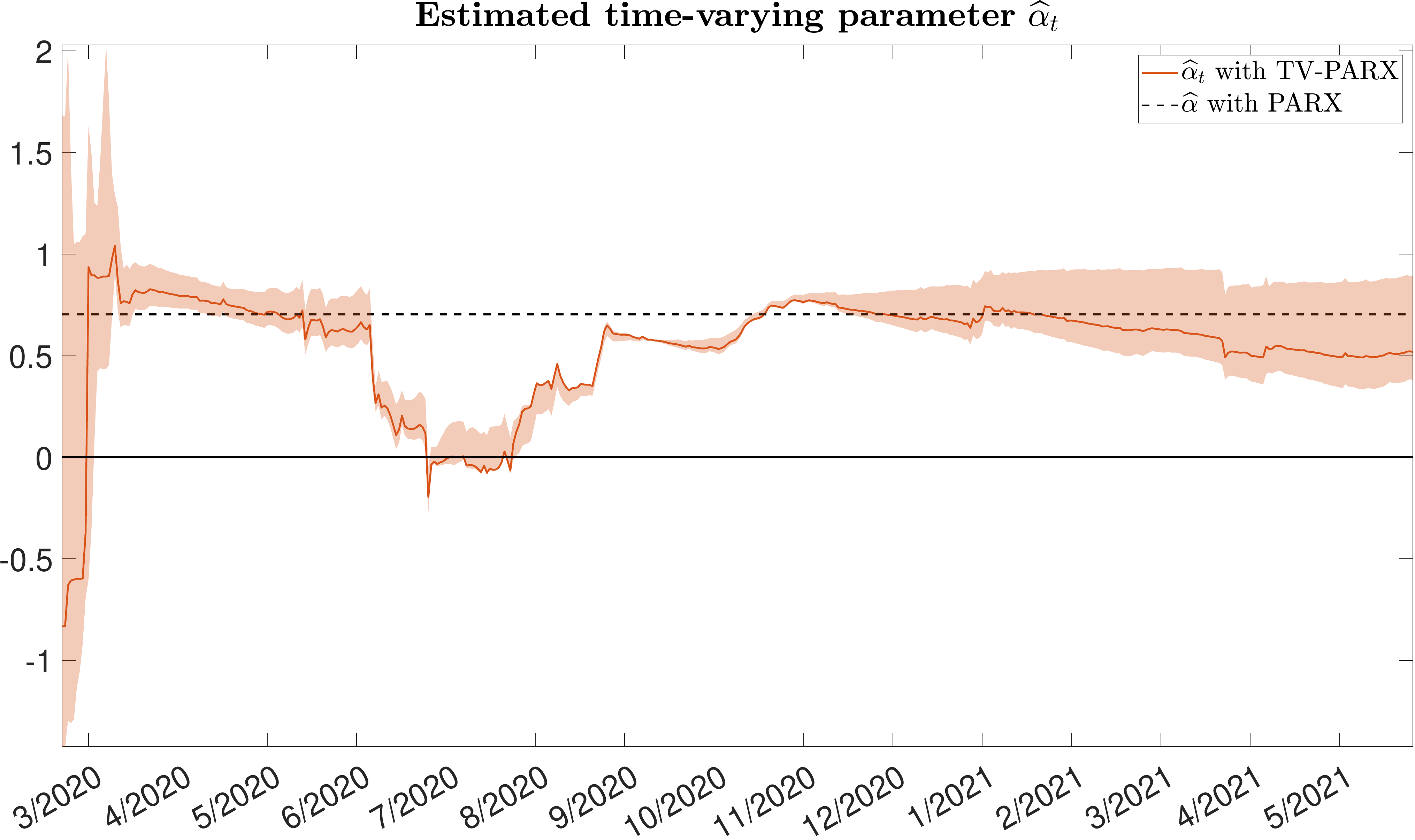}
\caption{Estimated time-varying parameter $\alpha_t$ in \eqref{covid} for the daily number of cases of Covid-19 in Italy. Shaded areas are the 95\% confidence intervals. \label{fig:emp_alpha}}
\end{figure}

\newpage

\begin{figure}[h]
\includegraphics[width=16cm]{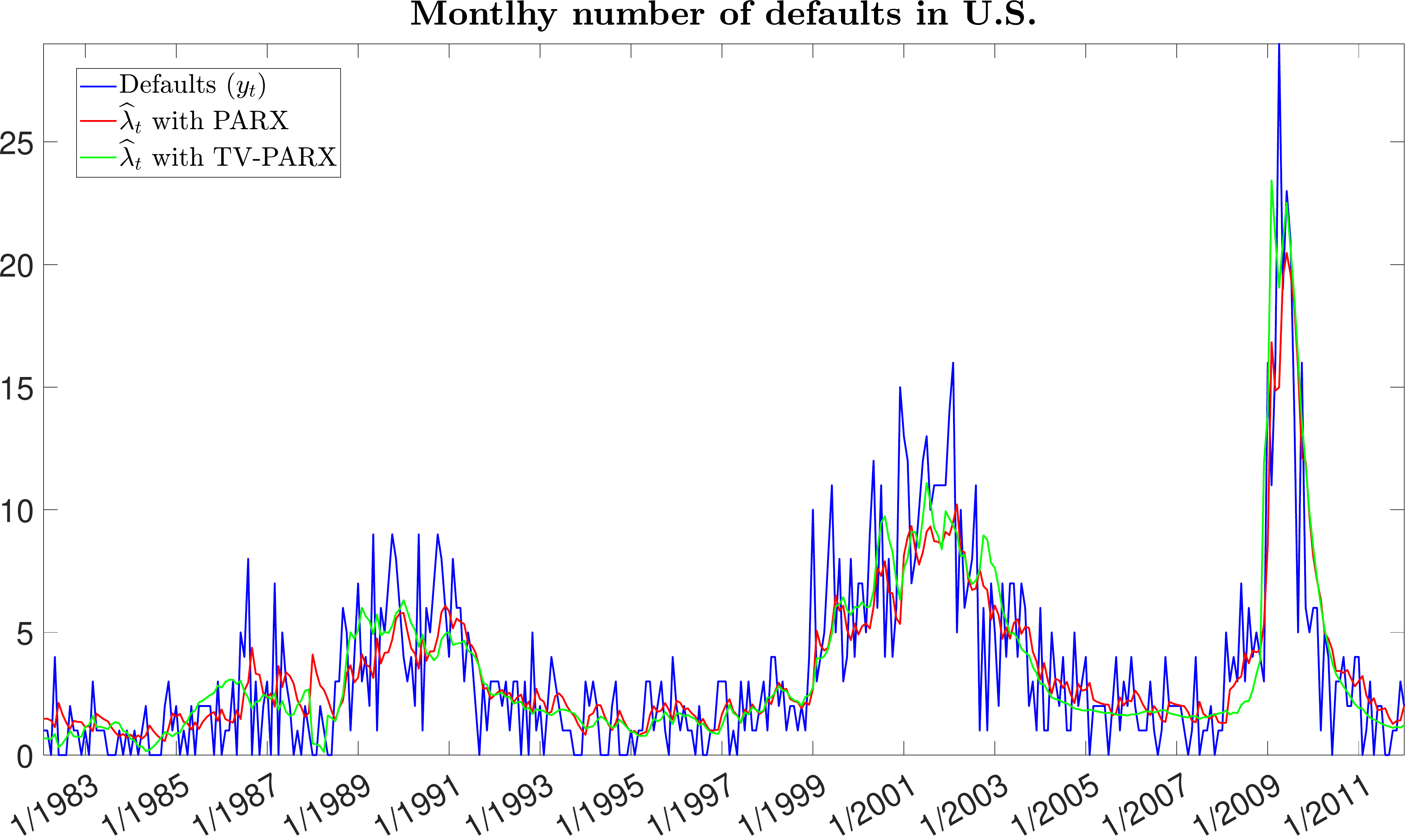}
\caption{Fitted vs actual monthly number of corporate defaults in US. \label{fig:emp_fitted_actual_defaults}}
\end{figure}

\newpage

\begin{figure}[h]
\begin{center}
\includegraphics[width=10cm]{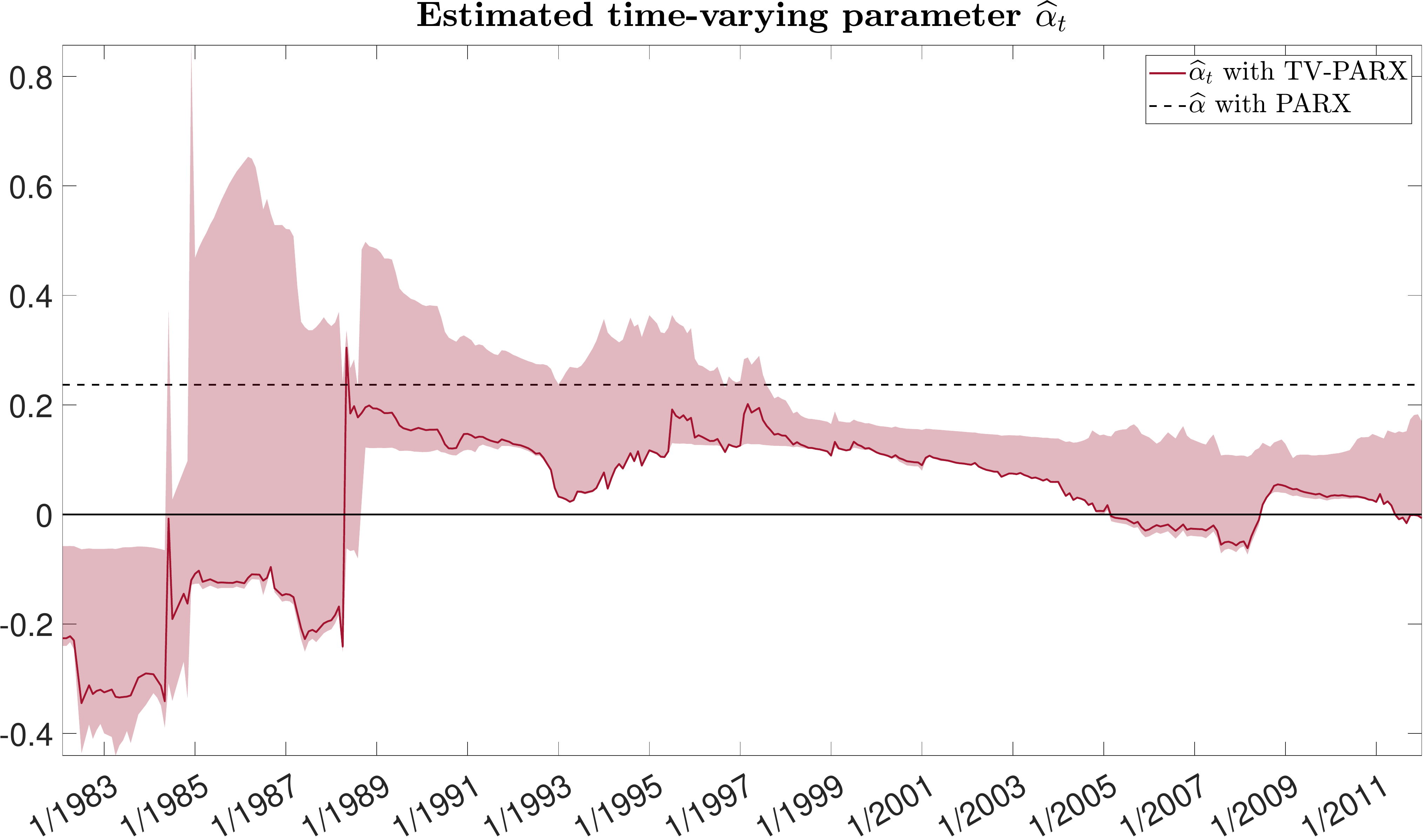} \\
\includegraphics[width=10cm]{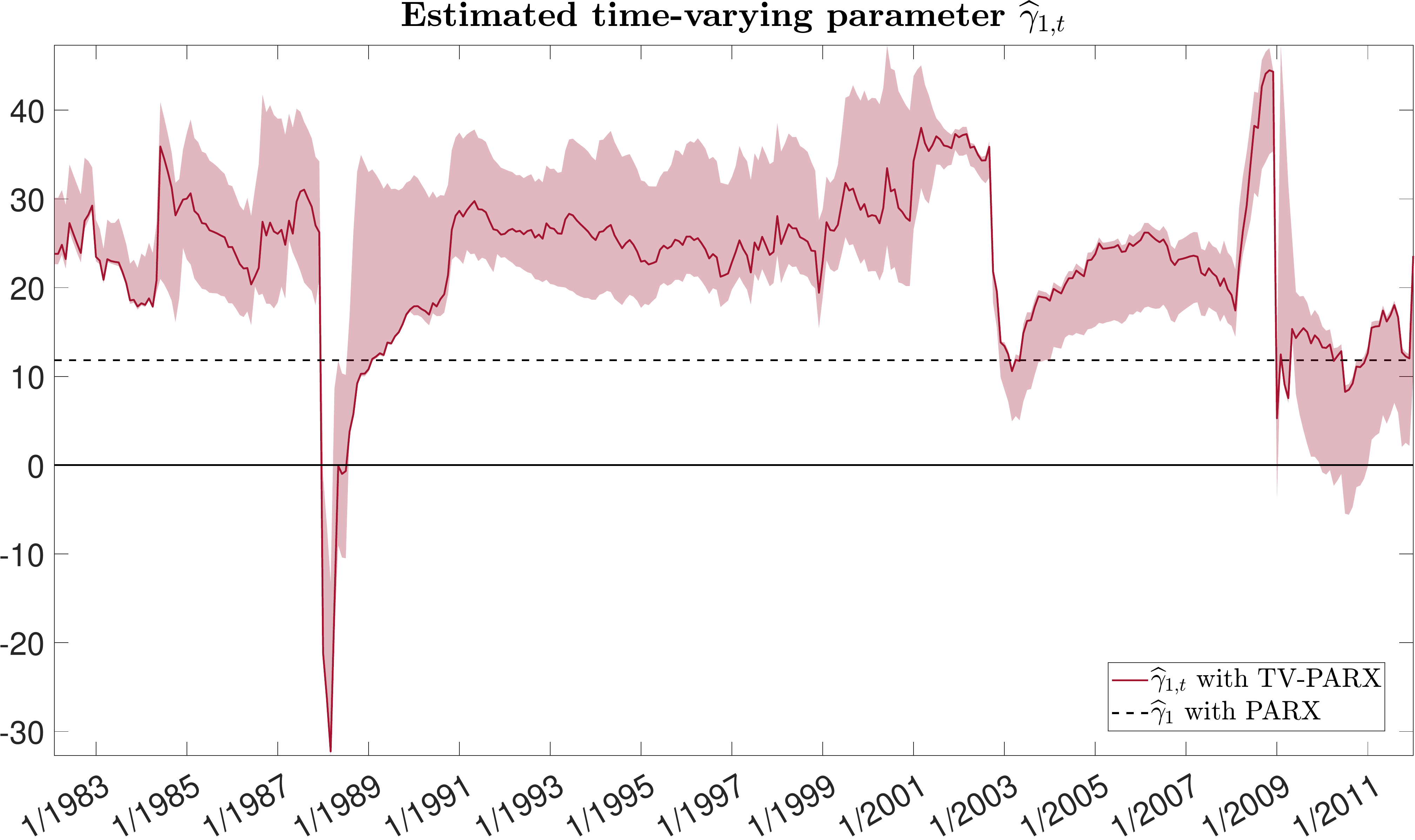} \\
\includegraphics[width=10cm]{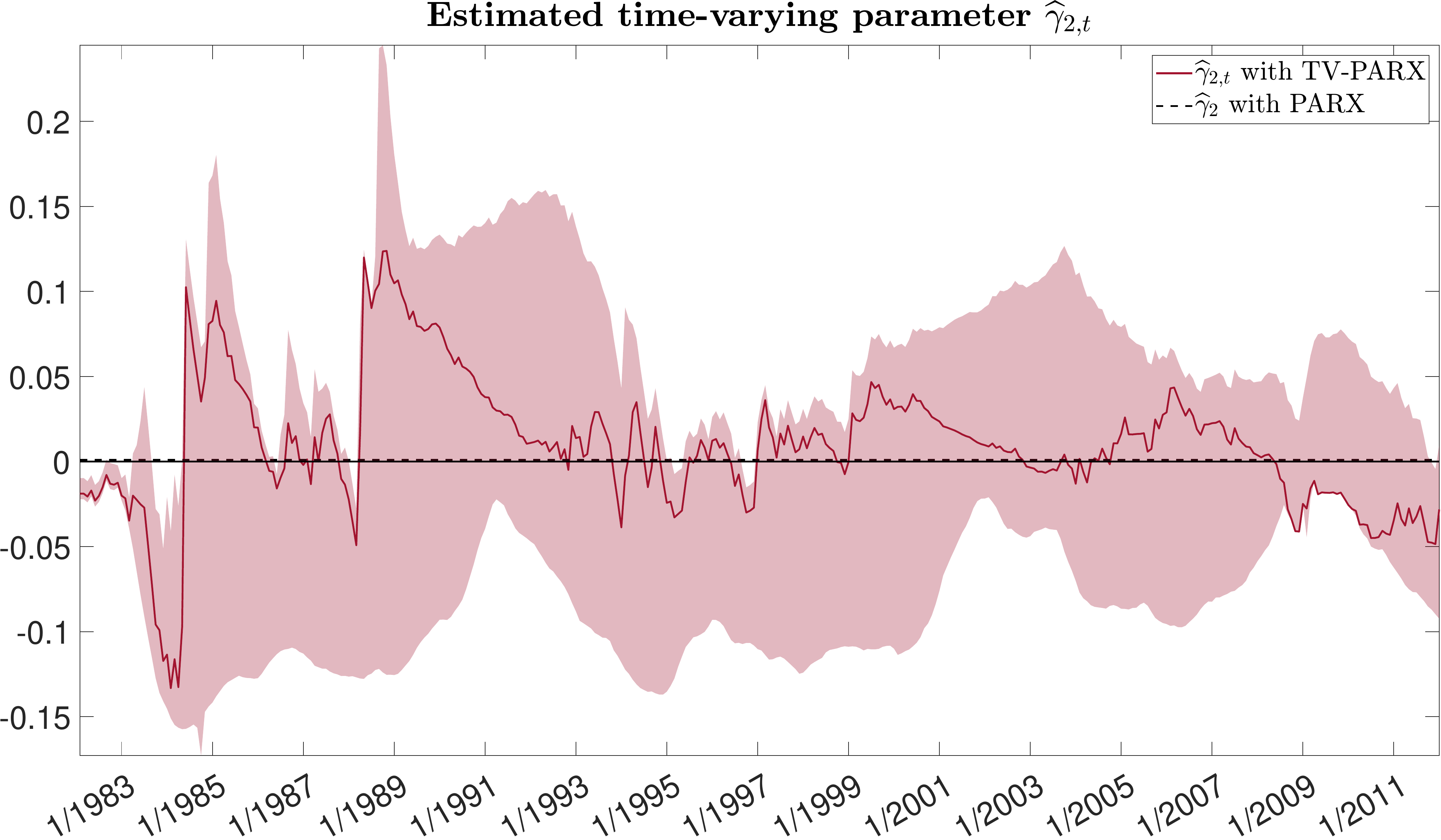} \\
\caption{Estimated time-varying parameter $\alpha_t$, $\gamma_{1,t}$ and $\gamma_{2,t}$ in \eqref{eq:emp_default} for the monthly number of corporate defaults in US. Shaded areas are the 95\% confidence intervals. \label{fig:emp_alphagamma_defaults}}
\end{center}
\end{figure}

%
%
%


\clearpage
\appendix

\renewcommand{\thelemma}{A.\arabic{lemma}}

\section{Proofs}
\label{proofs}

\subsection*{Proof of Proposition \ref{prop_ergodic_lnLambda}}
Following \cite{davis2003}, we define $e_t := (y_t - \lambda_t)\lambda^{-1}_{t-1}$. Under the initial conditions that $e_s = 0$ and $y_s = 0$ for $s\leq 0$, it is shown in \cite{davis2003} that $\{ e_t \}_{t\in \mathbb{Z}}$ is a martingale difference sequence, so that $\mathbb{E}[e_t | \mathcal{F}_{t-1}]=0$, and $\mathbb{E}[e_t^2] = \mathbb{E}[\lambda_t^{-1}]$. Therefore, as long as $|\phi_\alpha|<1$, by unfolding the recursion in \eqref{eq:tv_alpha_parx} we obtain that
\begin{align}
\label{alpha_unfolded}
\alpha_{t+1} = \frac{\delta_\alpha}{1 - \phi_\alpha} 
+ \sum_{i=0}^\infty \kappa_\alpha \phi_\alpha^i e_{t-i} e_{t-i-1}.
\end{align}
Thus, it is straightforward to see that $\bar{\alpha} = \mathbb{E}[\alpha_t] = \delta_\alpha / (1-\phi_\alpha)$. Now, we can focus on the recursion in \eqref{ln_tv_lambda} and to prove stability and geometric ergodicity we use the Meyn-Tweedie's criterion with the test function $V(x) = |x|$, see \cite{Meyn2009}, and we substitute $\alpha_t$ with $\bar{\alpha}$ in \eqref{ln_tv_lambda}, in order to distinguish four different cases:

\emph{(i)} $\bar{\alpha}>0$. We have
\begin{align*}
\mathbb{E}[|\log \lambda_{t+1}| \, | \log \lambda_{t} = \log \lambda]
=
\mathbb{E}[|\omega + \beta \log \lambda + \bar{\alpha} (y_t - \lambda)\lambda^{-1}| 
\, | \log \lambda_{t} = \log \lambda].
\end{align*}
If $\log \lambda_{t} = \log \lambda > 0$ and $\log \lambda\rightarrow \infty$, then $\lambda\rightarrow \infty$ and note that $\lim_{\lambda\rightarrow \infty} (y_t - \lambda)\lambda^{-1} = \lim_{\lambda\rightarrow \infty} (N_t(\lambda)/\lambda - 1) = 0$ almost surely, by Lemma A.1 of \cite{Wang2014}. Thus
\begin{align*}
\mathbb{E}[|\log \lambda_{t+1}| \, | \log \lambda_{t} = \log \lambda]
\approx \beta \log \lambda,
\end{align*}
which implies that the condition $0<\beta<1$ ensures that the process $\{ \log \lambda_{t} \}_{t\in\mathbb{Z}}$ is geometrically ergodic. If $\log \lambda_{t} = \log \lambda < 0$ and $|\log \lambda| \rightarrow \infty$ we take the process in \eqref{ln_tv_lambda} two steps forward and write 
\begin{align}
\label{two_steps_lnlambda}
\log \lambda_{t+1}
=
\omega + \beta \omega + \beta [ \beta \log \lambda_{t-1} + \bar{\alpha} (y_{t-1} - \lambda_{t-1})\lambda_{t-1}^{-1} ]  + \bar{\alpha} (y_{t} - \lambda_{t})\lambda_{t}^{-1}.
\end{align}
Now, the fact that $\log \lambda_{t} = \log \lambda < 0$ with $|\log \lambda| \rightarrow \infty$ implies that $\lambda_{t} = \lambda$ becomes very small, so that $(y_{t} - \lambda)\lambda^{-1} \rightarrow 0$, because $y_{t}$ is Poisson distributed with intensity $\lambda$. Moreover, as $\log \lambda_{t} = \log \lambda < 0$
\begin{align*}
\log \lambda_{t}
=
\omega +
\beta \log \lambda_{t-1} + \bar{\alpha} (y_{t-1} - \lambda_{t-1})\lambda_{t-1}^{-1},
\end{align*}
we also obtain that $\log \lambda_{t-1}$ must be negative. Accordingly, $(y_{t-1} - \lambda_{t-1})\lambda_{t-1}^{-1}$ could be made arbitrarily close to zero by simply choosing $\log \lambda_{t-1}$ negative and small enough. Thus, recalling that $e_{t-1} = (y_{t-1} - \lambda_{t-1})\lambda_{t-1}^{-1}$, the following Taylor expansion is valid
\begin{align*}
e_{t-1} =
 \log 1 + \frac{1}{e_{t-1} + 1}\bigg|_{e_{t-1} = 0} e_{t-1}
 \approx
 \log({e}_{t-1} +1) 
 \approx
 \log\bigg( \frac{\max(y_{t-1},c)}{\lambda_{t-1}}\bigg),
\end{align*}
for some $c \in (0,1]$. Hence, we may approximate \eqref{two_steps_lnlambda} with 
\begin{align}
\label{two_steps_lnlambda2}
\log \lambda_{t+1}
&\approx
\omega + \beta \omega + \beta^2 \log \lambda_{t-1} 
+\beta \bar{\alpha} \log\bigg( \frac{\max(y_{t-1},c)}{\lambda_{t-1}}\bigg)  \nonumber\\
&=
\omega + \beta \omega + \beta^2 \log \lambda_{t-1} 
+\beta \bar{\alpha} \log (\max(y_{t-1},c))
-\beta \bar{\alpha} \log \lambda_{t-1}.
\end{align}
In conclusion, for $\log \lambda_{t-1}$ negative and small enough it holds that
\begin{align*}
\mathbb{E}[|\log \lambda_{t+1}| \, | \log \lambda_{t-1} = \log \lambda ]
\leq
K + \beta ( \beta + \bar{\alpha}) |\log \lambda|,
\end{align*}
and the condition $\beta ( \beta + \bar{\alpha}) < 1$ ensures the geometric ergodicity of the process $\{ \log \lambda_{t} \}_{t\in\mathbb{Z}}$.

\emph{(ii)} $\bar{\alpha}<0$. We can easily show that by following similar arguments of \emph{(i)} for $\log \lambda_{t} = \log \lambda >0$ large, then  
\begin{align*}
\mathbb{E}[|\log \lambda_{t+1}| \, | \log \lambda_{t-1} = \log \lambda ] 
\approx
\beta \log \lambda.
\end{align*}
On the other hand, if $\log \lambda_{t} = \log \lambda <0$, with $|\log \lambda|$ large, then
\begin{align*}
\mathbb{E}[|\log \lambda_{t+1}| \, | \log \lambda_{t-1} = \log \lambda ] 
\approx
\beta| \beta + \bar{\alpha}| | \log \lambda|.
\end{align*}
It remains only to show that  $\{ \lambda_t \}_{t\in\mathbb{Z}}$ is geometrically ergodic with $\mathbb{E}[\lambda_t]<\infty$. Again, we use the Meyn-Tweedie's criterion. We define a new test function $V(x) = e^x$ and follow the same arguments as before, and consider the same four different cases.

For $\log\lambda_t = \log\lambda > 0$ and $\log\lambda$ large, then 
\begin{align*}
\mathbb{E}[|\lambda_{t+1}| \, | \log\lambda_t = \log\lambda]
 \leq \exp\{ \beta \log\lambda \} 
 = \exp\{ (\beta - 1) \log\lambda \} \exp\{ \log\lambda \},
\end{align*}
which converges since $0<\beta<1$. Analogously, for $\log\lambda_t = \log\lambda < 0$ with $|\log\lambda|$ large, then
\begin{align*}
\mathbb{E}[|\lambda_{t+1}| \, | \log\lambda_t = \log\lambda] 
\leq \exp\{ ( \beta |\beta + \bar{\alpha}|-1) |\log\lambda| \}
\exp\{ |\log\lambda| \},
\end{align*}
which converges since $0<\beta|\beta + \bar{\alpha}|<1$.

Therefore, the conditions stated above, are also sufficients for the geometric ergodicity of $\{ \lambda_t \}_{t\in \mathbb{Z}}$ with $\mathbb{E}[\lambda_t]<\infty$.

\subsection*{Proof of Proposition \ref{prop_sec_mom}}
We first show that the the stationary and ergodic $\{ \alpha_t \}_{t\in\mathbb{Z}}$ has finite $k$-th moment, $\forall k>0$. From equation \eqref{alpha_unfolded} in the proof of Proposition \ref{prop_ergodic_lnLambda} we observe that
\begin{align*}
\mathbb{E}[|\alpha_{t+1}|^k]
\leq& 
\bigg|  \frac{\delta_\alpha}{1-\phi_\alpha}\bigg|^k
+
\mathbb{E}\bigg[\bigg|\sum_{i=0}^\infty \kappa_\alpha 
\phi_\alpha^i e_{t-i} e_{t-i-1} \bigg|^k \bigg]\\
\leq&
\bigg|  \frac{\delta_\alpha}{1-\phi_\alpha}\bigg|^k
+
\sum_{i=0}^\infty |\kappa_\alpha|^k 
|\phi_\alpha|^{k i} \mathbb{E}[ |e_{t-i} e_{t-i-1}|^k ].
\end{align*}
Since $|\phi_\alpha|<1$, it suffices to prove that $\mathbb{E}[ |e_{t} e_{t-1}|^k ]<\infty$.

We can then use the law of iterated expectations, in order to show that
\begin{align*}
\mathbb{E}[|e_{t} e_{t-1}|^k]
=&
\mathbb{E}[|e|^k_{t} |e_{t-1}|^k]
=
\mathbb{E}[\mathbb{E}[ |e_{t}|^k |e_{t-1}|^k \,| \lambda_{t-1} = \lambda]]
=
\mathbb{E}[ |e_{t}|^k \mathbb{E}[ |e_{t-1}|^k \,| \lambda_{t-1} = \lambda]].
\end{align*}
Now, from Lemma A.1 of \cite{Wang2014}, we know that the process $\{ e^k_{t} \}_{t\in \mathbb{Z}}$ is uniformly integrable for any integer $k$ $\geq 1$. Therefore, $\exists C_1, C_2 > 0$ such that
\begin{align*}
\mathbb{E}[ |e_{t}|^k \mathbb{E}[ |e_{t-1}|^k \,| \lambda_{t-1} = \lambda]]
\leq 
C_1 \, \mathbb{E}[ |e_{t}|^k ]
=
 C_1 \, \mathbb{E}[ \mathbb{E}[|e_{t}|^k \,| \lambda_{t} = \lambda]] 
\leq 
C_1 C_2 < \infty.
\end{align*} 
We conclude that $\mathbb{E}[|\alpha_{t+1}|^k] <\infty$. Now we turn our attention to the recursion \eqref{ln_tv_lambda}. The existence of the $k$-th moment for $\{ \log\lambda_t \}_{t\in\mathbb{Z}}$ can be proved by following similar arguments of \cite{fokianos2009poisson}. In particular, analogously to the proof of Proposition 2.1 of \cite{fokianos2009poisson}, we define a new test function $V(x)=1+x^k	$. Then
\begin{align*}
\mathbb{E}[V(\log \lambda_{t+1}) \, | \log \lambda_{t} = \log \lambda]
=&
1 +
\mathbb{E}[(\omega + \beta \log \lambda + \alpha_t (y_t - \lambda)\lambda^{-1})^k
\, | \log \lambda_{t} = \log \lambda]\\
=&
1 + \sum_{i=0}^k \binom{k}{i} (\beta \log\lambda)^{k-i}
\mathbb{E}[(\omega + \alpha_{t+1} (y_t - \lambda)\lambda^{-1})^i
\, | \log \lambda_{t} = \log \lambda].
\end{align*}
Note that, $\omega$ do not depend on $\log\lambda$ and hence, it can be neglected as $\log\lambda \rightarrow \infty$. This means that it suffices to verify that $\mathbb{E}[(\alpha_{t+1} (y_t - \lambda)\lambda^{-1})^k \, | \log \lambda_{t} = \log \lambda] < \infty$. However, it can be seen that by the H\"{o}lder's inequality
\begin{align*}
\mathbb{E}[|\alpha_{t+1} (y_t - \lambda)\lambda^{-1}|^k]
\leq
\Big(\mathbb{E}[|\alpha_{t+1}|^{2k}]\Big)^{\frac{1}{2k}} 
\Big(\mathbb{E}[|(y_t - \lambda)\lambda^{-1}|^{2k}]\Big)^{\frac{1}{2k}}<\infty,
\end{align*}
which implies that $\mathbb{E}[(\alpha_{t+1} (y_t - \lambda)\lambda^{-1})^k \, | \log \lambda_{t} = \log \lambda] \leq C < \infty$, for some constant $C>0$ and $\forall k>0$. Therefore,
\begin{align*}
\mathbb{E}[V(\log \lambda_{t+1}) \, | \log \lambda_{t} = \log \lambda]
=&
1 + \beta^k \log\lambda + \sum_{j=0}^{k-1} C_j(\log\lambda),
\end{align*}
where $C_j(\log\lambda)$ depends on the coefficients $\omega$, $\alpha$ and $\beta$, and, depending on which cases of proof of Proposition \ref{prop_ergodic_lnLambda} we are taking into account, possibly on $\log\lambda$. In particular, from the arguments discussed above it can be see that for $\log\lambda_t = \log\lambda > 0$ with $\log\lambda\rightarrow \infty$, then $\mathbb{E}[|\log\lambda_{t+1}|^k \, | \log\lambda_t = \log\lambda] \approx |\beta|^k |\log\lambda|^k$, whereas for $\log\lambda_t = \log\lambda < 0$ with $|\log\lambda| \rightarrow \infty$, then $\mathbb{E}[|\log\lambda_{t+1}|^k \, | \log\lambda_t = \log\lambda] \approx |\beta|^k |\beta + \bar{\alpha}|^k |\log\lambda|^k$. Therefore, the conditions stated in Proposition \ref{prop_ergodic_lnLambda} are sufficients for the existence of the $k$-th order moment of $\{ \log\lambda_t \}_{t\in \mathbb{Z}}$.

Finally, by slightly modifying the proof of Proposition 2.2 of \cite{fokianos2009poisson} we obtain that, under the condition of Proposition \ref{prop_ergodic_lnLambda}, the chain $\{y_t, \log\lambda_t\}_{t\in\mathbb{Z}}$ is geometrically ergodic with $V(y,\lambda) = 1+y^k+\log\lambda^k$.

\subsection*{Proof of Proposition \ref{prop_invertibility}}

To derive the contraction condition stated in Proposition \ref{prop_invertibility} we proceed as follows. We consider the model defined by the filtering equations:
\begin{center}
$\log \lambda_{t+1}(\boldsymbol{\theta})
=
\omega + \beta \log \lambda_t(\boldsymbol{\theta}) 
+ \alpha_{t+1}(\boldsymbol{\theta})
(y_t \lambda^{-1}_{t}(\boldsymbol{\theta})-1),$
\end{center}
\begin{center}
$\alpha_{t+1}(\boldsymbol{\theta})
=
\delta_\alpha + \phi_\alpha \alpha_{t}(\boldsymbol{\theta})
+\kappa_\alpha (y_t \lambda^{-1}_{t}(\boldsymbol{\theta})-1) 
(y_{t-1} \lambda^{-1}_{t-1}(\boldsymbol{\theta})-1).$
\end{center}
First note that by iterating the recursion for $\alpha_{t+1}(\boldsymbol{\theta})$, we obtain
\begin{align}
\label{unfold}
\alpha_{t+1}(\boldsymbol{\theta})
=
\frac{\delta_\alpha}{1-\phi_\alpha} + \phi_\alpha^{t-1} \alpha_1(\boldsymbol{\theta}) + \sum_{k=0}^{t-1} \kappa_\alpha \phi_\alpha^{k}
(y_{t-k} \lambda^{-1}_{t-k}(\boldsymbol{\theta})-1) 
(y_{t-k-1} \lambda^{-1}_{t-k-1}(\boldsymbol{\theta})-1),
\end{align}
and so, by applying the Cauchy rule, it is well-known that there exists a stationary, ergodic
and non anticipative solution to the stochastic recursion in \eqref{unfold} if $0 \leq \phi_\alpha < 1$, which is implied by assumption. Thus, 
\begin{align*}
\sup_{\boldsymbol{\theta} \in \boldsymbol{\Theta}}
| \hat{\alpha}_{t+1}(\boldsymbol{\theta}) -  {\alpha}_{t+1}(\boldsymbol{\theta}) |
\leq
\sup_{\boldsymbol{\theta} \in \boldsymbol{\Theta}}
|\phi^{t-1}_\alpha|
\sup_{\boldsymbol{\theta} \in \boldsymbol{\Theta}}
| \hat{\alpha}_{1}(\boldsymbol{\theta}) -  {\alpha}_{1}(\boldsymbol{\theta}) |
\xrightarrow[]{e.a.s.}  0
\,\,\,\,\,\text{as} \,\,\,\,\, t \rightarrow \infty.
\end{align*}
Now, note that $\alpha_{t+1}(\boldsymbol{\theta})$ is bounded below by
\begin{align*}
\alpha_{t+1}(\boldsymbol{\theta}) \geq
\frac{\delta_\alpha + \kappa_\alpha}{1-\phi_\alpha}.
\end{align*}
Therefore
\begin{align*}
\log \lambda_{t+1}(\boldsymbol{\theta})
\geq&
\omega + \beta \log \lambda_t(\boldsymbol{\theta}) 
+ \frac{\delta_\alpha + \kappa_\alpha}{1-\phi_\alpha}
(y_t \lambda^{-1}_{t}(\boldsymbol{\theta})-1)\\
\geq&
\omega + \beta \log \lambda_t(\boldsymbol{\theta}) 
-\frac{\delta_\alpha + \kappa_\alpha}{1-\phi_\alpha}.
\end{align*}
Since $\omega\in\mathbb{R}$ and $0\leq \beta <1$, by recursive substitutions, we obtain that
\begin{align}
\label{ell_bound}
\log \lambda_{t+1}(\boldsymbol{\theta})
\geq&
\frac{\omega - \frac{\delta_\alpha + \kappa_\alpha}{1-\phi_\alpha}}{1-\beta}:= \ell.
\end{align}
Clearly, this implies that
\begin{align*}
\alpha_{t+1}(\boldsymbol{\theta})
\leq 
\delta_\alpha + \phi_\alpha \alpha_{t}(\boldsymbol{\theta})
+\kappa_\alpha (y_t e^{-\ell}-1) 
(y_{t-1} e^{-\ell}-1) := \bar{\alpha}_{t+1}(\boldsymbol{\theta}),
\end{align*}
and hence
\begin{align*}
\lambda_{t+1}(\boldsymbol{\theta})
\leq&
\exp \{\omega + \beta \log \lambda_t(\boldsymbol{\theta}) 
+ \bar{\alpha}_{t+1}(\boldsymbol{\theta})
(y_t e^{-\ell} -1)\}.
\end{align*}
The contraction condition is then obtained by noting that
\begin{align*}
\bigg|
\frac{\partial \lambda_{t+1}(\boldsymbol{\theta})}
{\partial \lambda_{t}(\boldsymbol{\theta})}
\bigg|
\leq&
\exp \{\omega + \beta \log \lambda_t(\boldsymbol{\theta}) 
+ \bar{\alpha}_{t+1}(\boldsymbol{\theta})
(y_t e^{-\ell}-1)\}
\frac{\beta}{\lambda_{t}(\boldsymbol{\theta})} \\
=&
\exp \{\omega 
+ \bar{\alpha}_{t+1}(\boldsymbol{\theta})
(y_t e^{-\ell}-1)\}
\lambda^\beta_t(\boldsymbol{\theta}) 
\frac{\beta}{\lambda_{t}(\boldsymbol{\theta})} \\
=&
\exp \{\omega 
+ \bar{\alpha}_{t+1}(\boldsymbol{\theta})
(y_t e^{-\ell}-1)\}
\frac{\beta}{\lambda^{1-\beta}_{t}(\boldsymbol{\theta})}.
\end{align*}
Since the inequality in \eqref{ell_bound} holds true, it must be that
\begin{align*}
\bigg|
\frac{\partial \lambda_{t+1}(\boldsymbol{\theta})}
{\partial \lambda_{t}(\boldsymbol{\theta})}
\bigg|
\leq&
\exp \{\omega 
+ \bar{\alpha}_{t+1}(\boldsymbol{\theta})
(y_t e^{-\ell}-1)\}
\frac{\beta}{e^{\ell(1-\beta)}_{t}}\\
=&
\beta \exp \{\omega 
+ \bar{\alpha}_{t+1}(\boldsymbol{\theta})
(y_t e^{-\ell}-1) - \ell(1-\beta) \},
\end{align*}
which yields the contraction condition stated in the Proposition.

In fact, by the mean value theorem it holds that
\begin{align*}
\sup_{\boldsymbol{\theta} \in \boldsymbol{\Theta}}
| \hat{\lambda}_{t+1}(\boldsymbol{\theta})
- \lambda_{t+1}(\boldsymbol{\theta}) |
\leq 
\sup_{\boldsymbol{\theta} \in \boldsymbol{\Theta}}
\bigg|
\frac{\partial \hat{\lambda}_{t+1}(\boldsymbol{\theta})}
{\partial {\lambda}^\star}
\bigg|
\sup_{\boldsymbol{\theta} \in \boldsymbol{\Theta}}
| \hat{\lambda}_{t}(\boldsymbol{\theta})
- \lambda_{t}(\boldsymbol{\theta}) |,
\end{align*}
where ${\lambda}^\star \in (\hat{\lambda}_{t}(\boldsymbol{\theta}), \lambda_{t}(\boldsymbol{\theta}))$. The Proposition follows from 
\begin{align*}
\mathbb{E} \bigg[
\log \sup_{\boldsymbol{\theta} \in \boldsymbol{\Theta}}\bigg|
\frac{\partial \lambda_{t+1}(\boldsymbol{\theta})}
{\partial \lambda_{t}(\boldsymbol{\theta})}
\bigg|
\bigg]
< 0, 
\end{align*}
by application of Proposition 5.2.12 in \cite{Straumann2005}. For the application of Proposition 5.2.12 in \cite{Straumann2005}, we have also checked that $\mathbb{E}[\log^+ \sup_{\boldsymbol{\theta} \in \boldsymbol{\Theta}}|\lambda_{t+1}(\boldsymbol{\theta}) - \lambda |]<\infty$ for a constant $\lambda>0$ and 
\begin{align*}
\mathbb{E} \bigg[
\log^+ \sup_{\boldsymbol{\theta} \in \boldsymbol{\Theta}}\bigg|
\frac{\partial \lambda_{t+1}(\boldsymbol{\theta})}
{\partial \lambda_{t}(\boldsymbol{\theta})}
\bigg|
\bigg]
< \infty.
\end{align*}
Note that $\mathbb{E} \big[
\log^+ \sup_{\boldsymbol{\theta} \in \boldsymbol{\Theta}}\big|
\frac{\partial \lambda_{t+1}(\boldsymbol{\theta})}
{\partial \lambda_{t}(\boldsymbol{\theta})}
\big|
\big]
< \infty$ directly follows from $\mathbb{E} \big[
\log \sup_{\boldsymbol{\theta} \in \boldsymbol{\Theta}}\big|
\frac{\partial \lambda_{t+1}(\boldsymbol{\theta})}
{\partial \lambda_{t}(\boldsymbol{\theta})}
\big|
\big]
< 0$. Moreover, we get $\mathbb{E}[\log^+ \sup_{\boldsymbol{\theta} \in \boldsymbol{\Theta}}|\lambda_{t+1}(\boldsymbol{\theta}) - \lambda |]<\infty$ for a constant $\lambda>0$ by simply noting that $y_t$ has a bounded moment from Proposition \ref{prop_ergodic_lnLambda}.

Finally, we address the second part of the Proposition, i.e., the existence of an $m >0$ such that $\mathbb{E}[\sup_{\boldsymbol{\theta} \in \boldsymbol{\Theta}} |\lambda_t(\boldsymbol{\theta})|^m]<\infty$. However, under the stricter contraction condition in \eqref{contr_cond_mom}, this result follows directly by applying Proposition 3.2 of \cite{Blasques2021}, since their condition $(iv)$ follows from our condition \eqref{contr_cond_mom}.

\subsection*{Proof of Lemma \ref{lemma_as_approxLogLik}}
By the triangle inequality, it holds that
\begin{align}
\label{triangle_LogLik}
\sup_{\boldsymbol{\theta}\in\boldsymbol{\Theta}}
| \hat{L}_T(\boldsymbol{\theta}) - {L}(\boldsymbol{\theta}) |
\leq
\sup_{\boldsymbol{\theta}\in\boldsymbol{\Theta}}
| \hat{L}_T(\boldsymbol{\theta}) - {L}_T(\boldsymbol{\theta}) |
+
\sup_{\boldsymbol{\theta}\in\boldsymbol{\Theta}}
| {L}_T(\boldsymbol{\theta}) - {L}(\boldsymbol{\theta}) |,
\end{align}
where ${L}_T(\boldsymbol{\theta}) = \frac{1}{T}\sum_{t=1}^T l_t$ is stationary and ergodic since it is a continuous function of the stationary and ergodic processes $\{\lambda_t(\boldsymbol{\theta})\}_{t\in\mathbb{Z}}$ and $\{\alpha_t(\boldsymbol{\theta})\}_{t\in\mathbb{Z}}$. We begin by showing that the first term in the RHS of \eqref{triangle_LogLik} converges to zero almost surely. By the mean value theorem we have
\begin{align*}
\sup_{\boldsymbol{\theta}\in\boldsymbol{\Theta}}
| \hat{L}_T(\boldsymbol{\theta}) - {L}_T(\boldsymbol{\theta}) |
\leq
\frac{1}{T}
\sum_{t=1}^T
\sup_{\boldsymbol{\theta}\in\boldsymbol{\Theta}}
\begin{Vmatrix}
\frac{\partial l_t(\boldsymbol{\theta})}{\partial \lambda^\star_t(\boldsymbol{\theta})}\\
\frac{\partial l_t(\boldsymbol{\theta})}{\partial \alpha^\star_t(\boldsymbol{\theta})}
\end{Vmatrix}
\sup_{\boldsymbol{\theta}\in\boldsymbol{\Theta}}
\begin{Vmatrix}
\hat{\lambda}_t(\boldsymbol{\theta}) -  \lambda_t(\boldsymbol{\theta})\\
\hat{\alpha}_t(\boldsymbol{\theta}) -  \alpha_t(\boldsymbol{\theta})
\end{Vmatrix},
\end{align*}
where $\lambda^\star_t(\boldsymbol{\theta}) \in (\hat{\lambda}_t(\boldsymbol{\theta}), \lambda_t(\boldsymbol{\theta}))$, $\alpha^\star_t(\boldsymbol{\theta}) \in (\hat{\alpha}_t(\boldsymbol{\theta}), \alpha_t(\boldsymbol{\theta}))$ and the parial derivatives are
$
 \frac{\partial l_t(\boldsymbol{\theta})}{\partial \lambda^\star_t(\boldsymbol{\theta})}
 =
 \frac{y_t}{\lambda^\star_t(\boldsymbol{\theta})} - 1
$ and
$
\frac{\partial l_t(\boldsymbol{\theta})}{\partial \alpha_t(\boldsymbol{\theta})}
 =
 (\frac{y_t}{\lambda_t(\boldsymbol{\theta})} - 1)
 (\frac{y_{t-1}}{\lambda_{t-1}(\boldsymbol{\theta})} - 1)
$.

Note that, from the inequality \eqref{ell_bound} derived in the proof of Proposition \ref{prop_invertibility} and the compactness of $\boldsymbol{\Theta}$, $\log \lambda^\star_t(\boldsymbol{\theta})$ is bounded below by $\ell$. Thus 
$ \sup_{\boldsymbol{\theta}\in\boldsymbol{\Theta}}
\big|
\frac{\partial l_t(\boldsymbol{\theta})}{\partial \lambda^\star_t(\boldsymbol{\theta})}
\big|
\leq  |{y_t} e^{-\ell} - 1|,
$
which implies that
\begin{align}
\label{mvt1}
\sup_{\boldsymbol{\theta}\in\boldsymbol{\Theta}}
| \hat{L}_T(\boldsymbol{\theta}) - {L}_T(\boldsymbol{\theta}) |
\leq
\frac{1}{T}
\sum_{t=1}^T
\begin{Vmatrix}
({y_t} e^{-\ell} - 1)\\
({y_t} e^{-\ell} - 1)({y_{t-1}} e^{-\ell} - 1)
\end{Vmatrix}
\sup_{\boldsymbol{\theta}\in\boldsymbol{\Theta}}
\begin{Vmatrix}
\hat{\lambda}_t(\boldsymbol{\theta}) -  \lambda_t(\boldsymbol{\theta})\\
\hat{\alpha}_t(\boldsymbol{\theta}) -  \alpha_t(\boldsymbol{\theta})
\end{Vmatrix}. 
\end{align}
From Proposition \ref{prop_invertibility} it is clear that the second term in the RHS of \eqref{mvt1} converges to zero almost surely and exponentially fast. Therefore, if we can prove that $({y_t} e^{-\ell} - 1)({y_{t-1}} e^{-\ell} - 1)$ has a bounded $\log$-moment, the desired convergence of the first term in the RHS of \eqref{triangle_LogLik} follows by an application of Lemma 2.5.1 of \cite{Straumann2005}. However, the $\log$-moment condition is trivially verified, since by Proposition \ref{prop_ergodic_lnLambda}, Lemma 2.5.3 of \cite{Straumann2005} and the Jensen's inequality imply that
\begin{align*}
\mathbb{E}[\log^+|({y_t} e^{-\ell} - 1)({y_{t-1}} e^{-\ell} - 1)|]
\leq
\mathbb{E}[\log^+|{y_t} e^{-\ell} - 1|]
+\mathbb{E}[\log^+|{y_{t-1}} e^{-\ell} - 1|]< c_1 + c_2 <\infty,
\end{align*}
for any $c_1,c_2>0$ because $y_t$ has a bounded moment. Therefore, $\exists K > 0$ such that
\begin{align*}
\sup_{\boldsymbol{\theta}\in\boldsymbol{\Theta}}
| \hat{L}_T(\boldsymbol{\theta}) - {L}_T(\boldsymbol{\theta}) |
\leq
\frac{1}{T}
\sum_{t=1}^T
K \times \sup_{\boldsymbol{\theta}\in\boldsymbol{\Theta}}
\begin{Vmatrix}
\hat{\lambda}_t(\boldsymbol{\theta}) -  \lambda_t(\boldsymbol{\theta})\\
\hat{\alpha}_t(\boldsymbol{\theta}) -  \alpha_t(\boldsymbol{\theta})
\end{Vmatrix}
\xrightarrow[]{a.s.}  0
\,\,\,\,\,\text{as} \,\,\,\,\, T \rightarrow \infty.
\end{align*}
Next, we study the almost sure convergence of the second term in the RHS of \eqref{triangle_LogLik}. This result, could be obtained by applying the uniform law of large number for ergodic sequences of \cite{rao1962}. In particular, we need to prove that the stationary and ergodic sequence $\{l_t(\boldsymbol{\theta})\}_{t\in\mathbb{Z}}$ has a uniformly bounded moment over the compact parameter space, i.e., $\mathbb{E}[\sup_{\boldsymbol{\theta}\in\boldsymbol{\Theta}}
| l_1(\boldsymbol{\theta})|]<\infty$.

Since $\log(x) \leq x - 1$ $\forall x > 0$, we have
\begin{align*}
|\ell_t(\boldsymbol{\theta})|
=
|y_t \log \lambda_t(\boldsymbol{\theta}) - \lambda_t((\boldsymbol{\theta}))|
\leq
|y_t ( \lambda_t(\boldsymbol{\theta}) -1 ) - \lambda_t((\boldsymbol{\theta}))|,
\end{align*}
Thus, from the Cauchy-Schwarz inequality we obtain
\begin{align*}
\mathbb{E}\bigg[\sup_{\boldsymbol{\theta} \in \boldsymbol{\Theta}} |l_t(\boldsymbol{\theta})|\bigg]
\leq
\sqrt{\mathbb{E}[y_t^2] 
\mathbb{E}\bigg[\sup_{\boldsymbol{\theta} \in \boldsymbol{\Theta}} |\lambda_t^2(\boldsymbol{\theta})|\bigg]}
+
\mathbb{E}[y_t] + 
\mathbb{E}\bigg[\sup_{\boldsymbol{\theta} \in \boldsymbol{\Theta}} |\lambda_t(\boldsymbol{\theta})|\bigg]
<\infty,
\end{align*}
since $\mathbb{E}[y_t]< \infty$ and $\mathbb{E}[y^2_t]< \infty$ are ensured by Proposition \ref{prop_sec_mom}, and $\mathbb{E}[\sup_{\boldsymbol{\theta} \in \boldsymbol{\Theta}} |\lambda_t^2(\boldsymbol{\theta})|]< \infty$ is ensured by Proposition \ref{prop_invertibility}. Therefore, by the uniform law of large number of \cite{rao1962} 
\begin{align*}
\sup_{\boldsymbol{\theta} \in \boldsymbol{\Theta}} 
| \hat{L}_T(\boldsymbol{\theta})
 - 
L(\boldsymbol{\theta})|
 \xrightarrow[]{\text{a.s.}} 0
\tab \text{as} \tab T \rightarrow \infty.
\end{align*}
By collecting the results, we obtain from \eqref{triangle_LogLik}
\begin{align*}
\sup_{\boldsymbol{\theta}\in\boldsymbol{\Theta}}
| \hat{L}_T(\boldsymbol{\theta}) - {L}(\boldsymbol{\theta}) |
\leq o_p(1) + o_p(1),
\end{align*}
which concludes the proof.

\subsection*{Proof of Lemma \ref{lemma_identifiability}}
Similarly to \cite{Wang2014}, we can show that by the law of iterated expectations and the elementary inequality $\log(x) \leq x - 1$ $\forall x > 0$, it holds that
\begin{align*}
\mathbb{E}[l_t(\boldsymbol{\theta}) - l_t(\boldsymbol{\theta}_0)]
=&
\mathbb{E}
\bigg[
{y_t} \log \frac{\lambda_t(\boldsymbol{\theta})}{\lambda_t(\boldsymbol{\theta}_0)}
+
 \lambda_t(\boldsymbol{\theta})
 -\lambda_t(\boldsymbol{\theta}_0)
\bigg] \\
\leq&
\mathbb{E}
\bigg[
\lambda_t(\boldsymbol{\theta}_0) 
\bigg(
 \frac{\lambda_t(\boldsymbol{\theta})}{\lambda_t(\boldsymbol{\theta}_0)}
 -1
 \bigg)
+
 \lambda_t(\boldsymbol{\theta})
 -\lambda_t(\boldsymbol{\theta}_0)
\bigg]
= 0,
\end{align*}
if and only if $\lambda_t(\boldsymbol{\theta}) = \lambda_t(\boldsymbol{\theta}_0)$ almost surely. However, the latter equality holds if and only if ${\log\lambda}_{t}(\boldsymbol{\theta}) = {\log\lambda}_{t}(\boldsymbol{\theta}_0)$ almost surely.

To prove that this cannot be true, assume that $\exists \boldsymbol{\theta} \in \boldsymbol{\Theta}$ such that $\log \lambda_t(\boldsymbol{\theta}) = \log \lambda_t(\boldsymbol{\theta}_0)$, then
\begin{align}
\label{ln_lambda_diff}
\log \lambda_t(\boldsymbol{\theta}) 
- \log \lambda_t(\boldsymbol{\theta}_0)
=
\omega - \omega_0 
+ (\beta - \beta_0)  \log \lambda_{t-1}(\boldsymbol{\theta}_0)
+ (\alpha_t(\boldsymbol{\theta}) - \alpha_t(\boldsymbol{\theta}_0))
e_{t-1},
\end{align}
almost surely. From equation \eqref{ln_lambda_diff} above it can be seen that if $\log \lambda_t(\boldsymbol{\theta}) = \log \lambda_t(\boldsymbol{\theta}_0)$, then it must holds that $\alpha_t(\boldsymbol{\theta}) = \alpha_t(\boldsymbol{\theta}_0)$ almost surely. However, note that
\begin{align}
\label{alpha_diff}
\alpha_t(\boldsymbol{\theta}) - \alpha_t(\boldsymbol{\theta}_0)
=
\delta_\alpha - \delta_{\alpha,0}
+ (\phi_\alpha - \phi_{\alpha,0})\alpha_t(\boldsymbol{\theta}_0)
+ (\kappa_\alpha - \kappa_{\alpha,0})e_{t-1} e_{t-2},
\end{align}
but the process $\{e_{t}\}_{t\in \mathbb{Z}}$ forms a martingale difference sequence, and thus, since $\kappa_\alpha \neq 0$ by assumption, it can be seen that if $\alpha_t(\boldsymbol{\theta}) - \alpha_t(\boldsymbol{\theta}_0) = 0$ almost surely, we must have $(\delta_\alpha, \phi_\alpha, \kappa_\alpha) = (\delta_{\alpha,0}, \phi_{\alpha,0}, \kappa_{\alpha,0})$. It is evident now from \eqref{ln_lambda_diff} that this result further implies that if $\alpha_t(\boldsymbol{\theta}) - \alpha_t(\boldsymbol{\theta}_0) = 0$ and $\log \lambda_t(\boldsymbol{\theta}) - \log \lambda_t(\boldsymbol{\theta}_0)=0$ almost surely, it must also holds that $(\omega, \beta) = (\omega_0, \beta_0)$, and hence $\boldsymbol{\theta} = \boldsymbol{\theta}_0$.

\subsection*{Proof of Theorem \ref{thm_consistency}}

With Lemma \ref{lemma_as_approxLogLik}, which yields the uniform convergence of the $\log$-likelihood function, and Lemma \ref{lemma_identifiability}, which shows the identifiability and uniqueness of the maximizers $\boldsymbol{\theta}_0 \in \boldsymbol{\Theta}$, at hand, we directly obtain strong consistency of the MLE $\hat{\boldsymbol{\theta}}_T$ by applying Theorem 3.4 of \cite{white_1994}.

\subsection*{Proof of Proposition \ref{prop_norm_score}}
To establish the asymptotic normality of the normalized score function \eqref{score_fun} we verify \eqref{condition_hall1} and \eqref{condition_hall2}.

First, we note that $\log\lambda_t^{\boldsymbol{\theta}_0}$ is a continuous function of $\{ \lambda_t \}_{t\in\mathbb{Z}}$ and $\{ \alpha_t \}_{t\in\mathbb{Z}}$, that are strictly stationary and ergodic by Proposition \ref{prop_ergodic_lnLambda}. Hence, $\{ \log\lambda^{\boldsymbol{\theta}}_t \}_{t\in\mathbb{Z}}$ is strictly stationary and ergodic.

Then, to verify conditions \eqref{condition_hall1}, we note that
\begin{align*}
\sum_{t=1}^{T}\mathbb{E}[\boldsymbol{\eta}_t \boldsymbol{\eta}^\prime_t | \mathcal{F}_{t-1}]
=&
\frac{1}{T}\sum_{t=1}^{T}
\mathbb{E}[e_t^2|\mathcal{F}_{t-1}]\lambda_t^2 
(\log\lambda_t^{\boldsymbol{\theta}_0}\log\lambda_t^{\boldsymbol{\theta}_0^\prime})\\
=&
\frac{1}{T}\sum_{t=1}^{T}
\mathbb{E}[\mathbb{E}[e_t^2 | \lambda_t]|\mathcal{F}_{t-1}]\lambda_t^2 
(\log\lambda_t^{\boldsymbol{\theta}_0}\log\lambda_t^{\boldsymbol{\theta}_0^\prime})\\
=&
\frac{1}{T}\sum_{t=1}^{T}
\mathbb{E}[\lambda^{-1}_t|\mathcal{F}_{t-1}]\lambda_t^2 
(\log\lambda_t^{\boldsymbol{\theta}_0}\log\lambda_t^{\boldsymbol{\theta}_0^\prime})\\
=&
\frac{1}{T}\sum_{t=1}^{T}
\lambda_t
(\log\lambda_t^{\boldsymbol{\theta}_0}\log\lambda_t^{\boldsymbol{\theta}_0^\prime}).
\end{align*}
Therefore, if $\mathbb{E}[\|\lambda_t
(\log\lambda_t^{\boldsymbol{\theta}_0}\log\lambda_t^{\boldsymbol{\theta}_0^\prime})\|]<\infty$, we can apply the ergodic theorem in order to obtain
\begin{align*}
\lim_{t\rightarrow \infty}
\frac{1}{T}\sum_{t=1}^{T}
\lambda_t
(\log\lambda_t^{\boldsymbol{\theta}_0}\log\lambda_t^{\boldsymbol{\theta}_0^\prime})
=
\mathbb{E}[\lambda_t
(\log\lambda_t^{\boldsymbol{\theta}_0}\log\lambda_t^{\boldsymbol{\theta}_0^\prime})],
\end{align*}
almost surely. Clearly, the bounded moment condition $\mathbb{E}[\|\lambda_t
(\log\lambda_t^{\boldsymbol{\theta}_0}\log\lambda_t^{\boldsymbol{\theta}_0^\prime})\|]<\infty$ is automatically satisfied if $\mathbb{E}[\|\lambda_t (\log\lambda_t^{\boldsymbol{\xi}_0}\log\lambda_t^{\boldsymbol{\xi}_0^\prime})\|]<\infty$ and $\mathbb{E}[\|\lambda_t(\alpha_t^{\boldsymbol{\psi}_0}\alpha_t^{\boldsymbol{\psi}_0^\prime})\|]<\infty$, which in turn is implied by verifying that $\mathbb{E}[|\lambda_t (\log\lambda_t^{\omega})^2|]<\infty$ and $\mathbb{E}[|\lambda_t (\log\lambda_t^{\beta})^2|]<\infty$, and moreover $\mathbb{E}[|\lambda_t (\alpha_t^{\delta_\alpha})^2|]<\infty$, $\mathbb{E}[|\lambda_t (\alpha_t^{\phi_\alpha})^2|]<\infty$ and $\mathbb{E}[|\lambda_t (\alpha_t^{\kappa_\alpha})^2|]<\infty$.

Let us denote with $\|\,\cdot\,\|_2$ the $L_2$-norm. By the Minkowsky's inequality, we have
\begin{align*}
\| \lambda_t^{1/2} \log\lambda_t^{\omega} \|_2
\leq&
\| \lambda_t^{1/2} \|_2
+
\sum_{i=1}^\infty \bigg\|\lambda_t^{1/2} \prod_{j=0}^i A_{t-i-1} \bigg\|_2.
\end{align*}
However, note that $\lambda_t$ has a bounded moment, $\{ A_t \}_{t\in\mathbb{Z}}$ is strictly stationary and ergodic, and $\mathbb{E}[|A_t|^2]<1$ by assumption. Therefore,
\begin{align*}
\| \lambda_t^{1/2} \log\lambda_t^{\omega} \|_2
\leq
c_1 
+
c_1 \sum_{i=1}^\infty \|A_{t-i-1} \|^i_2
<\infty.
\end{align*}
By using similar arguments as before, we obtain
\begin{align*}
\| \lambda_t^{1/2} \log\lambda_t^{\beta} \|_2
\leq
c_1 c_2
+
c_1 c_2 \sum_{i=1}^\infty \|A_{t-i-1} \|^i_2
<\infty.
\end{align*}
Now, since $0\leq|\phi_\alpha^2|<1$ by assumption and, as shown in Proposition \ref{prop_sec_mom}, $\{ e_t e_{t-1} \}_{t\in\mathbb{Z}}$ has $k$ bounded moment for every $k>0$, we can easily show that
\begin{align*}
\| \lambda_t^{1/2} \alpha_t^{\boldsymbol{\psi}} \|_2 < \infty.
\end{align*}
Therefore, the desired moment bound $\mathbb{E}[\|\lambda_t(\log\lambda_t^{\boldsymbol{\theta}_0}\log\lambda_t^{\boldsymbol{\theta}_0^\prime})\|]<\infty$ hold true.

The second required condition in \eqref{condition_hall2} can then be easily using the established moment bound, together with the stationarity and ergodicity of $\{ \log\lambda_t \}_{t\in\mathbb{Z}}$, see \cite{Davis2005}.

\subsection*{Proof of Theorem \ref{thm_asy_norm}}
As done for the $\log$-likehood function in \eqref{approx_log_lik_tot}, we first need to study the impact that the initialization of the filters $\hat{\lambda}_1(\boldsymbol{\theta})$ and $\hat{\alpha}_1(\boldsymbol{\theta})$ have on the score function. In particular, we study the convergence of the non-stationary $\hat{\bar{\boldsymbol{{\eta}}}}_{T}(\boldsymbol{\theta}) = \sqrt{T}\frac{\partial \hat{L}_T(\boldsymbol{\theta})}{\partial\boldsymbol{\theta}}$, which is a continuous function of the filters $\{ \hat{\lambda}_t(\boldsymbol{\theta}) \}_{t\in\mathbb{N}}$ and $\{ \hat{\alpha}_t(\boldsymbol{\theta}) \}_{t\in\mathbb{N}}$,  to its stationary and ergodic counterpart ${\bar{\boldsymbol{{\eta}}}}_{T}(\boldsymbol{\theta}) = \sqrt{T}\frac{\partial {L}_T(\boldsymbol{\theta})}{\partial\boldsymbol{\theta}}$.

By the triangle inequality, we have
\begin{align}
\label{ineq_approx_score}
\sup_{\boldsymbol{\theta}\in\boldsymbol{\Theta}}
\|
\hat{\bar{\boldsymbol{{\eta}}}}_{T}(\boldsymbol{\theta})
-
{\bar{\boldsymbol{{\eta}}}}_{T}(\boldsymbol{\theta})
\|
\leq&
\frac{1}{\sqrt{T}}\sum_{t=1}^{T}
\sup_{\boldsymbol{\theta}\in\boldsymbol{\Theta}}
\bigg\|
\frac{\partial \hat{l}_t(\boldsymbol{\theta})}{\partial \hat{\lambda}_t(\boldsymbol{\theta})}
\frac{\partial \hat{\lambda}_t(\boldsymbol{\theta})}{\partial \boldsymbol{\theta}}
-
\frac{\partial l_t(\boldsymbol{\theta})}{\partial {\lambda}_t(\boldsymbol{\theta})}
\frac{\partial {\lambda}_t(\boldsymbol{\theta})}{\partial \boldsymbol{\theta}}
\bigg\|\nonumber\\
&+
\frac{1}{\sqrt{T}}\sum_{t=1}^{T}
\sup_{\boldsymbol{\theta}\in\boldsymbol{\Theta}}
\bigg\|
\frac{\partial \hat{l}_t(\boldsymbol{\theta})}{\partial \hat{\alpha}_t(\boldsymbol{\theta})}
\frac{\partial \hat{\alpha}_t(\boldsymbol{\theta})}{\partial \boldsymbol{\theta}}
-
\frac{\partial l_t(\boldsymbol{\theta})}{\partial {\alpha}_t(\boldsymbol{\theta})}
\frac{\partial {\alpha}_t(\boldsymbol{\theta})}{\partial \boldsymbol{\theta}}
\bigg\|,
\end{align}
almost surely. 

Let us consider the first term in the RHS of inequality \eqref{ineq_approx_score}. By the mean value theorem, we have
\begin{align*}
\sup_{\boldsymbol{\theta}\in\boldsymbol{\Theta}}
\bigg\|
\frac{\partial \hat{l}_t(\boldsymbol{\theta})}{\partial \hat{\lambda}_t(\boldsymbol{\theta})}
\frac{\partial \hat{\lambda}_t(\boldsymbol{\theta})}{\partial \boldsymbol{\theta}}
-
\frac{\partial l_t(\boldsymbol{\theta})}{\partial {\lambda}_t(\boldsymbol{\theta})}
\frac{\partial {\lambda}_t(\boldsymbol{\theta})}{\partial \boldsymbol{\theta}}
\bigg\|
\leq&
\sup_{\boldsymbol{\theta}\in\boldsymbol{\Theta}}
\bigg\|
\frac{\partial \hat{l}_t(\boldsymbol{\theta})}{\partial \hat{\lambda}_t(\boldsymbol{\theta})}
\bigg\|
\sup_{\boldsymbol{\theta}\in\boldsymbol{\Theta}}
\bigg\|
\frac{\partial \hat{\lambda}_t(\boldsymbol{\theta})}{\partial \boldsymbol{\theta}}
-
\frac{\partial {\lambda}_t(\boldsymbol{\theta})}{\partial \boldsymbol{\theta}}
\bigg\|\\
&+
\sup_{\boldsymbol{\theta}\in\boldsymbol{\Theta}}
\bigg\|
\frac{\partial \hat{l}_t(\boldsymbol{\theta})}{\partial \hat{\lambda}_t(\boldsymbol{\theta})}
-
\frac{\partial l_t(\boldsymbol{\theta})}{\partial {\lambda}_t(\boldsymbol{\theta})}
\bigg\|
\sup_{\boldsymbol{\theta}\in\boldsymbol{\Theta}}
\bigg\|
\frac{\partial {\lambda}_t(\boldsymbol{\theta})}{\partial \boldsymbol{\theta}}
\bigg\|\\
\leq&
\sup_{\boldsymbol{\theta}\in\boldsymbol{\Theta}}
\bigg\|
\frac{\partial \hat{l}_t(\boldsymbol{\theta})}{\partial \hat{\lambda}_t(\boldsymbol{\theta})}
\bigg\|
\sup_{\boldsymbol{\theta}\in\boldsymbol{\Theta}}
\bigg\|
\frac{\partial \hat{\lambda}_t(\boldsymbol{\theta})}{\partial \boldsymbol{\theta}}
-
\frac{\partial {\lambda}_t(\boldsymbol{\theta})}{\partial \boldsymbol{\theta}}
\bigg\|\\
&+
\sup_{\boldsymbol{\theta}\in\boldsymbol{\Theta}}
\bigg\|
\frac{\partial^2 \hat{l}_t(\boldsymbol{\theta})}{\partial {\lambda}^\star}
\bigg\|
\sup_{\boldsymbol{\theta}\in\boldsymbol{\Theta}}
\bigg\|
\frac{\partial {\lambda}_t(\boldsymbol{\theta})}{\partial \boldsymbol{\theta}}
\bigg\|
\sup_{\boldsymbol{\theta}\in\boldsymbol{\Theta}}
\|\hat{\lambda}_t(\boldsymbol{\theta}) 
-{\lambda}_t(\boldsymbol{\theta})\|,
\end{align*}
where ${\lambda}^\star \in(\hat{\lambda}_t(\boldsymbol{\theta}), {\lambda}_t(\boldsymbol{\theta}))$. As noted in the proof of Lemma \ref{lemma_as_approxLogLik}, it is clearly seen that $\mathbb{E}\Big[\sup_{\boldsymbol{\theta}\in\boldsymbol{\Theta}}\Big\| \frac{\partial \hat{l}_t(\boldsymbol{\theta})}{\partial \hat{\lambda}_t(\boldsymbol{\theta})}\Big\|\Big] < \infty$ and $\mathbb{E}\Big[\sup_{\boldsymbol{\theta}\in\boldsymbol{\Theta}}\Big\| \frac{\partial^2 \hat{l}_t(\boldsymbol{\theta})}{\partial^2 \hat{\lambda}_t(\boldsymbol{\theta})}\Big\|\Big] < \infty$. Thus, the results implied by Proposition \ref{prop_invertibility} and Lemma \ref{lemma_deriv_proc}, together with Lemma 2.5.2 of \cite{Straumann2005}, we obtain that
\begin{align*}
\sup_{\boldsymbol{\theta}\in\boldsymbol{\Theta}}
\bigg\|
\frac{\partial \hat{l}_t(\boldsymbol{\theta})}{\partial \hat{\lambda}_t(\boldsymbol{\theta})}
\frac{\partial \hat{\lambda}_t(\boldsymbol{\theta})}{\partial \boldsymbol{\theta}}
-
\frac{\partial l_t(\boldsymbol{\theta})}{\partial {\lambda}_t(\boldsymbol{\theta})}
\frac{\partial {\lambda}_t(\boldsymbol{\theta})}{\partial \boldsymbol{\theta}}
\bigg\|\xrightarrow[]{a.s.}  0
\,\,\,\,\,\text{as} \,\,\,\,\, t \rightarrow \infty.
\end{align*}
By using similar arguments, it is also straightforward to see that the same result hold true for the second term in the RHS of inequality \eqref{ineq_approx_score}, hence
\begin{align*}
\sup_{\boldsymbol{\theta}\in\boldsymbol{\Theta}}
\bigg\|
\frac{\partial \hat{l}_t(\boldsymbol{\theta})}{\partial \hat{\alpha}_t(\boldsymbol{\theta})}
\frac{\partial \hat{\alpha}_t(\boldsymbol{\theta})}{\partial \boldsymbol{\theta}}
-
\frac{\partial l_t(\boldsymbol{\theta})}{\partial {\alpha}_t(\boldsymbol{\theta})}
\frac{\partial {\alpha}_t(\boldsymbol{\theta})}{\partial \boldsymbol{\theta}}
\bigg\|\xrightarrow[]{a.s.}  0
\,\,\,\,\,\text{as} \,\,\,\,\, t \rightarrow \infty.
\end{align*}
Putting all together, we thus conclude that the normalized score function satisfies
\begin{align*}
\sup_{\boldsymbol{\theta}\in\boldsymbol{\Theta}}
\|
\hat{\bar{\boldsymbol{{\eta}}}}_{T}(\boldsymbol{\theta})
-
{\bar{\boldsymbol{{\eta}}}}_{T}(\boldsymbol{\theta})
\|
\leq
o_p(1).
\end{align*}
Now, having proved the convergence of the normalized score function, for $T$ large enough, we have
\begin{align}
\label{taylor_exp}
\boldsymbol{0}
=
\hat{\bar{\boldsymbol{{\eta}}}}_{T}(\hat{\boldsymbol{\theta}}_T)
=
\bar{\boldsymbol{{\eta}}}_{T}(\hat{\boldsymbol{\theta}}_T) + o_p(1)
=
\bar{\boldsymbol{{\eta}}}_{T}({\boldsymbol{\theta}}_0)
-
\boldsymbol{J}_T^\star \sqrt{T} (\hat{\boldsymbol{\theta}}_T - {\boldsymbol{\theta}}_0),
\end{align}
where
\begin{align*}
\boldsymbol{J}_T^\star
=
-\frac{\partial^2 L_T(\boldsymbol{\theta}^\star)}{\partial\boldsymbol{\theta}\partial\boldsymbol{\theta}^\prime}
=
-\frac{1}{T}\sum_{t=1}^T\frac{\partial^2 l_t(\boldsymbol{\theta}^\star)}{\partial\boldsymbol{\theta}\partial\boldsymbol{\theta}^\prime}
,
\end{align*}
for some $\boldsymbol{\theta}^\star \in (\hat{\boldsymbol{\theta}}_T, \boldsymbol{\theta}_0)$. Note that, from Proposition \ref{prop_norm_score}, $\bar{\boldsymbol{{\eta}}}_{T}({\boldsymbol{\theta}}_0) \Rightarrow \mathcal{N}(\boldsymbol{0}, \boldsymbol{V})$ as $T \rightarrow \infty$. Therefore, it only remains to show that $\boldsymbol{J}_T^\star \xrightarrow[]{a.s.} \boldsymbol{J}$  as $T \rightarrow \infty$.

However, by combining the strong consistency of the MLE together with Lemma \ref{lemma_bound_sec_deriv_LogLik}, and the uniform law of large numbers for stationary and ergodic processes of \cite{rao1962}, we note that since $\hat{\boldsymbol{\theta}}_T \xrightarrow[]{a.s.} \boldsymbol{\theta}_0$ as $T\rightarrow \infty$, it must holds that $\hat{\boldsymbol{\theta}}_T \in V({\boldsymbol{\theta}}_0)$ almost surely, where $V({\boldsymbol{\theta}}_0)$ denotes a neighborhood of ${\boldsymbol{\theta}}_0$. Then, we obtain that $\hat{\boldsymbol{J}}_T = -\frac{\partial^2 L_T(\hat{\boldsymbol{\theta}}_T)}{\partial\boldsymbol{\theta}\partial\boldsymbol{\theta}^\prime}\xrightarrow[]{a.s.} \boldsymbol{J}$. Moreover, note also that when the model is correct it holds that $\boldsymbol{V} = \boldsymbol{J}$, which follows by standard arguments since $l(\boldsymbol{\theta}_0)$ is the true conditional probability mass function of the Poisson distribution.

Finally, we get that $\boldsymbol{J}^\star_T \xrightarrow[]{a.s.} \boldsymbol{J}$ and the conclusion now simply follows by combining the expansion in equation \eqref{taylor_exp} together with the results obtained in Proposition \ref{prop_norm_score}.

\begin{lemma}
\label{lemma_deriv_proc}
Let the assumptions of Proposition \ref{prop_norm_score} hold, then
\begin{align*}
\sup_{\boldsymbol{\theta}\in\boldsymbol{\Theta}}
\bigg\|
\frac{\partial \hat{\lambda}_t(\boldsymbol{\theta})}{\partial \boldsymbol{\theta}} 
-
\frac{\partial {\lambda}_t(\boldsymbol{\theta})}{\partial \boldsymbol{\theta}}
\bigg\|\xrightarrow[]{e.a.s.}  0
\,\,\,\,\,\text{as} \,\,\,\,\, t \rightarrow \infty,
\end{align*}
where $\frac{\partial {\lambda}_t(\boldsymbol{\theta})}{\partial \boldsymbol{\theta}}$ is the stationary and ergodic derivative process of ${\lambda}_t(\boldsymbol{\theta})$, which satisfies $\mathbb{E}\Big[\sup_{\boldsymbol{\theta}\in\boldsymbol{\Theta}}\Big\| \frac{\partial {\lambda}_t(\boldsymbol{\theta})}{\partial \boldsymbol{\theta}}\Big\|^2\Big]<\infty$.
\end{lemma}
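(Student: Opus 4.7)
The plan is to derive the stochastic recursive equations (SREs) governing the joint derivative process $(\partial \hat{\alpha}_t(\boldsymbol{\theta})/\partial \boldsymbol{\theta}, \partial \hat{\ell}_t(\boldsymbol{\theta})/\partial \boldsymbol{\theta})$ where $\ell_t := \log\lambda_t$, show that they are Lipschitz contractions driven by the same scalar coefficient $A_t(\boldsymbol{\theta})$ that underlies the invertibility condition \eqref{eq:contraction}, and then transfer invertibility from the filter to its first derivative via the SRE technology of \cite{Straumann2005} and \cite{Straumann_Mikosch2006}. The second-moment bound will then follow from the strengthened contraction \eqref{contr_cond_mom} by the Cauchy-product plus Minkowski argument already exploited in the proof of Proposition \ref{prop_norm_score}.

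Writing $\ell_t(\boldsymbol{\theta}) := \log \lambda_t(\boldsymbol{\theta})$ and $e_t(\boldsymbol{\theta}) := y_t e^{-\ell_t(\boldsymbol{\theta})} - 1$, differentiating \eqref{ln_lambda_theta}--\eqref{alpha_theta} with respect to a generic coordinate $\theta_i$ yields the coupled linear system
\begin{align*}
\frac{\partial \alpha_{t+1}}{\partial \theta_i}
&= \phi_\alpha \frac{\partial \alpha_{t}}{\partial \theta_i} - \kappa_\alpha \Big[ y_t e^{-\ell_t} e_{t-1} \frac{\partial \ell_t}{\partial \theta_i} + e_t y_{t-1} e^{-\ell_{t-1}} \frac{\partial \ell_{t-1}}{\partial \theta_i} \Big] + S^\alpha_t(\boldsymbol{\theta}),\\
\frac{\partial \ell_{t+1}}{\partial \theta_i}
&= A_t(\boldsymbol{\theta}) \frac{\partial \ell_t}{\partial \theta_i} + e_t(\boldsymbol{\theta}) \frac{\partial \alpha_{t+1}}{\partial \theta_i} + S^\ell_t(\boldsymbol{\theta}),
\end{align*}
with $A_t(\boldsymbol{\theta}) = \beta - \alpha_{t+1}(\boldsymbol{\theta}) y_t e^{-\ell_t(\boldsymbol{\theta})}$ and with bounded-moment source terms $S^\alpha_t,S^\ell_t$ depending only on the static coordinates of $\boldsymbol{\theta}$ and on past $y$'s. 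Since $\partial \lambda_t/\partial \theta_i = \lambda_t \cdot \partial \ell_t/\partial \theta_i$ and $\lambda_t$ has all moments uniformly in $\boldsymbol{\theta}$ by Proposition \ref{prop_invertibility}, it suffices to control the log-scale derivative.

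The e.a.s.\ convergence then proceeds in two layers. Because $|\phi_\alpha|<1$, the $\alpha$-derivative recursion admits a unique stationary ergodic solution and an e.a.s.\ contraction of its initialization by Theorem 2.8 of \cite{Straumann_Mikosch2006}, conditionally on the log-derivative input. Substituting back into the $\ell$-recursion gives a scalar SRE whose effective multiplier is $A_t(\boldsymbol{\theta})$ up to a lower-order term involving $e_t(\boldsymbol{\theta})$; since \eqref{eq:contraction} is exactly $\mathbb{E}[\log \sup_{\boldsymbol{\theta}} |A_t(\boldsymbol{\theta})|]<0$ (this being the scalar $(\partial \lambda_{t+1}/\partial \lambda_t)/(\lambda_{t+1}/\lambda_t)$ computed in the proof of Proposition \ref{prop_invertibility}), Proposition 5.2.12 of \cite{Straumann2005} yields a unique stationary ergodic $\partial \ell_t/\partial \boldsymbol{\theta}$ together with the e.a.s.\ contraction of the filtered derivative; uniformity in $\boldsymbol{\theta}$ follows from compactness of $\boldsymbol{\Theta}$ together with Lemma 2.5.2 of \cite{Straumann2005}. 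For the $L^2$ bound, unfolding the stationary recursion as a Cauchy product and applying Minkowski yields
\begin{align*}
\Big\| \sup_{\boldsymbol{\theta}\in\boldsymbol{\Theta}} |\partial \ell_t(\boldsymbol{\theta})/\partial \theta_i| \Big\|_2
\le
c \sum_{i\ge 0} \Big\| \sup_{\boldsymbol{\theta}\in\boldsymbol{\Theta}} \prod_{j=0}^{i} A_{t-j}(\boldsymbol{\theta}) \Big\|_2,
\end{align*}
which is finite under $\mathbb{E}[\sup_{\boldsymbol{\theta}}|A_t(\boldsymbol{\theta})|^2]<1$, a consequence of the strengthened contraction \eqref{contr_cond_mom} combined with the uniform moment envelope $\mathbb{E}[\sup_{\boldsymbol{\theta}}|\lambda_t(\boldsymbol{\theta})|^m]<\infty$ already delivered by Proposition \ref{prop_invertibility}.

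The main obstacle will be the bookkeeping for the coupling between $\partial \alpha/\partial\boldsymbol{\theta}$ and $\partial \ell/\partial\boldsymbol{\theta}$: the two SREs must be solved jointly, and one must verify that after eliminating $\partial \alpha_{t+1}/\partial\boldsymbol{\theta}$ the composite multiplier acting on $\partial \ell_t/\partial\boldsymbol{\theta}$ is still dominated in $\log$-moment by $|A_t(\boldsymbol{\theta})|$, the extra pieces $\phi_\alpha e_t$ and $\kappa_\alpha y_t e^{-\ell_t} e_{t-1} e_t$ being subsumed thanks to $|\phi_\alpha|<1$ and the uniform integrability of $e_t$. This is precisely where the elaborate form of the invertibility condition \eqref{eq:contraction} earns its keep: it is tailored just tightly enough to absorb the exponential nonlinearity of the $\log \lambda$-filter together with the additional $e_t$ factor that arises when one differentiates through the score innovation, so that no contraction assumption beyond those already imposed in Propositions \ref{prop_ergodic_lnLambda}--\ref{prop_invertibility} is required.
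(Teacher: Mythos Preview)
Your plan is in the right spirit---cast the derivative filter as an SRE and invoke Straumann--Mikosch---but the execution differs from the paper's in two places that matter, and one of them is a genuine gap.

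First, on the e.a.s.\ convergence. You try to re-verify a log-moment contraction for the derivative SRE, arguing that after eliminating $\partial\alpha_{t+1}/\partial\boldsymbol{\theta}$ the composite multiplier is ``still dominated'' by $|A_t(\boldsymbol{\theta})|$. This is the weak point: your $A_t = \beta - \alpha_{t+1} y_t e^{-\ell_t}$ omits the coupling term $(y_t-\lambda_t)\,\partial\alpha_{t+1}/\partial\lambda_t$ that the paper builds into its definition of $A_t$, and you never actually show that the residual pieces $\phi_\alpha e_t$ and $\kappa_\alpha y_t e^{-\ell_t} e_{t-1} e_t$ can be absorbed without spoiling the log-moment inequality. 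Moreover, condition \eqref{eq:contraction} is a bound on $|\partial\lambda_{t+1}/\partial\lambda_t|$ in levels, not on $\partial\log\lambda_{t+1}/\partial\log\lambda_t$, so the identification ``\eqref{eq:contraction} is exactly $\mathbb{E}[\log\sup_{\boldsymbol{\theta}}|A_t|]<0$'' is not correct as stated. The paper sidesteps all of this: it writes the derivative recursions explicitly (separating $\boldsymbol{\xi}$- and $\boldsymbol{\psi}$-components, the latter driven only by $\phi_\alpha$), and then invokes Theorem 2.6.4 of \cite{Straumann2005}, which transfers e.a.s.\ invertibility from the filter to its derivative provided merely that $\sup_{\boldsymbol{\theta}}\|\hat{A}_t(\boldsymbol{\theta})-A_t(\boldsymbol{\theta})\|\xrightarrow{e.a.s.}0$. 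That last convergence is immediate from the mean value theorem and Proposition~\ref{prop_invertibility}, so no new contraction needs to be checked.

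Second, on the $L^2$ bound. You claim that $\mathbb{E}[\sup_{\boldsymbol{\theta}}|A_t(\boldsymbol{\theta})|^2]<1$ is a \emph{consequence} of \eqref{contr_cond_mom}. It is not: in Proposition~\ref{prop_norm_score} the condition $\mathbb{E}[|A_t|^2]<1$ is an explicit \emph{assumption}, and the lemma inherits it by hypothesis. Your Minkowski/Cauchy-product argument for the moment bound is then exactly what the paper does in the proof of Proposition~\ref{prop_norm_score}, but the contraction it rests on is assumed, not derived from \eqref{contr_cond_mom}.
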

\begin{proof}
Since $\boldsymbol{\theta} = (\boldsymbol{\xi}^\prime, \boldsymbol{\psi}^\prime)^\prime$, we define the derivative processes evaluated at some $\boldsymbol{\theta}\in\boldsymbol{\Theta}$ as $ \log\lambda^{\boldsymbol{\theta}}_t(\boldsymbol{\theta})=( \log\lambda^{\boldsymbol{\xi}^\prime}_t(\boldsymbol{\theta}), \log\lambda^{\boldsymbol{\psi}^\prime}_t(\boldsymbol{\theta}) )^\prime$, where
\begin{align*}
\log\lambda^{\boldsymbol{\xi}}_t(\boldsymbol{\theta})
=
\begin{bmatrix}
\frac{\partial\log\lambda_t(\boldsymbol{\theta})}{\partial \omega}\\
\frac{\partial\log\lambda_t(\boldsymbol{\theta})}{\partial \beta}
\end{bmatrix}
:=
\begin{bmatrix}
\log\lambda^{\omega}_t(\boldsymbol{\theta}) \\
\log\lambda^{\beta}_t(\boldsymbol{\theta})
\end{bmatrix}
=&
\begin{bmatrix}
1+
A_{t-1}(\boldsymbol{\theta}) \log\lambda^{\omega}_{t-1}(\boldsymbol{\theta})\\
\log\lambda_{t-1}(\boldsymbol{\theta})+
A_{t-1}(\boldsymbol{\theta}) \log\lambda^{\beta}_{t-1}(\boldsymbol{\theta})
\end{bmatrix},
\end{align*}
with 
\begin{align*}
A_t(\boldsymbol{\theta}) = 
\bigg[\beta - \alpha_t(\boldsymbol{\theta}) y_t \lambda^{-1}_t(\boldsymbol{\theta}) + (y_t - \lambda_t(\boldsymbol{\theta}))\frac{\partial \alpha_t(\boldsymbol{\theta})}{\partial \lambda_t(\boldsymbol{\theta})}\bigg]\lambda_t(\boldsymbol{\theta}).
\end{align*}

Moreover, $\log\lambda^{\boldsymbol{\psi}}_t(\boldsymbol{\theta}) = \alpha^{\boldsymbol{\psi}}_t(\boldsymbol{\theta}) (y_t - \lambda_t(\boldsymbol{\theta}))\lambda_t^{-1}(\boldsymbol{\theta})$, where
\begin{align*}
\alpha^{\boldsymbol{\psi}}_t(\boldsymbol{\theta})
&=
\begin{bmatrix}
\frac{\partial\alpha_t(\boldsymbol{\theta})}{\partial \delta_{\alpha}}\\
\frac{\partial\alpha_t(\boldsymbol{\theta})}{\partial \phi_{\alpha}} \\
\frac{\partial\alpha_t(\boldsymbol{\theta})}{\partial \kappa_{\alpha}}
\end{bmatrix}\\
:=&
\begin{bmatrix}
\alpha_t^{\delta_{\alpha}}(\boldsymbol{\theta}) \\
\alpha_t^{\phi_{\alpha}}(\boldsymbol{\theta}) \\
\alpha_t^{\kappa_{\alpha}}(\boldsymbol{\theta})
\end{bmatrix}
=
\begin{bmatrix}
1 + 
\phi_{\alpha} \alpha_{t-1}^{\delta_{\alpha}}(\boldsymbol{\theta}) \\
 \alpha_{t-1}(\boldsymbol{\theta}) +
\phi_{\alpha} \alpha_{t-1}^{\phi_{\alpha}}(\boldsymbol{\theta}) \\
 (y_{t-1} - \lambda_{t-1}(\boldsymbol{\theta}))\lambda_{t-1}^{-1}(\boldsymbol{\theta})
 (y_{t-2} - \lambda_{t-2}(\boldsymbol{\theta}))\lambda_{t-2}^{-1}(\boldsymbol{\theta}) +
\phi_{\alpha} \alpha_{t-1}^{\kappa_{\alpha}}(\boldsymbol{\theta})
\end{bmatrix}.
\end{align*}
To verify the uniform convergence $\sup_{\boldsymbol{\theta}\in\boldsymbol{\Theta}}
\|
\frac{\partial \hat{\lambda}_t(\boldsymbol{\theta})}{\partial \boldsymbol{\theta}} 
-
\frac{\partial {\lambda}_t(\boldsymbol{\theta})}{\partial \boldsymbol{\theta}}
\bigg\|\xrightarrow[]{e.a.s.}  0$ we first check the condition in Theorem 2.6.4. \cite{Straumann2005}, which is equivalent in only proving that $\sup_{\boldsymbol{\theta}\in\boldsymbol{\Theta}}
\|
\hat{A}_t(\boldsymbol{\theta})
-
A_t(\boldsymbol{\theta})
\|\xrightarrow[]{e.a.s.}  0$, where $\hat{A}_t(\boldsymbol{\theta})$ and ${A}_t(\boldsymbol{\theta})$ are continuous function of $\hat{\lambda}_t(\boldsymbol{\theta})$ and ${\lambda}_t(\boldsymbol{\theta})$ (and so of $\hat{\alpha}_t(\boldsymbol{\theta})$ and ${\alpha}_t(\boldsymbol{\theta})$), respectively, since the uniform convergence of $\log\hat{\lambda}_t(\boldsymbol{\theta})$, $\hat{\lambda}_t(\boldsymbol{\theta})$, $\hat{\lambda}^{-1}_t(\boldsymbol{\theta})$ and $\hat{\alpha}_t(\boldsymbol{\theta})$ are ensured by Proposition \ref{prop_invertibility}. 

However, from the results of Propositions \ref{prop_ergodic_lnLambda} and \ref{prop_sec_mom}, and by an application of the mean value theorem, it is immediate to note that $\exists K>0$ such that 
\begin{align*}
\sup_{\boldsymbol{\theta}\in\boldsymbol{\Theta}}
\|
\hat{A}_t(\boldsymbol{\theta})
-
A_t(\boldsymbol{\theta})
\|\leq
K \times 
\sup_{\boldsymbol{\theta}\in\boldsymbol{\Theta}}
\|
\hat{\lambda}_t(\boldsymbol{\theta})
-
\lambda_t(\boldsymbol{\theta})
\|\xrightarrow[]{e.a.s.}  0.
\end{align*}
The desired uniform convergence of the derivative process $\frac{\partial \hat{\lambda}_t(\boldsymbol{\theta})}{\partial \boldsymbol{\theta}}$ is then a clear consequence of Lemma 2.5.4 of \cite{Straumann2005}.

Finally, the second moment bound $\mathbb{E}\Big[\sup_{\boldsymbol{\theta}\in\boldsymbol{\Theta}}\Big\| \frac{\partial {\lambda}_t(\boldsymbol{\theta})}{\partial \boldsymbol{\theta}}\Big\|^2\Big]<1$ follows directly since $\mathbb{E}[\sup_{\boldsymbol{\theta}\in\boldsymbol{\Theta}}|A_t(\boldsymbol{\theta})|^2]<\infty$ holds by assumption together with $\mathbb{E}[\sup_{\boldsymbol{\theta}\in\boldsymbol{\Theta}}|\lambda_t(\boldsymbol{\theta})|^m]<\infty$ by Proposition \ref{prop_invertibility} and $\mathbb{E}[y_t^2]<\infty$ by Proposition \ref{prop_sec_mom}.
\end{proof}

\begin{lemma}
\label{lemma_sec_deriv_proc}
Let the assumptions of Proposition \ref{prop_norm_score} hold, then
\begin{align*}
\sup_{\boldsymbol{\theta}\in\boldsymbol{\Theta}}
\bigg\|
\frac{\partial^2 \hat{\lambda}_t(\boldsymbol{\theta})}{\partial \boldsymbol{\theta}\partial \boldsymbol{\theta}^\prime} 
-
\frac{\partial^2 {\lambda}_t(\boldsymbol{\theta})}{\partial \boldsymbol{\theta}\partial\boldsymbol{\theta}^\prime}
\bigg\|\xrightarrow[]{e.a.s.}  0
\,\,\,\,\,\text{as} \,\,\,\,\, t \rightarrow \infty,
\end{align*}
where $\frac{\partial^2 {\lambda}_t(\boldsymbol{\theta})}{\partial \boldsymbol{\theta}\partial\boldsymbol{\theta}^\prime}$ is the stationary and ergodic second derivative process of ${\lambda}_t(\boldsymbol{\theta})$, which satisfies $\mathbb{E}\Big[\sup_{\boldsymbol{\theta}\in\boldsymbol{\Theta}}\Big\| \frac{\partial^2 {\lambda}_t(\boldsymbol{\theta})}{\partial \boldsymbol{\theta}\partial\boldsymbol{\theta}^\prime}\Big\|\Big]<\infty$.
\end{lemma}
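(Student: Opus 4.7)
The plan is to mirror the argument used in Lemma \ref{lemma_deriv_proc}, but applied to the second-order derivative recursion obtained by differentiating the first-order derivative process of Lemma \ref{lemma_deriv_proc} a second time with respect to $\boldsymbol{\theta}$. Concretely, I would first obtain closed-form SREs for $\partial^2 \lambda_t(\boldsymbol{\theta})/\partial\boldsymbol{\theta}\partial\boldsymbol{\theta}^\prime$ and for $\partial^2 \hat{\lambda}_t(\boldsymbol{\theta})/\partial\boldsymbol{\theta}\partial\boldsymbol{\theta}^\prime$, initialised respectively from their stationary start and from the user-chosen initialisation; the filter starts with a null second derivative while the stationary process is given by unfolding the SRE backwards, exactly as in Proposition \ref{prop_invertibility}. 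After verifying existence of the stationary solution via the Cauchy criterion, the e.a.s.\ convergence will then be reduced to a contraction property governed by the same multiplier $A_t(\boldsymbol{\theta})$ already appearing in Lemma \ref{lemma_deriv_proc}, together with a perturbation term that involves products of first derivatives already under control.

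More explicitly, differentiating $\log\lambda_t(\boldsymbol{\theta}) = \omega + \beta\log\lambda_{t-1}(\boldsymbol{\theta}) + \alpha_t(\boldsymbol{\theta})(y_{t-1}\lambda_{t-1}^{-1}(\boldsymbol{\theta})-1)$ twice with respect to $\boldsymbol{\theta}$, and converting from $\log\lambda_t$ to $\lambda_t$ via the chain rule, yields an SRE of the schematic form
\begin{equation*}
\frac{\partial^2 \lambda_t(\boldsymbol{\theta})}{\partial\boldsymbol{\theta}\partial\boldsymbol{\theta}^\prime}
= A_{t-1}(\boldsymbol{\theta})\,\frac{\partial^2 \lambda_{t-1}(\boldsymbol{\theta})}{\partial\boldsymbol{\theta}\partial\boldsymbol{\theta}^\prime}
+ R_t\!\left(\lambda_{t-1}(\boldsymbol{\theta}),\alpha_t(\boldsymbol{\theta}),\tfrac{\partial\lambda_{t-1}(\boldsymbol{\theta})}{\partial\boldsymbol{\theta}},\tfrac{\partial\alpha_t(\boldsymbol{\theta})}{\partial\boldsymbol{\theta}},y_{t-1}\right),
\end{equation*}
where $A_{t-1}(\boldsymbol{\theta})$ is the contraction multiplier already analysed in Lemma \ref{lemma_deriv_proc} and $R_t$ is a continuous function of the already-controlled processes. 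An analogous recursion holds for $\partial^2\hat\lambda_t/\partial\boldsymbol{\theta}\partial\boldsymbol{\theta}^\prime$, with hats on the filters. Subtracting, the difference satisfies
\begin{equation*}
\Delta_t := \sup_{\boldsymbol{\theta}\in\boldsymbol{\Theta}}\Big\|\tfrac{\partial^2\hat\lambda_t(\boldsymbol{\theta})}{\partial\boldsymbol{\theta}\partial\boldsymbol{\theta}^\prime} - \tfrac{\partial^2\lambda_t(\boldsymbol{\theta})}{\partial\boldsymbol{\theta}\partial\boldsymbol{\theta}^\prime}\Big\|
\le \sup_{\boldsymbol{\theta}\in\boldsymbol{\Theta}}|A_{t-1}(\boldsymbol{\theta})|\,\Delta_{t-1} + \tilde R_t,
\end{equation*}
where $\tilde R_t$ collects the discrepancies between $\hat R_t$ and $R_t$. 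By Proposition \ref{prop_invertibility} and Lemma \ref{lemma_deriv_proc}, the perturbation $\tilde R_t$ tends to zero e.a.s.\ because it is a Lipschitz-continuous function of $(\hat\lambda_t-\lambda_t)$, $(\hat\alpha_t-\alpha_t)$, and the first-derivative gaps, all of which are e.a.s.

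For the contraction step I would invoke Theorem 2.10 of \cite{Straumann_Mikosch2006} (or equivalently Theorem 2.6.4 in \cite{Straumann2005}), which requires: (i) the existence of a unique stationary solution to the perturbed SRE, which follows from $\mathbb{E}[\log\sup_{\boldsymbol{\theta}}|A_t(\boldsymbol{\theta})|]<0$ (implied by the stronger $\mathbb{E}[|A_t|^2]<1$ assumed in Proposition \ref{prop_norm_score} and Jensen's inequality), (ii) the log-moment $\mathbb{E}[\log^+\sup_{\boldsymbol{\theta}}\|\tilde R_t\|]<\infty$, and (iii) $\tilde R_t \xrightarrow[]{e.a.s.} 0$, verified above. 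Condition (ii) is established via Minkowski's inequality, the bounded moments of $\lambda_t$ and $y_t$ (Propositions \ref{prop_ergodic_lnLambda}--\ref{prop_sec_mom} and Proposition \ref{prop_invertibility}), and the second-moment bound on $\partial\lambda_t/\partial\boldsymbol{\theta}$ from Lemma \ref{lemma_deriv_proc}.

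Finally, for the moment bound $\mathbb{E}[\sup_{\boldsymbol{\theta}}\|\partial^2\lambda_t(\boldsymbol{\theta})/\partial\boldsymbol{\theta}\partial\boldsymbol{\theta}^\prime\|]<\infty$, I would iterate the stationary SRE to obtain $\partial^2\lambda_t/\partial\boldsymbol{\theta}\partial\boldsymbol{\theta}^\prime = \sum_{k\ge 0}\big(\prod_{j=0}^{k-1}A_{t-j-1}\big)R_{t-k}$, apply Minkowski in $L^1$, and bound each summand by Cauchy--Schwarz: $\|A_{t-j-1}\cdots A_{t-1} R_{t-k}\|_1 \le \|A_{t-j-1}\cdots\|_2\,\|R_{t-k}\|_2$. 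The product term is then controlled geometrically by $\mathbb{E}[|A_t|^2]^{k/2}<1$, while $\|R_{t-k}\|_2$ is uniformly bounded because $R_t$ is a polynomial in $y_t$, $\lambda_t^{\pm 1}$, $\alpha_t$ and first derivatives, all of which have sufficiently many moments by the cited propositions. The main obstacle in the proof will be verifying (ii), namely controlling the perturbation $\tilde R_t$ uniformly over $\boldsymbol{\Theta}$, because $R_t$ is a highly nonlinear function of $\alpha_t(\boldsymbol{\theta})$ through the exponential link, and care is needed to uniformly bound $\lambda_t^{-1}(\boldsymbol{\theta})$ using the lower bound $\ell$ derived in Proposition \ref{prop_invertibility}; once that lower bound is in hand, all remaining bookkeeping is routine.
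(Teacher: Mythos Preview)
Your proposal is correct and follows essentially the same route as the paper: both write the second-derivative process as an SRE with the same contraction multiplier $A_{t-1}(\boldsymbol{\theta})$ already used in Lemma \ref{lemma_deriv_proc}, plus additional terms (what you call $R_t$, what the paper spells out explicitly via new coefficients $B_t(\boldsymbol{\theta})$ and $C_t(\boldsymbol{\theta})$), and then invoke Theorem 2.6.4 of \cite{Straumann2005} together with the e.a.s.\ convergences from Proposition \ref{prop_invertibility} and Lemma \ref{lemma_deriv_proc}. Your treatment of the moment bound via iteration, Minkowski and Cauchy--Schwarz is a bit more detailed than the paper's terse appeal to the finiteness of $\mathbb{E}[\sup_{\boldsymbol{\theta}}|A_t|^2]$, $\mathbb{E}[\sup_{\boldsymbol{\theta}}|\lambda_t|^m]$ and $\mathbb{E}[y_t^2]$, but the underlying argument is the same.
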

\begin{proof}
Using the same notation of Lemma \ref{lemma_deriv_proc}, the second derivative processes evaluated at some $\boldsymbol{\theta}\in \boldsymbol{\Theta}$ as $\text{vech}\log\lambda_t^{\boldsymbol{\theta}\boldsymbol{\theta}}(\boldsymbol{\theta}) 
=
(\text{vech}\lambda_t^{\boldsymbol{\xi}\boldsymbol{\xi}}(\boldsymbol{\theta}), 
\text{vec}\lambda_t^{\boldsymbol{\xi}\boldsymbol{\psi}}(\boldsymbol{\theta}),
\text{vech}\lambda_t^{\boldsymbol{\psi}\boldsymbol{\psi}}(\boldsymbol{\theta}))^\prime$, where
\begin{align*}
\text{vech}\log\lambda^{\boldsymbol{\xi}\boldsymbol{\xi}}_t(\boldsymbol{\theta})
&=
\text{vech}
\begin{bmatrix}
\frac{\partial^2\log\lambda_t(\boldsymbol{\theta})}{\partial \omega^2} &
\frac{\partial\log\lambda_t(\boldsymbol{\theta})}{\partial \omega\partial \beta}\\
\cdot & 
\frac{\partial^2\log\lambda_t(\boldsymbol{\theta})}{\partial \beta^2}
\end{bmatrix}\\
:=&
\begin{bmatrix}
\log\lambda^{\omega\omega}_t(\boldsymbol{\theta})\\
\log\lambda^{\omega\beta}_t(\boldsymbol{\theta})\\
\log\lambda^{\beta\beta}_t(\boldsymbol{\theta})
\end{bmatrix}
=
\begin{bmatrix}
B_{t-1}(\boldsymbol{\theta})(\log\lambda_{t-1}^\omega(\boldsymbol{\theta}))^2
+ A_{t-1}(\boldsymbol{\theta})\log\lambda_{t-1}^{\omega\omega}(\boldsymbol{\theta})\\
B_{t-1}(\boldsymbol{\theta})\log\lambda_{t-1}^\omega(\boldsymbol{\theta})
\log\lambda_{t-1}^\beta(\boldsymbol{\theta})
+\log\lambda_{t-1}^\beta(\boldsymbol{\theta})
+ A_{t-1}(\boldsymbol{\theta})\log\lambda_{t-1}^{\omega\beta}(\boldsymbol{\theta})\\
B_{t-1}(\boldsymbol{\theta})(\log\lambda_{t-1}^\beta(\boldsymbol{\theta}))^2
+2\log\lambda_{t-1}^\beta(\boldsymbol{\theta})
+ A_{t-1}(\boldsymbol{\theta})\log\lambda_{t-1}^{\beta\beta}(\boldsymbol{\theta})
\end{bmatrix},
\end{align*}
with $A_t(\boldsymbol{\theta})$ as defined in Lemma \ref{lemma_deriv_proc} and
\begin{align*}
B_t(\boldsymbol{\theta}) = \bigg\{& \bigg[\alpha_t(\boldsymbol{\theta}) y_t \lambda^{-2}_t(\boldsymbol{\theta}) - \frac{\partial \alpha_t(\boldsymbol{\theta})}{\partial \lambda_t(\boldsymbol{\theta})} + (y_t-\lambda_t(\boldsymbol{\theta}))\frac{\partial^2 \alpha_t(\boldsymbol{\theta})}{\partial \lambda^2_t(\boldsymbol{\theta})} \bigg]\lambda_t(\boldsymbol{\theta})\\
&+ \bigg[\beta - \alpha_t(\boldsymbol{\theta}) y_t \lambda^{-1}_t(\boldsymbol{\theta}) + (y_t - \lambda_t(\boldsymbol{\theta}))\frac{\partial \alpha_t(\boldsymbol{\theta})}{\partial \lambda_t(\boldsymbol{\theta})}\bigg] \bigg\}\lambda_t(\boldsymbol{\theta}).
\end{align*}
Now, 
\begin{align*}
\text{vec}&\log\lambda^{\boldsymbol{\xi}\boldsymbol{\psi}}_t(\boldsymbol{\theta})
=
\text{vec}
\begin{bmatrix}
\frac{\partial^2\log\lambda_t(\boldsymbol{\theta})}{\partial \omega\partial {\delta_\alpha}} &
\frac{\partial^2\log\lambda_t(\boldsymbol{\theta})}{\partial \omega\partial {\phi_\alpha}} &
\frac{\partial^2\log\lambda_t(\boldsymbol{\theta})}{\partial \omega\partial {\kappa_\alpha}} \\
\frac{\partial^2\log\lambda_t(\boldsymbol{\theta})}{\partial \beta\partial {\delta_\alpha}} &
\frac{\partial^2\log\lambda_t(\boldsymbol{\theta})}{\partial \beta\partial {\phi_\alpha}} &
\frac{\partial^2\log\lambda_t(\boldsymbol{\theta})}{\partial \beta\partial {\kappa_\alpha}}
\end{bmatrix}
:=
\begin{bmatrix}
\log\lambda^{\omega\delta_\alpha}_t(\boldsymbol{\theta})\\
\log\lambda^{\omega\phi_\alpha}_t(\boldsymbol{\theta})\\
\log\lambda^{\omega\kappa_\alpha}_t(\boldsymbol{\theta})\\
\log\lambda^{\beta\delta_\alpha}_t(\boldsymbol{\theta})\\
\log\lambda^{\beta\phi_\alpha}_t(\boldsymbol{\theta})\\
\log\lambda^{\beta\kappa_\alpha}_t(\boldsymbol{\theta})
\end{bmatrix}\\
=&
\begin{bmatrix}
C_{t-1}(\boldsymbol{\theta})\alpha^{\delta_\alpha}_{t-1}(\boldsymbol{\theta})
\log\lambda^\omega_{t-1}(\boldsymbol{\theta})
+ A_{t-1}(\boldsymbol{\theta})\log\lambda^{\omega\delta_\alpha}_{t-1}(\boldsymbol{\theta})\\
C_{t-1}(\boldsymbol{\theta})\alpha^{\phi_\alpha}_{t-1}(\boldsymbol{\theta})
\log\lambda^\omega_{t-1}(\boldsymbol{\theta})
+ A_{t-1}(\boldsymbol{\theta})\log\lambda^{\omega\phi_\alpha}_{t-1}(\boldsymbol{\theta})\\
C_{t-1}(\boldsymbol{\theta})\alpha^{\kappa_\alpha}_{t-1}(\boldsymbol{\theta})
\log\lambda^\omega_{t-1}(\boldsymbol{\theta})
+ A_{t-1}(\boldsymbol{\theta})\log\lambda^{\omega\kappa_\alpha}_{t-1}(\boldsymbol{\theta})\\
\alpha_{t-1}^{\delta_{\alpha}}(\boldsymbol{\theta}) 
(y_{t-1} - \lambda_{t-1}(\boldsymbol{\theta}))
\lambda^{-1}_{t-1}(\boldsymbol{\theta})
+ C_{t-1}(\boldsymbol{\theta})\alpha^{\delta_\alpha}_{t-1}(\boldsymbol{\theta})
\log\lambda^\beta_{t-1}(\boldsymbol{\theta})
+ A_{t-1}(\boldsymbol{\theta})\log\lambda^{\beta\delta_\alpha}_{t-1}(\boldsymbol{\theta})\\
\alpha_{t-1}^{\phi_{\alpha}}(\boldsymbol{\theta}) 
(y_{t-1} - \lambda_{t-1}(\boldsymbol{\theta}))
\lambda^{-1}_{t-1}(\boldsymbol{\theta})
+ C_{t-1}(\boldsymbol{\theta})\alpha^{\phi_\alpha}_{t-1}(\boldsymbol{\theta})
\log\lambda^\omega_{t-1}(\boldsymbol{\theta})
+ A_{t-1}(\boldsymbol{\theta})\log\lambda^{\beta\phi_\alpha}_{t-1}(\boldsymbol{\theta})\\
\alpha_{t-1}^{\kappa_{\alpha}}(\boldsymbol{\theta})
(y_{t-1} - \lambda_{t-1}(\boldsymbol{\theta}))
\lambda^{-1}_{t-1}(\boldsymbol{\theta})
+ C_{t-1}(\boldsymbol{\theta})\alpha^{\kappa_\alpha}_{t-1}(\boldsymbol{\theta})
\log\lambda^\beta_{t-1}(\boldsymbol{\theta})
+ A_{t-1}(\boldsymbol{\theta})\log\lambda^{\beta\kappa_\alpha}_{t-1}(\boldsymbol{\theta})
\end{bmatrix},
\end{align*}
with
\begin{align*}
C_t(\boldsymbol{\theta})
=
-y_t\lambda_t(\boldsymbol{\theta}).
\end{align*}
Finally, we have $\text{vech}\log\lambda^{\boldsymbol{\psi\psi}}_t(\boldsymbol{\theta})=\text{vech}\,\alpha_t^{\boldsymbol{\psi\psi}}(\boldsymbol{\theta})(y_t - \lambda_t(\boldsymbol{\theta}))\lambda^{-1}_t(\boldsymbol{\theta})$, where
\begin{align*}
\text{vech}\,\alpha_t^{\boldsymbol{\psi\psi}}(\boldsymbol{\theta})
&=
\text{vech}
\begin{bmatrix}
\frac{\partial^2\alpha_t(\boldsymbol{\theta})}{\partial\delta_\alpha^2 } &
\frac{\partial^2\alpha_t(\boldsymbol{\theta})}{\partial\delta_\alpha \partial\phi_\alpha } &
\frac{\partial^2\alpha_t(\boldsymbol{\theta})}{\partial\delta_\alpha \partial\kappa_\alpha } \\
\cdot &
\frac{\partial^2\alpha_t(\boldsymbol{\theta})}{\partial\phi^2_\alpha } &
\frac{\partial^2\alpha_t(\boldsymbol{\theta})}{\partial\phi_\alpha \partial\kappa_\alpha } \\
\cdot &
\cdot &
\frac{\partial^2\alpha_t(\boldsymbol{\theta})}{\partial\kappa^2_\alpha }
\end{bmatrix}\\
:=&
\begin{bmatrix}
\alpha^{\delta_\alpha\delta_\alpha}_t(\boldsymbol{\theta})\\
\alpha^{\delta_\alpha\phi_\alpha}_t(\boldsymbol{\theta})\\
\alpha^{\delta_\alpha\kappa_\alpha}_t(\boldsymbol{\theta})\\
\alpha^{\phi_\alpha\phi_\alpha}_t(\boldsymbol{\theta})\\
\alpha^{\phi_\alpha\kappa_\alpha}_t(\boldsymbol{\theta})\\
\alpha^{\kappa_\alpha\kappa_\alpha}_t(\boldsymbol{\theta})
\end{bmatrix}
=
\begin{bmatrix}
\phi_\alpha \alpha^{\delta_\alpha\delta_\alpha}_{t-1}(\boldsymbol{\theta})\\
\alpha^{\delta_\alpha}_{t-1}(\boldsymbol{\theta})
+\phi_\alpha \alpha^{\delta_\alpha\phi_\alpha}_{t-1}(\boldsymbol{\theta})\\
\phi_\alpha \alpha^{\delta_\alpha\kappa_\alpha}_{t-1}(\boldsymbol{\theta})\\
(1+\phi_\alpha)\alpha^{\phi_\alpha}_{t-1}(\boldsymbol{\theta})
+ \phi_\alpha \alpha^{\phi_\alpha\phi_\alpha}_{t-1}(\boldsymbol{\theta})\\
\alpha^{\kappa_\alpha}_{t-1}(\boldsymbol{\theta})
+ \phi_\alpha \alpha^{\phi_\alpha\kappa_\alpha}_{t-1}(\boldsymbol{\theta})\\
\phi_\alpha \alpha^{\kappa_\alpha\kappa_\alpha}_{t-1}(\boldsymbol{\theta})
\end{bmatrix}.
\end{align*}
Analogously to the proof of Lemma \ref{lemma_deriv_proc}, to verify the uniform convergence $\sup_{\boldsymbol{\theta}\in\boldsymbol{\Theta}}
\|
\frac{\partial^2 \hat{\lambda}_t(\boldsymbol{\theta})}{\partial \boldsymbol{\theta}\partial \boldsymbol{\theta}^\prime} 
-
\frac{\partial^2 {\lambda}_t(\boldsymbol{\theta})}{\partial \boldsymbol{\theta} \partial \boldsymbol{\theta}^\prime}
\bigg\|\xrightarrow[]{e.a.s.}  0$ we need to check the condition in Theorem 2.6.4. \cite{Straumann2005}, which, in addition to the previous proof, it results in proving that $\sup_{\boldsymbol{\theta}\in\boldsymbol{\Theta}}
\|
\hat{B}_t(\boldsymbol{\theta})
-
B_t(\boldsymbol{\theta})
\|\xrightarrow[]{e.a.s.}  0$ and $\sup_{\boldsymbol{\theta}\in\boldsymbol{\Theta}}
\|
\hat{C}_t(\boldsymbol{\theta})
-
C_t(\boldsymbol{\theta})
\|\xrightarrow[]{e.a.s.}  0$ where $\hat{B}_t(\boldsymbol{\theta}), \hat{C}_t(\boldsymbol{\theta})$ and ${B}_t(\boldsymbol{\theta}), {C}_t(\boldsymbol{\theta})$, are continuous function of $\hat{\lambda}_t(\boldsymbol{\theta})$ and ${\lambda}_t(\boldsymbol{\theta})$ (and so of $\hat{\alpha}_t(\boldsymbol{\theta})$ and ${\alpha}_t(\boldsymbol{\theta})$), respectively. As remarked in the Proof of Lemma \ref{lemma_deriv_proc}, given the results obtained in Proposition \ref{prop_invertibility}, we can easily show the required uniform convergences by repeated applications the mean value theorem together with Lemma 2.5.4 of \cite{Straumann2005}.

Finally, the moment bound $\mathbb{E}\Big[\sup_{\boldsymbol{\theta}\in\boldsymbol{\Theta}}\Big\| \frac{\partial^2 {\lambda}_t(\boldsymbol{\theta})}{\partial \boldsymbol{\theta}\partial \boldsymbol{\theta}^\prime}\Big\|\Big]<1$ follows directly since $\mathbb{E}[\sup_{\boldsymbol{\theta}\in\boldsymbol{\Theta}}|A_t(\boldsymbol{\theta})|^2]<\infty$ holds by assumption together with $\mathbb{E}[\sup_{\boldsymbol{\theta}\in\boldsymbol{\Theta}}|\lambda_t(\boldsymbol{\theta})|^m]<\infty$ by Proposition \ref{prop_invertibility} and $\mathbb{E}[y_t^2]<\infty$ by Proposition \ref{prop_sec_mom}, which further imply $\mathbb{E}[\sup_{\boldsymbol{\theta}\in\boldsymbol{\Theta}}|B_t(\boldsymbol{\theta})|]<\infty$ and $\mathbb{E}[\sup_{\boldsymbol{\theta}\in\boldsymbol{\Theta}}|C_t(\boldsymbol{\theta})|]<\infty$.
\end{proof}

\begin{lemma}
\label{lemma_bound_sec_deriv_LogLik}
Let the assumptions of Theorem \ref{thm_asy_norm} hold, then the second derivatives of the $\log$-likelihood function have a uniformly bounded moment over the compact parameter space, that is
\begin{align*}
\mathbb{E}\bigg[
\sup_{\boldsymbol{\theta}\in\boldsymbol{\Theta}}
\bigg\|
\frac{\partial^2 l_t(\boldsymbol{\theta})}
{\partial\boldsymbol{\theta}\partial\boldsymbol{\theta}^\prime}
\bigg\|
\bigg]<\infty.
\end{align*}
\end{lemma}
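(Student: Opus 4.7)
The plan is to compute the Hessian of $l_t(\boldsymbol{\theta})=y_t\log\lambda_t(\boldsymbol{\theta})-\lambda_t(\boldsymbol{\theta})$ explicitly and then bound $\sup_{\boldsymbol{\theta}\in\boldsymbol{\Theta}}$ of its norm using the results already in the Appendix. Direct differentiation yields
\begin{align*}
\frac{\partial^2 l_t(\boldsymbol{\theta})}{\partial\boldsymbol{\theta}\partial\boldsymbol{\theta}^\prime}
=
-\,\frac{y_t}{\lambda_t^2(\boldsymbol{\theta})}\,
\frac{\partial\lambda_t(\boldsymbol{\theta})}{\partial\boldsymbol{\theta}}
\frac{\partial\lambda_t(\boldsymbol{\theta})}{\partial\boldsymbol{\theta}^\prime}
+\biggl(\frac{y_t}{\lambda_t(\boldsymbol{\theta})}-1\biggr)
\frac{\partial^2\lambda_t(\boldsymbol{\theta})}{\partial\boldsymbol{\theta}\partial\boldsymbol{\theta}^\prime},
\end{align*}
so by the triangle inequality it is enough to bound the expected supremum of each of the two summands.

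Next I would invoke the uniform lower bound $\lambda_t(\boldsymbol{\theta})\ge e^{\ell}$ derived in the proof of Proposition \ref{prop_invertibility} (see inequality \eqref{ell_bound}), which is valid on the compact set $\boldsymbol{\Theta}$. This gives $\sup_{\boldsymbol{\theta}}\lambda_t^{-1}(\boldsymbol{\theta})\le e^{-\ell}$, so the two summands are dominated pointwise by $y_t e^{-2\ell}\sup_{\boldsymbol{\theta}}\|\partial\lambda_t(\boldsymbol{\theta})/\partial\boldsymbol{\theta}\|^{2}$ and $(y_t e^{-\ell}+1)\sup_{\boldsymbol{\theta}}\|\partial^2\lambda_t(\boldsymbol{\theta})/\partial\boldsymbol{\theta}\partial\boldsymbol{\theta}^\prime\|$, respectively. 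I would then take expectations and apply the Cauchy--Schwarz (or H\"older) inequality, splitting each summand into a moment of $y_t$, which is finite to arbitrary order by Proposition \ref{prop_sec_mom}, and a uniform moment of the derivative processes of $\lambda_t(\boldsymbol{\theta})$, which are supplied by Lemmas \ref{lemma_deriv_proc} and \ref{lemma_sec_deriv_proc}.

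The main obstacle I anticipate is matching the H\"older exponents: Lemma \ref{lemma_deriv_proc} states only a uniform \emph{second} moment of $\partial\lambda_t/\partial\boldsymbol{\theta}$, whereas a naive Cauchy--Schwarz applied to $\mathbb{E}[y_t\sup_{\boldsymbol{\theta}}\|\partial\lambda_t/\partial\boldsymbol{\theta}\|^{2}]$ would require a uniform fourth moment. A clean way around this is to use that $\partial\lambda_t/\partial\boldsymbol{\theta}$ is $\mathcal{F}_{t-1}$-measurable together with $\mathbb{E}[y_t\mid\mathcal{F}_{t-1}]=\lambda_t(\boldsymbol{\theta}_0)\le\sup_{\boldsymbol{\theta}}\lambda_t(\boldsymbol{\theta})$, after which H\"older combines the arbitrarily large moment $\mathbb{E}[\sup_{\boldsymbol{\theta}}|\lambda_t(\boldsymbol{\theta})|^m]<\infty$ delivered by Proposition \ref{prop_invertibility} under the reinforced contraction \eqref{contr_cond_mom} with a moment of $\sup_{\boldsymbol{\theta}}\|\partial\lambda_t/\partial\boldsymbol{\theta}\|$ of order $2q$, for any $q>1$ close enough to $1$; this slight strengthening is obtained by re-running the same $L^p$-Minkowski argument that underpins the proof of Lemma \ref{lemma_deriv_proc}, which is available again thanks to \eqref{contr_cond_mom}. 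The analogous step handles the second summand via Lemma \ref{lemma_sec_deriv_proc}, completing the bound.
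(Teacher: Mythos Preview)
Your approach is essentially the same as the paper's: compute the Hessian explicitly, split via the triangle inequality, invoke the uniform lower bound $\lambda_t(\boldsymbol{\theta})\ge e^{\ell}$ from \eqref{ell_bound}, and then control the remaining factors through the moment bounds of Lemmas~\ref{lemma_deriv_proc}--\ref{lemma_sec_deriv_proc} together with Proposition~\ref{prop_sec_mom}. The paper's own argument is in fact terser on the moment-matching step you worry about---it simply writes constants $c_1,c_2$ in front of the $y_t$ terms and defers to Lemmas~\ref{lemma_deriv_proc} and~\ref{lemma_sec_deriv_proc}---so your discussion of the H\"older exponents and the $\mathcal{F}_{t-1}$-conditioning workaround is more explicit on this point than the paper itself.
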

\begin{proof}
By straightforward calculations, it can be seen that
\begin{align*}
\frac{\partial^2 l_t(\boldsymbol{\theta})}
{\partial\boldsymbol{\theta}\partial\boldsymbol{\theta}^\prime}
=
\bigg(\frac{y_t}{\lambda_t(\boldsymbol{\theta})} - 1 \bigg)
\frac{\partial^2 {\lambda}_t(\boldsymbol{\theta})}{\partial \boldsymbol{\theta}\partial\boldsymbol{\theta}^\prime}
-
\frac{y_t}{\lambda^2_t(\boldsymbol{\theta})}
\frac{\partial {\lambda}_t(\boldsymbol{\theta})}{\partial \boldsymbol{\theta}}
\frac{\partial {\lambda}_t(\boldsymbol{\theta})}{\partial\boldsymbol{\theta}^\prime}.
\end{align*}
Thus, as in the proof of Lemma \ref{lemma_as_approxLogLik}, we have 
\begin{align*}
\sup_{\boldsymbol{\theta}\in\boldsymbol{\Theta}}
\bigg\| \frac{\partial^2 l_t(\boldsymbol{\theta})}
{\partial\boldsymbol{\theta}\partial\boldsymbol{\theta}^\prime}
\bigg\|
\leq& 
\sup_{\boldsymbol{\theta}\in\boldsymbol{\Theta}}
\bigg\|\frac{y_t}{\lambda_t(\boldsymbol{\theta})} - 1\bigg\|
\sup_{\boldsymbol{\theta}\in\boldsymbol{\Theta}}
\bigg\|\frac{\partial^2 {\lambda}_t(\boldsymbol{\theta})}{\partial \boldsymbol{\theta}\partial\boldsymbol{\theta}^\prime}\bigg\|
+
\sup_{\boldsymbol{\theta}\in\boldsymbol{\Theta}}
\bigg\|\frac{y_t}{\lambda^2_t(\boldsymbol{\theta})}\bigg\|
\sup_{\boldsymbol{\theta}\in\boldsymbol{\Theta}}
\bigg\|\frac{\partial {\lambda}_t(\boldsymbol{\theta})}{\partial \boldsymbol{\theta}}
\bigg\|^2\\
\leq&
c_1 \bigg\|\frac{y_t}{e^\ell} - 1\bigg\| 
+ c_2 \bigg\|\frac{y_t}{e^{2\ell}}\bigg\| 
<\infty,
\end{align*}
since $e^\ell$ is bounded away from zero and $\mathbb{E}[y_t]<\infty$ follows by Proposition \ref{prop_sec_mom}. Hence, the claimed moment bound follows by an application of Lemmas \ref{lemma_deriv_proc} and \ref{lemma_sec_deriv_proc}.
\end{proof}

\end{document}